\documentclass{article}
\PassOptionsToPackage{numbers, compress}{natbib}

\usepackage[preprint]{neurips_2026}
\usepackage[utf8]{inputenc}
\usepackage[T1]{fontenc}

\usepackage{microtype}
\usepackage{graphicx}
\usepackage{subcaption}
\usepackage{booktabs}
\usepackage{tikz}
\usepackage{wrapfig}

\usepackage{amsmath,amsfonts,amssymb,mathtools,amsthm}
\usepackage{bm}

\usepackage{url}
\usepackage{hyperref}
\usepackage{xcolor}
\usepackage{nicefrac}
\usepackage{enumitem}
\usepackage{comment}
\usepackage{apxproof}
\usepackage[createShortEnv]{proof-at-the-end}

\theoremstyle{plain}
\newtheorem{theorem}{Theorem}[section]

\newtheorem{lemma}[theorem]{Lemma}

\theoremstyle{definition}
\newtheorem{definition}[theorem]{Definition}
\theoremstyle{remark}

\newcommand{\equalcontrib}{\textsuperscript{*}}

\newcommand{\TV}{\operatorname{TV}}
\newcommand{\PP}{\mathbb{P}}

\newcommand{\argmax}{\operatorname*{argmax}}

\newcommand{\yc}{y_c^*}

\title{A Statistical Framework for Algorithmic Collective Action with Multiple Collectives}

\author{%
  Claudio~Battiloro\equalcontrib$^{,1}$ 
  Pietro~Greiner\equalcontrib$^{,2}$ 
  Dario~Rancati\equalcontrib$^{,3}$ 
  Bret~Nestor$^{4}$ 
  Oumaima~Amezgar$^{5}$ \\
  \textbf{Francesca~Dominici}$^{1}$ \\[0.5em]
  $^{1}$Harvard University \quad
  $^{2}$Mila and LawZero \quad
  $^{3}$Institute of Science and Technology Austria \\
  $^{4}$University of British Columbia \quad
  $^{5}$University of Padova\\ [0.3em]
 \equalcontrib Equal contribution. Correspondence to: \texttt{cbattiloro@hsph.harvard.edu}
}

\begin{document}
\maketitle

\begin{abstract}
As learning systems increasingly shape everyday decisions, Algorithmic Collective Action (ACA), i.e., users coordinating changes to shared data to steer model behavior, offers a complement to regulator-side policy and corporate model design. Real-world collective actions have traditionally been decentralized and fragmented into multiple collectives, despite sharing overarching objectives, with each collective differing in size, strategy, and actionable goals. However, most of the ACA literature focuses on single collective settings. To address this, we propose the first comprehensive statistical framework for ACA with multiple collectives acting on the same system. In particular, we focus on collective action in classification, studying how multiple collectives can influence a classifier's behavior. We provide quantitative statistical bounds on the success of the collectives, considering the role and the interplay of the collectives' sizes and the alignment of their goals. We make such bounds computable by each collective with only partial knowledge of other collectives' sizes and strategies. Finally, we numerically illustrate our framework on simulations inspired by interventions for climate adaptation in smart cities, demonstrating the usefulness of our bounds. 
\end{abstract}

\vspace{-0.3cm}\section{Introduction}\vspace{-0.2cm}\label{sec:intro}
AI’s rapid expansion is powered by massive training datasets that produce more capable predictive models and enable broader applications. However, as these data-driven systems increasingly shape everyday life, they pose tangible risks, including privacy breaches \cite{Okunyte2026AndroidAIAppLeak, nakka2025pii}, labor exploitation \cite{perrigo2023openaiKenyaWorkers}, and biased decisions that exacerbate socio-economic disparities \cite{ai2024artificial}.

Responses to these risks span firms, regulators, and users. At the firm level, “Trustworthy AI” programs integrate fairness checks, bias mitigation, privacy audits, and red-teaming across the system lifecycle \cite{barocas2023fairness}, but they often conflict with objectives tied to model performance and user engagement. At the regulatory level, data-protection regimes such as GDPR  \cite{EU2016GDPR}, PIPEDA \cite{Canada2000PIPEDA}, and CPRA \cite{California2020CPRA}  establish baseline obligations for different stakeholders, although formal adherence alone is insufficient to ensure substantively responsible or equitable outcomes \cite{selbst2019fairness,utz2019informed}. Finally, at the user level, \textit{Algorithmic Collective Action in machine learning}  (ACA) has emerged as a promising framework to coordinate grassroots efforts, by enabling individual users to collectively contribute, withhold, or strategically structure their data to steer model behavior \cite{HardtEtAl2023,devrio2024building}.

This work is motivated by the fact that the ACA literature has almost entirely focused on scenarios involving one single collective. However, traditionally, real-world actions have been carried out by \textit{multiple collectives}, whose uneven capacities, temporal misalignment, and varying priorities may result in fragmented, decentralized, and partially overlapping interventions. Different collectives often \textit{vary} widely in size, strategy, and even actionable goals, yet they can still \textit{align} on common overarching objectives (e.g., climate justice or gender equality). Latour’s actor-network theory \cite{Latour2005Reassembling} conceptualized why such movements rarely behave as a single unit: what is regarded as a unitary movement is really a web of heterogeneous interactions with no fixed or permanent center. These loose networks manage to act together only through alignment and coordination. Collectives therefore, act coherently only insofar as such coordination and alignment efforts are continuously sustained. An interesting example is the cyberfeminist struggle, which has never been one monolithic front; instead, it consists of distributed efforts ranging from feminist hacker-art collectives to global hashtag campaigns. Haraway’s Cyborg Manifesto \cite{Haraway1991CyborgManifesto} famously proposed the cyborg as a metaphor for coalition across difference, emphasizing affinity over rigid identity categories. In that spirit, cyberfeminism has taken many decentralized forms around the world while still sharing core aims of challenging patriarchal tech culture and gender inequality.  Environmental activism provides another strong example: the climate justice movement is a patchwork of groups employing diverse tactics (from youth-led social media strikes to militant direct-action cells) to achieve various actionable goals (from permit denials to pipeline stoppages and enforcement actions), but they all strive toward the same overarching objective of halting climate breakdown. Here, we aim to reflect and model the need for multiple collectives in ACA.

\textbf{Related Works.} ACA as users steering ML outputs toward a group goal was recently rigorously formalized in \cite{HardtEtAl2023}, building on the Data Leverage framework’s view of data as an active instrument rather than a passive input \cite{VincentEtAl2020}. In \cite{SiggEtAl2024}, the authors proposed a combinatorial model for ACA to study the strategic interaction between drivers and delivery platforms, inspired by the \#DeclineNow DoorDash campaign. The work in \cite{BenDovEtAl2024} studied how the firm's chosen learning algorithms affect the success of ACA.  Motivated by the growing regulatory focus on privacy and data protection, \cite{solanki2025crowding} found that differentially private training can impede the success of algorithmic collective action. The works in \cite{baumann2024algorithmic} and \cite{fedorova2025altruistic} focused on ACA in recommender systems. Simulation-based analyses of platform--worker dynamics were studied in \cite{lewandowska2025workers}. Complementary directions connect ACA to explainability/recourse \cite{hussain2025empowering} and to fairness interventions by minority groups \cite{bendov2025minority}. The authors of \cite{gauthier2025statistical} proposed an alternative theoretical framework to \cite{HardtEtAl2023} that empowers collectives via statistical inference, enabling them to learn better ACA strategies and infer the parameters that determine their success. The closest works to ours are \cite{karan2025algorithmic}, \cite{karan2025sync_or_sink}, and \cite{battiloro2025algorithmic}. The authors of \cite{karan2025algorithmic} introduced a conceptual framework for ACA with multiple collectives, and experiments are performed. However, no theoretical treatment is provided, and most of the focus is on ACA with two collectives. The workshop papers in \cite{karan2025sync_or_sink} and \cite{battiloro2025algorithmic} generalized the original probabilistic framework from \cite{HardtEtAl2023} to the (noisy) two collective and the multiple collective settings, respectively. Our work is inspired by the empirical findings of \cite{karan2025algorithmic} and builds on the framework introduced in \cite{battiloro2025algorithmic}. We move substantially beyond prior work by providing, to the best of our knowledge, the \textit{first statistical treatment for Algorithmic Collective Action with multiple collectives}, generalizing the framework from \cite{gauthier2025statistical}.

\textbf{Contribution.} We introduce the first statistical framework for ACA with multiple collectives. We focus on collective action in classification. In particular, we study how $M$ collectives can plant and unplant signals, i.e., bias a classifier to learn or unlearn an association between an altered version of the features $x$ and a chosen, possibly overlapping, multiset of target classes $\{\{y_c\}\}_c$. We measure collective action success using both per-collective and global metrics, where global success can be defined as a summary of the per-collective success measures, such as their average or the minimum. We generalize the statistical framework from \cite{gauthier2025statistical}. As such, for signal planting, we analyze two regimes: when collectives can act on both features and labels, and when collectives are limited to acting on features only. For each regime, we derive lower bounds on the per-collective success. For signal unplanting, we analyze the case in which collectives employ an adaptive strategy acting on both features and labels, and we again \ derive a lower bound on the per-collective success. Both for signal planting and unplanting, our results reveal interesting trade-offs driven by the interplay of the sizes of the collectives and by how closely their actionable goals align. Although each collective’s bounds depend on other collectives’ strategies, we make them individually computable by introducing a hierarchy of approximations, ranging from worst case (the collective knows nothing about others) to best case (the collective knows some aggregated quantity about other collectives’ strategies). Finally, we conceptually illustrate our framework with a use case in the space of interventions for climate adaptation, and we design a synthetic dataset that simulates it, numerically showing the usefulness of and the interesting tradeoffs emerging from our framework.

\vspace{-0.3cm}\section{Preliminaries and Setting}\vspace{-0.2cm}\label{sec:MACA}
We start from the definition of ACA with a single collective as given in \cite{HardtEtAl2023,gauthier2025statistical}. A firm deploys a learning model on data drawn from a certain user population. Within the overall user population, a collective forms that represents a given share of users (a positive fraction, but not the whole). This collective wants to steer the platform’s outcome toward a certain objective. To do so, the collective agrees on one strategy for editing their contributions—potentially changing features, labels, or both—before those contributions reach the platform. The platform then receives a training set that blends unmodified data from the baseline process with the collective’s edited data, and it deploys its algorithm on this blended data. The collective’s impact is evaluated by a tailored success measure. Given its size, the collective’s problem is to choose the editing rule that maximizes this success once the platform has trained on the blended data. Here, we rigorously generalize the statistical treatment of ACA from \cite{gauthier2025statistical} to a setting with multiple collectives, each acting with a possibly different strategy and actionable goal, but attempting to maximize both per-collective and global measures of success. We employ a measure-theoretic approach, whose bulk is introduced below.

\textbf{Data space.} We define the data space $\mathcal{Z}=\mathcal{X}\times\mathcal{Y}$ as the product of two finite spaces $\mathcal{X}$ and $\mathcal{Y}$ being the feature and label spaces, respectively. All probability measures on $\mathcal Z$ are defined on the discrete $\sigma$-algebra
$2^{\mathcal Z}$ (the power set of $\mathcal Z$, of cardinality $2^{|\mathcal Z|}$).

\textbf{Population.} We define the population as a probability space $(\Omega,2^{\Omega},\pi)$, with $\Omega$ finite. Let $f_0:\Omega\to\mathcal{Z}$ be a measurable map. We define the population distribution as the push-forward measure $P_0:=(f_0)_{\#}\pi$. Intuitively, $\Omega$ is the user population, with each element of $\Omega$ being a user, and $\pi$ is the sampling distribution. Then $f_0$ assigns a feature–label pair to each user, and $P_0$ is the induced \emph{pre-edit} distribution over feature–label pairs, i.e., for each $z=(x,y)\in\mathcal{Z}$, $P_0(z)$ is the $\pi$-fraction of users whose assigned pair is $z$.

\textbf{Collectives, Strategies, and Masses.}
We define an ensemble of $M$ collectives as disjoint  subsets $\Omega_1,\dots,\Omega_M$ of $\Omega$ such that $\pi(\Omega_c)=\alpha_c>0$. We refer to $\alpha_c$ as the mass of the $c$-th collective and set $\bar\alpha:=\sum_{c=1}^M \alpha_c$. Each collective $\Omega_c$ agrees on a (deterministic) measurable strategy $h_c:\mathcal{Z}\rightarrow\mathcal{Z}$ from a space of feasible edits.\footnote{If strategies are randomized, replace $h_c$ by a Markov kernel $K_c$ and $(h_c)_{\#}$ by the kernel pushforward $K_c^{\#}$; we use deterministic $h_c$ here for notational simplicity.} 
For every collective $\Omega_c$, we make the \textit{neutrality assumption} \cite{HardtEtAl2023,gauthier2025statistical} that the pre-edit distribution within the collective matches the population pre-edit distribution, i.e., $P_0(z)=\PP\left[f_0(\omega)=z ,\middle|, \omega\in\Omega_c\right]\; \forall z\in\mathcal Z$. Intuitively, before applying $h_c$, each collective is a random slice of the same underlying population (membership is independent of unedited examples).

\textit{Remark.} The neutrality assumption is reasonable in several cases, e.g., when collectives form based on attributes unrelated to data and unknown to the platform. In other cases, collectives may form based on the data they aim to edit; then one can relax the assumption to $\operatorname{TV}\!\bigl(P_0,\ \PP\left[f_0(\omega) \in \cdot \mid \omega \in \Omega_c\right]\bigr)\le \phi_{N,c}$. In what follows, we keep conditional independence, but similar derivations hold under the relaxed assumption with explicit bias terms.

\textbf{Mixture Distribution.} The action of collective $\Omega_c$ via $h_c$ maps $P_0$ to $P_c:=(h_c)_{\#}P_0$. Thus, in the infinite-data regime, a firm that trains a learning model on data coming from $\Omega$ observes the \textit{post-edit} mixture distribution
\begin{equation}\label{eq:mixture}
    P
    \;=\;
    \bigl(1-\bar\alpha\bigr)\,P_{0}
    \;+\;
    \sum_{c=1}^{M}\alpha_{c}\,P_{c}.
\end{equation}
\textbf{Training Set and Empirical Distribution.} In practice, if a collective action is not performed, a firm would use a training set of $N$ data points
$\{Z_i=(X_i,Y_i)\}_{i=1}^N$. We model the training set by sampling individuals $S_1,\dots,S_N$ i.i.d.\ from $\pi$, and setting $Z_i := f_0(S_i)$. Then $\{Z_i\}_{i=1}^N$ are i.i.d.\ from the pre-edit distribution $P_0$. This setup is general enough to cover different settings, e.g. survey-like sampling by interpreting $\pi$ as the survey's sampling design over individuals (e.g., uniform, stratified, or weighted), with $f_0$ mapping each sampled respondent to their recorded $(x,y)$, internet/firm sampling by interpreting $\pi$ as the platform's exposure/participation process, so i.i.d.\ draws from $\pi$ represent repeated user interactions or users selected by the platform's logging pipeline. In both cases, the observed training data are the $(x,y)$ pairs induced by sampling users via $\pi$ and then recording their associated outcome through $f_0$. If a collective action is performed, i.e., if the collectives edit these examples according to their strategies $h_c$, the firm would train its learning model using the training set of the $N$ edited data points. The  individuals belonging to collectives that appear in the training set, which we refer to as \textit{sample collectives}, are then captured by the random \emph{index sets}
\begin{equation*}
I_{N,c}\;:=\;\bigl\{\,i\in\{1,\dots,N\}:\ S_i\in \Omega_c\,\bigr\},\quad c=1,\dots,M,
\end{equation*}
with sizes $n_{N,c}:=|I_{N,c}|$ and empirical masses $\alpha_{N,c}:=n_{N,c}/N$, so that $\bar\alpha_N:=\sum_{c=1}^M \alpha_{N,c}$. We define the empirical non-collective
mass $\alpha_{N,0} := 1-\bar\alpha_N$, and 
denote the number of non-collective training points with  $n_{N,0} \;:=\; N - \sum_{d=1}^M n_{N,d}$. For $n_{N,c}>0$, we let $\hat P_{N,c}^{\mathrm{pre}}$ denote the empirical \emph{pre-edit}
distribution of collective $\Omega_c$, i.e, $
\hat P_{N,c}^{\mathrm{pre}}
\;:=\;
\frac{1}{n_{N,c}}\sum_{i\in I_{N,c}} \delta_{Z_i},$
with feature marginal $
\hat P_{N,c}^{\mathrm{pre},X}(B)
\;:=\;
\frac{1}{n_{N,c}}\sum_{i\in I_{N,c}} \mathbf 1_{\{X_i\in B\}}, B\subseteq\mathcal X$. Similarly, for $n_{N,c}>0$, the  \emph{post-edit} empirical measure of collective $\Omega_c$ is then given by $\hat P_{N,c}:=(h_c)_{\#}\frac{1}{n_{N,c}}\sum_{i\in I_{N,c}} \delta_{Z_i}$, with $\hat P_{N,c}^X$  denoting its marginal on
$\mathcal X$. \textcolor{black}{For the non-collective users, if $n_{N,0}>0$, define $
\hat P_{N,0}
:=
\frac{1}{n_{N,0}}
\sum_{i:S_i\notin\cup_{c=1}^M\Omega_c}\delta_{Z_i}.
$.
If $n_{N,0}=0$, we set $\hat P_{N,0}$ equal to an arbitrary fixed
probability measure on $\mathcal Z$; this convention has no effect on
$\hat P_N$, because then $\alpha_{N,0}=0$. Thus the post-edit empirical
mixture distribution is
\begin{align}\label{eq:emp_mixture}
\hat P_{N}
\;=\;
\alpha_{N,0}\hat P_{N,0}
+
\sum_{c:\alpha_{N,c}>0}\alpha_{N,c}\,\hat P_{N,c}.
\end{align}}
As a sanity check, in the following proposition we show that the empirical distribution in \eqref{eq:emp_mixture} converges to the distribution in \eqref{eq:mixture} in the infinite-sample regime.
\begin{propositionE}[][end, restate]\label{prop:finite_to_infinite}
Under the neutrality assumption $P_0(z)=P_0\bigl(z\mid \omega\in\Omega_c\bigr)$, it holds $
\hat P_N \rightarrow P$ almost surely. In addition, for each collective $\Omega_c$, it holds  $\alpha_{N,c}\to \alpha_c$ and
$\hat P_{N,c} \rightarrow P_c$ almost surely.
\end{propositionE}

\begin{proofE}
Fix $z\in\mathcal{Z}$. For $c=1,\dots,M$ define
\[
Y_i^{(c,z)}\ :=\ \mathbf 1_{\{\,S_i\in\Omega_c,\ h_c(Z_i)=z\,\}},
\qquad
Y_i^{(0,z)}\ :=\ \mathbf 1_{\{\,S_i\notin \cup_{c=1}^M\Omega_c,\ Z_i=z\,\}} .
\]
Since $(S_i)_{i\ge1}$ are i.i.d.\ with law $\pi$ and $Z_i=f_0(S_i)$, the vectors $\bigl(Y_i^{(0,z)},Y_i^{(1,z)},\dots,Y_i^{(M,z)}\bigr)$ are i.i.d.  .
By neutrality, $Z_i\mid S_i\in\Omega_c\sim P_0$ for each $c$, hence
\[
\mathbb E\,Y_1^{(0,z)}=(1-\bar\alpha)\,P_0(z),
\qquad
\mathbb E\,Y_1^{(c,z)}=\alpha_c\,P_c(z)\ \ (c=1,\dots,M),
\]
where $P_c=(h_c)_{\#}P_0$. By the strong law of large numbers,
\[
\frac{1}{N}\sum_{i=1}^N Y_i^{(0,z)} \xrightarrow[\text{a.s.}]{} (1-\bar\alpha)\,P_0(z),
\qquad
\frac{1}{N}\sum_{i=1}^N Y_i^{(c,z)} \xrightarrow[\text{a.s.}]{} \alpha_c\,P_c(z).
\]
Noting that
\[
\hat P_{N}(z) \;=\; \bigl(1-\bar\alpha_N\bigr)\,\hat P_{N,0}(z)\;+\;\sum_{c:\alpha_{N,c}>0}\alpha_{N,c}\,\hat P_{N,c}(z)
\;=\; \frac{1}{N}\sum_{i=1}^N\Bigl(Y_i^{(0,z)}+\sum_{c=1}^M Y_i^{(c,z)}\Bigr),
\]
we obtain, for each $z\in\mathcal{Z}$,
\[
\hat P_{N}(z) \xrightarrow[\text{a.s.}]{} (1-\bar\alpha)\,P_0(z)+\sum_{c=1}^M \alpha_c\,P_c(z)
\;=\; P(z).
\]
Since $\mathcal{Z}$ is finite, coordinatewise convergence implies convergence of measures, i.e.,
$\hat P_N\to P$ almost surely. 
Similarly, $\alpha_{N,c}=N^{-1}\sum_{i=1}^N \mathbf 1_{\{S_i\in\Omega_c\}}\to\alpha_c$ almost surely again by the strong law of large numbers, and, for $\alpha_c>0$, the convergence $\alpha_{N,c}\hat P_{N,c}(z)\to \alpha_c P_c(z)$ for all $z$ together with $\alpha_{N,c}\to\alpha_c$ yields $\hat P_{N,c}\rightarrow P_c$ almost surely.
\end{proofE}

We study the case where the firm's learning model is a classifier.  In particular, we assume such a classifier is an $\varepsilon$-contamination-suboptimal classifier.
\textcolor{black}{
\begin{definition}[$\varepsilon$-contamination-suboptimal classifier]
A classifier $\hat m:\mathcal X\to\mathcal Y$ is
$\varepsilon$-contamination-suboptimal on
$\mathcal X'\subseteq \operatorname{supp}(\hat P_N^X)$ under $\hat P_N$
if there exists a probability measure $\hat Q_N$ on
$\mathcal X\times\mathcal Y$ and a probability measure $R_N$ on
$\mathcal X\times\mathcal Y$ such that
\begin{equation}
\hat Q_N=(1-\varepsilon)\hat P_N+\varepsilon R_N,
\end{equation}
and, for every $x\in\mathcal X'$ with $\hat Q_N^X(x)>0$,
\begin{equation}
\hat m(x)\in \argmax_{y\in\mathcal Y}\hat Q_N(y\mid x).
\end{equation}
\end{definition}
Consistently with previous literature \cite{HardtEtAl2023, gauthier2025statistical}, we take $\varepsilon< 1/2$.
\textit{Remark.}
The contamination-suboptimality condition above makes explicit the perturbation
model underlying the margin argument used in
\cite{HardtEtAl2023,gauthier2025statistical}. Namely, the classifier is assumed to be Bayes-optimal, or approximately optimal through the choice of $\varepsilon$, for a distribution $\hat Q_N$ obtained by an $\varepsilon$-contamination of the empirical training distribution $\hat P_N$. This is slightly stronger and more structured than merely requiring
$\TV(\hat Q_N,\hat P_N)\le \varepsilon$. In fact, under symmetric total variation alone,
the threshold $\varepsilon/(1-\varepsilon)$ is not generally implied, and the standard conservative margin threshold is $2\varepsilon$. Thus, the contamination formulation is the precise source of the $\varepsilon/(1-\varepsilon)$ terms in our bounds. We use this formulation in
order to stay close to the perturbation model and constants used in \cite{HardtEtAl2023,gauthier2025statistical}, while making explicit the additional structure needed for the $\varepsilon/(1-\varepsilon)$ threshold.
All results below remain valid under the looser symmetric total-variation suboptimality model, provided every occurrence of
$\varepsilon/(1-\varepsilon)$ in the margin conditions and final bounds is replaced by the conservative threshold $2\varepsilon$.}

\noindent\textbf{Per-collective and Global Success.} Given a strategy profile $\{h_c\}_{c=1}^{M}$, a classifier $\hat m:\cal X\to Y$ trained on the training set $\{Z_i\}_{i=1}^N$, and a test set of $N^{\textrm{test}}$ data points $\{Z^{\textrm{test}}_i\}_{i=1}^{N^{\textrm{test}}}$ again drawn i.i.d. from $P_0$, each sample collective $I_{N,c}$ is interested in maximizing a per-collective test success metric $\hat S_c$, and a global test success metric $\hat S(\{\hat S_c\}_c)$ being an aggregation of some sort of the per-collective successes. If such aggregation is a $\min$, we obtain an egalitarian metric which takes the per-collective success of the least successful collective as the global measure of success~\cite{Sagawa2020Distributionally}, i.e.,
\begin{equation}\label{eq:global_min}
         \hat S_{\min}(\hat S_c)
        := \min_{c:\alpha_{N,c}>0} \; \hat S_c(\alpha_{N,c}, N^{\textrm{test}}).
    \end{equation}  
    Alternatively, if we average, we obtain a metric proportional to each collective's mass, i.e.,
\begin{equation}\label{eq:global_avg}
       \hat  S_{\mathrm{w}}(\hat S_c)
        := \frac{1}{\bar\alpha_N}\sum_{c=1}^{M}\alpha_{N,c}\,\hat S_c(\alpha_{N,c}, N^{\textrm{test}}).
    \end{equation}
Similarly to Proposition \ref{prop:finite_to_infinite}, in the infinite-data regime, the versions of these success metrics computed under the empirical test measure $\hat P_{N^{\textrm{test}}}$ converge almost surely to their population counterparts computed under $P_0$.

\textit{Remark.} In the rest of the paper, we derive bounds for the per-collective success. The bounds on global success can be readily derived by aggregating the per-collective bounds with the chosen summary and adapting union bounds.

\textbf{Further Setting Details.} Following \cite{gauthier2025statistical}, we assume each collective has access only to their own data and not to the data of other collectives or of those who are not part of any collective. However, a collective could have some sort of aggregated knowledge about other collectives (see Sections \ref{sec:signal_planting}-\ref{sec:signal_unplanting}). Moreover, we will use Hoeffding's concentration inequality (see Lemma D.1 in \cite{gauthier2025statistical}) without loss of generality. We denote Hoeffding error terms with $R_\gamma(k):=\sqrt{\frac{\log (1 / \gamma)}{2 k}}$,
for any $\gamma>0$ and $k \in \mathbb{N}^+$. \textcolor{black}{In the bounds we derive, whenever
$R_\gamma(n_{N,0})$ appears, we condition on $n_{N,0}>0$}. Moreover, in the theoretical results below, we assume that the deployed classifier is $\varepsilon$-contamination-suboptimal under the post-edit empirical distribution
$\hat P_N$ on the relevant planted feature set.

%\textcolor{black}{Important note: Our results never assume that collectives need to share the same strategy! We can say this explicitly.}

\vspace{-0.3cm}\section{Signal Planting}\vspace{-0.2cm}\label{sec:signal_planting}
 We start by analyzing the setting in which each sample collective $I_{N,c}$ wants the classifier $\hat m$ to \textit{learn} an association between an altered version of the features $g_c(x)$ and a chosen target class $y_c^*$. Formally, let $\hat P_{N^{\textrm{test}}}^X$ denote the empirical distribution of test features
  (the marginal of $\hat P_{N^{\textrm{test}}}$ on $\mathcal X$) and $g_c: \cal X \rightarrow \cal X$  the feature map used on target-label points induced by $h_c$, then each sample collective $I_{N,c}$ measures per-collective success as
\begin{equation}
\hat S_c(\alpha_{N,c})=\PP_{x \sim \hat P^X_{N^{\textrm{test}}}}\left[\hat m(g_c(x))=y_c^*\right].
\end{equation}
We denote with $\mathcal{X}_c^*:=g_c(\mathcal{X})\subseteq \cal X$ the subset of the feature space representing the image of the eligible set under $h_c$, i.e., the image of the features alterable by the adoption of $h_c$. As usually done \cite{HardtEtAl2023,gauthier2025statistical,solanki2025crowding, battiloro2025algorithmic}, we consider two cases: the case in which the users can modify both features and labels, referring to the resulting strategies as \textit{feature-label} strategies, and the case in which the users can access both features and labels, but modify only the features, referring to the resulting strategies as \textit{feature-only} strategies.

\noindent In the rest of this and the following section, we fix a (sample) collective $\Omega_c/I_{N,c}$  when there is no risk of ambiguity, and assume its empirical mass
$\alpha_{N,c}>0$. Moreover, our results are conditioned on the realization of the sizes
of the sample collectives $\{n_{N,0},n_{N,1},\dots,n_{N,M}\}$.  

For each $x^*\in \mathcal{X}_c^*$, we define the empirical prevalence of the planted feature
in collective $I_{N,c}$ and the empirical (planting) \textit{conflict} terms with other collectives as
\begin{align}
\hat p_c^X(x^*)
&:= \hat P_{N,c}^X(x^*), \textrm{ and }
\\[0.2em]
\Delta_{c\to d}(x^*)
&:= \max_{y'\in\mathcal Y\setminus\{\yc\}}
\bigl(
\hat P_{N,d}(x^*,y') - \hat P_{N,d}(x^*,\yc)
\bigr),\quad  d\in\{1,\dots,M\},\ d\neq c,
\end{align}
respectively. Intuitively, $\Delta_{c\to d}(x^*)$ measures how strongly sample collective $I_{N,d}$'s post-edit data associates the planted feature value $x^*$ with some label other than $y_c^*$ compared to $y_c^*$. Therefore, larger absolute values indicate greater potential for $I_{N,d}$ to \textit{confuse or accrete} (based on its sign)  $I_{N,c}$'s planted signal at $x^*$.To approximate the non-collective conflict using only the pre-edit data of 
the sample collective $I_{N,c}$,  we define
\begin{equation}
\Delta_{c,0}^{(n_{N,c})}(x^*)
\;:=\;
\max_{y'\in\mathcal Y\setminus\{\yc\}}
\bigl(
\hat P_{N,c}^{\mathrm{pre}}(x^*,y') - \hat P_{N,c}^{\mathrm{pre}}(x^*,\yc)
\bigr).
\end{equation}
Intuitively, $\Delta_{c,0}^{(n_{N,c})}(x^*)$ is a data-driven proxy for how much the non-collective population could counteract the planted signal at $x^*$ by supporting an alternative label over $y_c^*$, estimated under neutrality.

\textbf{Feature-label Signal Planting Strategy.} For each sample collective $I_{N,c}$, the \textit{feature-label} signal strategy is given by
\begin{equation}\label{eq:FL_strat}
h_c(x, y)=\left(g_c(x), y_c^*\right).
\end{equation}
 This strategy forces every edited example to carry the same label $y_c^*$ while mapping its features into the planted set via $g_c$. It directly aims to create an association between the transformed features and the target class in the training data.

 \textbf{Feature-only Signal Planting Strategy.}
For each sample collective $I_{N,c}$, assume that $
\mathcal X\setminus\mathcal X_c^*\neq\emptyset$, and fix a point $x_{0,c}\in\mathcal X\setminus\mathcal X_c^*$.
The \textit{feature-only} signal planting strategy is given by
\begin{equation}\label{eq:FO_strat}
h_c(x,y)
=
\begin{cases}
\left(g_c(x),y\right), & y=y_c^*,\\
\left(x_{0,c},y\right), & y\neq y_c^*.
\end{cases}
\end{equation}
The intuition here is that, while the features of the target-label examples
are modified according to $g_c$, examples with labels $y\neq y_c^*$ are pushed
to a feature value outside the planted signal set. In this setting, the label $y$ is assumed to be accessible or computable, but not modifiable by
the collective.

We can now state our main result for this section.

\begin{theoremE}[Signal planting with multiple collectives][end, restate]\label{thm:multi_feature_label}
\textcolor{black}{Under the neutrality assumption, assuming that $\hat m$ is
$\varepsilon$-contamination-suboptimal on $\mathcal X_c^*$ under
$\hat P_N$}, and assuming all collectives either employ the feature-label or the feature-only strategy from \eqref{eq:FL_strat}-\eqref{eq:FO_strat}, respectively, with probability at least $1-\delta$ over the draw of the training and test samples, with $\delta>0$, the per-collective success of collective $I_{N,c}$ satisfies
\begin{align}\label{eq:multi_FL_bound_final_clean_preonly}
\hat S_c(\alpha_{N,c})
\;&\ge\; \;
\PP_{x\sim\hat P_{N,c}^{\mathrm{pre},X}}
\Big[
\;
\alpha_{N,c}\Bigl(
\hat p_c^X\bigl(g_c(x)\bigr)
-
2R_{\tilde\delta}(n_{N,c})
\Bigr)
-
\sum_{d\neq c}\alpha_{N,d}\,\Delta_{c\to d}\bigl(g_c(x)\bigr) \nonumber
\\
&-
\alpha_{N,0}\,\Delta_{c,0}^{(n_{N,c})}\bigl(g_c(x)\bigr)
-\;
2\alpha_{N,0}\Bigl(
R_{\tilde\delta}(n_{N,0}) + R_{\tilde\delta}(n_{N,c})
\Bigr)
-
\frac{\varepsilon}{1-\varepsilon}
>0
\Big]\nonumber\\
&-
R_{\delta/4}(n_{N,c})
-
R_{\delta/4}(N^{\mathrm{test}}),
\end{align}
where $K_c \;:=\; |\mathcal{X}_c^*|\,|\mathcal Y|$ and $\tilde\delta \;:=\; \frac{\delta}{10K_c}$.
\end{theoremE}

\begin{proofE}
The proof follows six steps, here is the outline:
\begin{enumerate}
\item We use $\varepsilon$-contamination-suboptimality of $\hat m$ to obtain a pointwise margin condition under $\hat P_N$ that guarantees correct prediction at a feature value $x^*$ whenever a suitable margin inequality holds.
\item We express this margin in terms of the empirical mixture $\hat P_N$ by decomposing contributions from each collective, and define the resulting exact empirical margin $M_c(x^*)$.
\item We lower bound $M_c(x^*)$ by a fixed population-based margin $\overline M_c(x^*)$ and then compare $\overline M_c(x^*)$ to the corrected empirical margin appearing in the theorem statement.
\item We convert the resulting fixed margin condition into a lower bound on the test success probability under $P_0^X$, using concentration on the test sample.
\item We replace this population probability by the corresponding empirical probability over the pre-edit feature distribution of collective $c$, again via Hoeffding’s inequality, and then pass pointwise to the corrected empirical margin event.
\item We combine all concentration events via a union bound and obtain the desired lower bound on $\hat S_c(\alpha_{N,c})$.
\end{enumerate}

Throughout the proof, we work conditionally on the realized sample sizes
$\{n_{N,0},n_{N,1},\dots,n_{N,M}\}$, and all probabilities $\PP[\cdot]$ are
with respect to the remaining randomness in the training and test draws. We
write $R_\gamma(k)=\sqrt{\log(1/\gamma)/(2k)}$.

\noindent\emph{Step 1.}
\textcolor{black}{We first need the following simple lemma.
\begin{lemma}[Margin implication for planting]
Assume $\hat m$ is $\varepsilon$-contamination-suboptimal on
$\mathcal X'$ under $\hat P_N$. If $x^*\in\mathcal X'$ and
\[
\hat P_N(x^*,y^*)>
\hat P_N(x^*,y')+\frac{\varepsilon}{1-\varepsilon}
\qquad \forall y'\neq y^*,
\]
then $\hat m(x^*)=y^*$.
\end{lemma}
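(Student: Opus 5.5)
The plan is to compare the joint masses of the contaminated measure $\hat Q_N$ at $x^*$ label by label, rather than working with conditionals directly. Since $\hat Q_N(y\mid x^*)=\hat Q_N(x^*,y)/\hat Q_N^X(x^*)$ whenever $\hat Q_N^X(x^*)>0$, the argmax over $y$ of the conditional coincides with the argmax of the joint $y\mapsto \hat Q_N(x^*,y)$. It therefore suffices to show two things: that $\hat Q_N^X(x^*)>0$, so the definition of $\varepsilon$-contamination-suboptimality applies at $x^*$, and that $y^*$ is the \emph{unique} maximizer of $\hat Q_N(x^*,\cdot)$, so $\hat m(x^*)$ is forced to equal $y^*$.

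First I will check that the hypothesis applies at $x^*$. The assumed inequality gives $\hat P_N(x^*,y^*)>\varepsilon/(1-\varepsilon)\ge 0$, so $\hat P_N^X(x^*)>0$, i.e.\ $x^*\in\operatorname{supp}(\hat P_N^X)$, consistent with $\mathcal X'\subseteq\operatorname{supp}(\hat P_N^X)$. Using $\hat Q_N=(1-\varepsilon)\hat P_N+\varepsilon R_N$ with $R_N\ge 0$, we get $\hat Q_N^X(x^*)\ge(1-\varepsilon)\hat P_N^X(x^*)>0$, because $\varepsilon<1/2$.

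The key computation is then, for any $y'\neq y^*$,
\begin{align*}
\hat Q_N(x^*,y^*)-\hat Q_N(x^*,y')
&=(1-\varepsilon)\bigl(\hat P_N(x^*,y^*)-\hat P_N(x^*,y')\bigr)+\varepsilon\bigl(R_N(x^*,y^*)-R_N(x^*,y')\bigr)\\
&\ge(1-\varepsilon)\bigl(\hat P_N(x^*,y^*)-\hat P_N(x^*,y')\bigr)-\varepsilon.
\end{align*}
The inequality uses $R_N(x^*,y^*)\ge 0$ and $R_N(x^*,y')\le 1$. By the margin hypothesis the right-hand side is strictly larger than $(1-\varepsilon)\cdot\frac{\varepsilon}{1-\varepsilon}-\varepsilon=0$. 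Hence $y^*$ is the unique element of $\argmax_{y}\hat Q_N(y\mid x^*)$, and the definition yields $\hat m(x^*)=y^*$.

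There is no real obstacle here. The only points needing care are verifying $\hat Q_N^X(x^*)>0$, so the suboptimality clause is actually triggered, and using the worst case $R_N(x^*,y')\le 1$ for the adversarial mass; this worst case is exactly what produces the $\varepsilon/(1-\varepsilon)$ threshold. For the symmetric total-variation variant mentioned in the Remark, the same argument would instead bound $|\hat Q_N(x^*,y)-\hat P_N(x^*,y)|\le\varepsilon$ for each $y$, which yields the conservative $2\varepsilon$ threshold.
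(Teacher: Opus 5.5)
Your proof is correct and follows the paper's argument essentially step by step. Both use the contamination decomposition with the worst-case bound $R_N(x^*,y^*)-R_N(x^*,y')\ge -1$ to get a strictly positive gap in $\hat Q_N$, and both check $\hat Q_N^X(x^*)>0$ so that $y^*$ is the unique conditional argmax and the suboptimality clause forces $\hat m(x^*)=y^*$.
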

\begin{proof}
By $\varepsilon$-contamination-suboptimality, there exists
$\hat Q_N=(1-\varepsilon)\hat P_N+\varepsilon R_N$ such that
$\hat m(x)\in\argmax_y \hat Q_N(y\mid x)$ on $\mathcal X'$.
Fix $x^*\in\mathcal X'$ and $y'\neq y^*$. Then
\[
\hat Q_N(x^*,y^*)-\hat Q_N(x^*,y')
=
(1-\varepsilon)
\bigl(\hat P_N(x^*,y^*)-\hat P_N(x^*,y')\bigr)
+
\varepsilon
\bigl(R_N(x^*,y^*)-R_N(x^*,y')\bigr).
\]
Since $R_N(x^*,y^*)-R_N(x^*,y')\ge -1$, the assumed margin gives
\[
\hat Q_N(x^*,y^*)-\hat Q_N(x^*,y')
>
(1-\varepsilon)\frac{\varepsilon}{1-\varepsilon}-\varepsilon
=0.
\]
Thus $\hat Q_N(x^*,y^*)>\hat Q_N(x^*,y')$ for all $y'\neq y^*$.
Moreover, the margin condition implies $\hat P_N^X(x^*)>0$, hence
$\hat Q_N^X(x^*)>0$. Therefore $y^*$ is the unique maximizer of
$\hat Q_N(\cdot\mid x^*)$, and $\hat m(x^*)=y^*$.
\end{proof}
Since $\hat m$ is $\varepsilon$-contamination-suboptimal on
$\mathcal X_c^*$ under $\hat P_N$, the margin implication for planting above
gives that, for any $x^*\in\mathcal X_c^*$,
\begin{equation}\label{eq:eps_margin_joint}
\forall y'\neq y_c^*:\;
\hat P_N(x^*,y_c^*)>
\hat P_N(x^*,y')+\frac{\varepsilon}{1-\varepsilon}
\quad\Longrightarrow\quad
\hat m(x^*)=y_c^*.
\end{equation}
}

\noindent\emph{Step 2.}
Fix the target collective $c$ and an arbitrary $x^*\in \mathcal{X}_c^*$. By the empirical
mixture decomposition,
\[
\hat P_N
=
\alpha_{N,0}\,\hat P_{N,0}
+
\sum_{d=1}^M \alpha_{N,d}\,\hat P_{N,d},
\]
we have, for any $y\in\mathcal Y$,
\begin{equation}\label{eq:PN_decomp}
\hat P_N(x^*,y)
=
\alpha_{N,0}\,\hat P_{N,0}(x^*,y)
+
\sum_{d=1}^M \alpha_{N,d}\,\hat P_{N,d}(x^*,y).
\end{equation}

Under either planting strategy, every edited point from collective $c$ with post-edit feature
value $x^*$ carries label $y_c^*$. Hence, for any $x^*\in \mathcal{X}_c^*$ and any $y'\neq y_c^*$,
\[
\hat P_{N,c}(x^*,y') = 0,
\qquad
\hat P_{N,c}(x^*,y_c^*) = \hat P_{N,c}^X(x^*) = \hat p_c^X(x^*).
\]
Subtracting \eqref{eq:PN_decomp} for $(x^*,y')$ from that for
$(x^*,y_c^*)$ and defining
\[
\Delta_{c,0}(x^*)
\;:=\;
\max_{y'\in\mathcal Y\setminus\{\yc\}}
\bigl(
\hat P_{N,0}(x^*,y') - \hat P_{N,0}(x^*,\yc)
\bigr),
\]
we obtain for any $y'\neq y_c^*$:
\begin{align}
\hat P_N(x^*,y_c^*) - \hat P_N(x^*,y')
&=
\alpha_{N,c}\,\hat p_c^X(x^*)
+
\sum_{d\neq c}\alpha_{N,d}\bigl(\hat P_{N,d}(x^*,y_c^*) - \hat P_{N,d}(x^*,y')\bigr)
\notag\\
&\quad
+\alpha_{N,0}\bigl(\hat P_{N,0}(x^*,y_c^*) - \hat P_{N,0}(x^*,y')\bigr)
\label{eq:margin_exact_decomp}
\\
&\ge
\alpha_{N,c}\,\hat p_c^X(x^*)
- \sum_{d\neq c}\alpha_{N,d}\,\Delta_{c\to d}(x^*)
- \alpha_{N,0}\,\Delta_{c,0}(x^*).
\notag
\end{align}

Define the exact empirical margin
\[
M_c(x^*)
\;:=\;
\alpha_{N,c}\,\hat p_c^X(x^*)
- \sum_{d\neq c}\alpha_{N,d}\,\Delta_{c\to d}(x^*)
- \alpha_{N,0}\,\Delta_{c,0}(x^*),
\qquad x^*\in \mathcal{X}_c^*.
\]
Then \eqref{eq:margin_exact_decomp} yields, for all $x^*\in \mathcal{X}_c^*$ and all
$y'\neq y_c^*$,
\[
\hat P_N(x^*,y_c^*) - \hat P_N(x^*,y') \;\ge\; M_c(x^*).
\]
Combining this with \eqref{eq:eps_margin_joint} gives
\begin{equation}\label{eq:marg_cond_implies_correct_Mc}
x^*\in \mathcal{X}_c^*,
\quad
M_c(x^*) > \frac{\varepsilon}{1-\varepsilon}
\quad\Longrightarrow\quad
\hat m(x^*) = y_c^*.
\end{equation}

\noindent\emph{Step 3.}
Define the population signal prevalence
\[
p_{c,\mathrm{pop}}^X(x^*)
\;:=\;
P_c^X(x^*),
\qquad x^*\in\mathcal X_c^*,
\]
and the population non-collective conflict
\[
\Delta_{c,0}^{\mathrm{pop}}(x^*)
\;:=\;
\max_{y'\neq\yc}
\bigl(
P_0(x^*,y') - P_0(x^*,\yc)
\bigr).
\]
By neutrality, the edited points of collective $c$ are i.i.d.\ from $P_c$, the non-collective points are i.i.d.\ from $P_0$, and the pre-edit points of collective $c$ are i.i.d.\ from $P_0$.

\textcolor{black}{Applying Hoeffding’s inequality and a union bound over the scalar
one-sided concentration events displayed below, we obtain that, with
probability at least $1-\delta/2$, the following inequalities hold
simultaneously. Indeed, the first two families contain
$2|\mathcal X_c^*|$ scalar events, and the last four families contain
$4|\mathcal X_c^*||\mathcal Y|=4K_c$ scalar events. Since
$|\mathcal Y|\ge2$, the total number of scalar events is at most
$5K_c$. With $\tilde\delta=\delta/(10K_c)$, the total failure probability
is therefore at most $\delta/2$. The inequalities are:}
\begin{align}
\hat p_c^X(x^*)
&\ge
p_{c,\mathrm{pop}}^X(x^*) - R_{\tilde\delta}(n_{N,c}),
\qquad \forall x^*\in\mathcal X_c^*,
\label{eq:hoeff_signal_collective}
\\[0.3em]
\textcolor{black}{\hat p_c^X(x^*)}
&\textcolor{black}{\le
p_{c,\mathrm{pop}}^X(x^*) + R_{\tilde\delta}(n_{N,c}),
\qquad \forall x^*\in\mathcal X_c^*,}
\label{eq:hoeff_signal_collective_upper}
\\[0.3em]
\hat P_{N,0}(x^*,y)
&\le
P_0(x^*,y) + R_{\tilde\delta}(n_{N,0}),
\qquad \forall (x^*,y)\in\mathcal X_c^*\times\mathcal Y,
\label{eq:hoeff_noncoll_upper}
\\[0.3em]
P_0(x^*,y)
&\le
\hat P_{N,0}(x^*,y) + R_{\tilde\delta}(n_{N,0}),
\qquad \forall (x^*,y)\in\mathcal X_c^*\times\mathcal Y,
\label{eq:hoeff_noncoll_lower}
\\[0.3em]
\hat P_{N,c}^{\mathrm{pre}}(x^*,y)
&\le
P_0(x^*,y) + R_{\tilde\delta}(n_{N,c}),
\qquad \forall (x^*,y)\in\mathcal X_c^*\times\mathcal Y,
\label{eq:hoeff_collpre_upper}
\\[0.3em]
P_0(x^*,y)
&\le
\hat P_{N,c}^{\mathrm{pre}}(x^*,y) + R_{\tilde\delta}(n_{N,c}),
\qquad \forall (x^*,y)\in\mathcal X_c^*\times\mathcal Y.
\label{eq:hoeff_collpre_lower}
\end{align}

Using \eqref{eq:hoeff_noncoll_upper}--\eqref{eq:hoeff_noncoll_lower}, we obtain, for every $x^*\in\mathcal X_c^*$,
\begin{equation}\label{eq:delta_exact_vs_pop}
\Delta_{c,0}(x^*)
\;\le\;
\Delta_{c,0}^{\mathrm{pop}}(x^*)
+
2R_{\tilde\delta}(n_{N,0}).
\end{equation}
Using \eqref{eq:hoeff_collpre_upper}--\eqref{eq:hoeff_collpre_lower}, we obtain, for every $x^*\in\mathcal X_c^*$,
\begin{equation}\label{eq:delta_pop_vs_proxy}
\Delta_{c,0}^{\mathrm{pop}}(x^*)
\;\le\;
\Delta_{c,0}^{(n_{N,c})}(x^*)
+
2R_{\tilde\delta}(n_{N,c}).
\end{equation}

Define the fixed population-based margin
\[
\overline M_c(x^*)
\;:=\;
\alpha_{N,c}\Bigl(
p_{c,\mathrm{pop}}^X(x^*)
-
R_{\tilde\delta}(n_{N,c})
\Bigr)
-
\sum_{d\neq c}\alpha_{N,d}\,\Delta_{c\to d}(x^*)
-
\alpha_{N,0}\Bigl(
\Delta_{c,0}^{\mathrm{pop}}(x^*)
+
2R_{\tilde\delta}(n_{N,0})
\Bigr),
\]
and the corrected empirical margin
\begin{align}
\widetilde M_c(x^*)
\;:=\;
\alpha_{N,c}\Bigl(
\hat p_c^X(x^*)
-
2R_{\tilde\delta}(n_{N,c})
\Bigr)&-
\sum_{d\neq c}\alpha_{N,d}\,\Delta_{c\to d}(x^*)
-
\alpha_{N,0}\,\Delta_{c,0}^{(n_{N,c})}(x^*)
\nonumber \\
&-
2\alpha_{N,0}\Bigl(
R_{\tilde\delta}(n_{N,0})
+
R_{\tilde\delta}(n_{N,c})
\Bigr). \nonumber 
\end{align}

By \eqref{eq:hoeff_signal_collective} and \eqref{eq:delta_exact_vs_pop}, we have, for all $x^*\in\mathcal X_c^*$,
\begin{equation}\label{eq:Mc_vs_barMc}
M_c(x^*)
\;\ge\;
\overline M_c(x^*).
\end{equation}
By \textcolor{black}{\eqref{eq:hoeff_signal_collective_upper}} and \eqref{eq:delta_pop_vs_proxy}, we also have, for all $x^*\in\mathcal X_c^*$,
\begin{equation}\label{eq:barMc_vs_tildeMc}
\overline M_c(x^*)
\;\ge\;
\widetilde M_c(x^*).
\end{equation}

Combining \eqref{eq:marg_cond_implies_correct_Mc} and \eqref{eq:Mc_vs_barMc}, we obtain
\begin{equation}\label{eq:marg_cond_implies_correct_barMc}
x^*\in \mathcal{X}_c^*,
\quad
\overline M_c(x^*) > \frac{\varepsilon}{1-\varepsilon}
\quad\Longrightarrow\quad
\hat m(x^*) = y_c^*.
\end{equation}
By \eqref{eq:barMc_vs_tildeMc}, this further implies
\begin{equation}\label{eq:marg_cond_implies_correct_tildeMc}
x^*\in \mathcal{X}_c^*,
\quad
\widetilde M_c(x^*) > \frac{\varepsilon}{1-\varepsilon}
\quad\Longrightarrow\quad
\hat m(x^*) = y_c^*.
\end{equation}

\noindent\emph{Step 4.}
Let $X^{\mathrm{test}}$ denote a generic feature drawn from the test
distribution $P_0^X$, and write
\[
\varphi(x) \;:=\; \mathbf 1_{\{\hat m(g_c(x)) = y_c^*\}},
\qquad x\in\mathcal X.
\]
By definition,
\[
\hat S_c(\alpha_{N,c})
=
\frac{1}{N^{\mathrm{test}}}\sum_{i=1}^{N^{\mathrm{test}}}
\varphi(X_i^{\mathrm{test}}).
\]
Applying Hoeffding’s inequality to the i.i.d.\ Bernoulli variables
$\varphi(X_i^{\mathrm{test}})$, with failure budget $\delta/4$, we obtain that, with
probability at least $1-\delta/4$,
\begin{equation}\label{eq:test_vs_P0}
\hat S_c(\alpha_{N,c})
\;\ge\;
\PP_{X\sim P_0^X}\bigl[\hat m(g_c(X))=y_c^*\bigr]
-
R_{\delta/4}(N^{\mathrm{test}}).
\end{equation}

Since $g_c(X)\in \mathcal X_c^*$ for all $X\in\mathcal X$, \eqref{eq:marg_cond_implies_correct_barMc} implies
\[
\Bigl\{
\overline M_c(g_c(X))
>
\tfrac{\varepsilon}{1-\varepsilon}
\Bigr\}
\subseteq
\bigl\{
\hat m(g_c(X))=y_c^*
\bigr\}.
\]
Therefore, with probability at least $1-\delta/4$ conditional on the Step~3 event,
\begin{equation}\label{eq:test_vs_P0_margin_pop}
\hat S_c(\alpha_{N,c})
\;\ge\;
\PP_{X\sim P_0^X}\Bigl[
\overline M_c(g_c(X))
>
\tfrac{\varepsilon}{1-\varepsilon}
\Bigr]
-
R_{\delta/4}(N^{\mathrm{test}}).
\end{equation}

\noindent\emph{Step 5.}
Define the fixed event
\[
A_c^{\mathrm{pop}}
\;:=\;
\Bigl\{
x\in\mathcal X:
\overline M_c(g_c(x))
>
\tfrac{\varepsilon}{1-\varepsilon}
\Bigr\}.
\]
Conditional on all training data outside collective $c$, the event $A_c^{\mathrm{pop}}$ is fixed. By neutrality and the sampling model, the pre-edit features
$\{X_i : i\in I_{N,c}\}$ are i.i.d.\ with law $P_0^X$,
conditionally on $n_{N,c}$. Applying Hoeffding’s inequality to the i.i.d.\
variables $\mathbf 1_{\{X_i\in A_c^{\mathrm{pop}}\}}$, $i\in I_{N,c}$, with failure budget $\delta/4$, we obtain
that, with probability at least $1-\delta/4$,
\begin{equation}\label{eq:P0_vs_collective_pre_pop}
\PP_{X\sim P_0^X}\bigl[X\in A_c^{\mathrm{pop}}\bigr]
\;\ge\;
\PP_{x\sim\hat P_{N,c}^{\mathrm{pre},X}}\bigl[x\in A_c^{\mathrm{pop}}\bigr]
-
R_{\delta/4}(n_{N,c}).
\end{equation}

On the Step~3 event, \eqref{eq:barMc_vs_tildeMc} implies the pointwise inclusion
\[
\Bigl\{
\widetilde M_c(g_c(x))
>
\tfrac{\varepsilon}{1-\varepsilon}
\Bigr\}
\subseteq
A_c^{\mathrm{pop}}.
\]
Hence,
\begin{equation}\label{eq:P0_vs_collective_pre_margin}
\PP_{X\sim P_0^X}\Bigl[
\overline M_c(g_c(X))
>
\tfrac{\varepsilon}{1-\varepsilon}
\Bigr]
\;\ge\;
\PP_{x\sim\hat P_{N,c}^{\mathrm{pre},X}}
\Bigl[
\widetilde M_c(g_c(x))
>
\tfrac{\varepsilon}{1-\varepsilon}
\Bigr]
-
R_{\delta/4}(n_{N,c}).
\end{equation}

\noindent\emph{Step 6.}
By the union bound, the Step~3 concentration event, \eqref{eq:test_vs_P0},
and \eqref{eq:P0_vs_collective_pre_pop} hold simultaneously with probability at least
$1-\delta$. Combining \eqref{eq:test_vs_P0_margin_pop} and
\eqref{eq:P0_vs_collective_pre_margin}, we conclude that, with probability at
least $1-\delta$,
\begin{align*}
\hat S_c(\alpha_{N,c})
&\ge
\PP_{x\sim\hat P_{N,c}^{\mathrm{pre},X}}
\Bigl[
\widetilde M_c\bigl(g_c(x)\bigr)
>
\tfrac{\varepsilon}{1-\varepsilon}
\Bigr]
-
R_{\delta/4}(n_{N,c})
-
R_{\delta/4}(N^{\mathrm{test}})
\\
&=
\PP_{x\sim\hat P_{N,c}^{\mathrm{pre},X}}
\Big[
\;
\alpha_{N,c}\Bigl(
\hat p_c^X\bigl(g_c(x)\bigr)
-
2R_{\tilde\delta}(n_{N,c})
\Bigr)
-
\sum_{d\neq c}\alpha_{N,d}\,\Delta_{c\to d}\bigl(g_c(x)\bigr)
\\[-0.1em]
&
-\alpha_{N,0}\,\Delta_{c,0}^{(n_{N,c})}\bigl(g_c(x)\bigr)-\;
2\alpha_{N,0}\Bigl(
R_{\tilde\delta}(n_{N,0}) + R_{\tilde\delta}(n_{N,c})
\Bigr)
-
\frac{\varepsilon}{1-\varepsilon}
>0
\Big]\\[-0.1em]
&-
R_{\delta/4}(n_{N,c})
-
R_{\delta/4}(N^{\mathrm{test}}),
\end{align*}
which is exactly \eqref{eq:multi_FL_bound_final_clean_preonly}. This completes
the proof.
\end{proofE}

\textbf{Impact of Strategies.}
The lower bound in Theorem~\ref{thm:multi_feature_label} depends on the collective strategy through the feature map $g_c$ and the resulting empirical measures $\hat P_{N,c}$ and $\hat P_{N,d}$ that enter $\hat p_c^X(\cdot)$ and $\Delta_{c\to d}(\cdot)$. Indeed, the \emph{signal} term $\alpha_{N,c}\hat p_c^X(g_c(x))$ increases when the strategy concentrates the collective edited mass on the set $\mathcal X_c^*$, while the \textit{overall weighted conflict} $\sum_{d\neq c}\alpha_{N,d}\Delta_{c\to d}(g_c(x))$ reflect how other collectives' edits create competing label associations at the same planted feature values. Under \textit{feature-label} planting, $h_c(x,y)=(g_c(x),y_c^*)$ forces all edited examples to carry $y_c^*$, typically boosting the signal term at planted values and eliminating within-collective label ambiguity at those values. Under \textit{feature-only} planting, labels are not changed, and only target-label points are mapped via $g_c$, which can strengthen separation by reducing the presence of non-$y_c^*$ labels inside $\mathcal X_c^*$ but may limit how much mass can be moved onto the planted set when $y\neq y_c^*$ points cannot be relabeled.

\textbf{On the Computability of the Bound.} The lower bound in~\eqref{eq:multi_FL_bound_final_clean_preonly} can be evaluated by the sample collective $I_{N,c}$ if it has access to its own pre-edit sample and strategy, and a limited set of global aggregate quantities. Concretely, collective $c$ can compute $\hat P_{N,c}^{\mathrm{pre}}$ and its marginal $\hat P_{N,c}^{\mathrm{pre},X}$ from $\{(X_i,Y_i):i\in I_{N,c}\}$, evaluate the proxy non-collective conflict $\Delta_{c,0}^{(n_{N,c})}(x^*)$ for $x^*\in\mathcal X_c^*$ using only $\hat P_{N,c}^{\mathrm{pre}}$ (under neutrality), and then estimate the outer probability $\PP_{x\sim \hat P_{N,c}^{\mathrm{pre},X}}[\cdot]$ as the fraction of its own pre-edit features for which the corresponding margin at $g_c(x)$ is positive. The remaining terms are deterministic given the radii, the confidence level $\delta$, and the suboptimality parameter $\varepsilon$. The substantive non-local requirements is the overall weighted conflict $\sum_{d\neq c}\alpha_{N,d}\,\Delta_{c\to d}\bigl(g_c(x)\bigr)$; in particular, for each $d\neq c$, it depends on the empirical mass $\alpha_{N,d}$ and the conflict  $\Delta_{c\to d}(x^*)$ at the relevant signal features $x^*=g_c(x)$, with $\Delta_{c\to d}(x^*)=\max_{y'\neq y_c^*}\big(\hat P_{N,d}(x^*,y')-\hat P_{N,d}(x^*,y_c^*)\big)$. Here, we propose different ways to relax these requirements based on how much information about other collectives a collective can access. The best-case scenario is the one in which the collective $I_{N,c}$ can access a single ``others'' term $\Delta_c^{\mathrm{others}}(x^*)$ that a platform (or trusted aggregator) publishes at the needed signal features without revealing per-collective distributions. Alternatively, the worst case scenario is upper-bounding $\sum_{d\neq c}\alpha_{N,d}\Delta_{c\to d}(x^*)$ by the total collective mass $\bar\alpha_N-\alpha_{N,c}$ (or by a known upper bound $\bar\alpha_N^{\max}-\alpha_{N,c}$), yielding a generally more conservative bound. Finally, an intermediate scenario is to assume mild symmetry or collective anonymity: while different collectives may adopt different edits, their post-edit behavior at the signal features $x^*=g_c(x)$ is statistically homogeneous in the sense that, from the perspective of collective $I_{N,c}$, another collective induces a conflict level that can be approximated by a proxy being a functional of the $I_{N,c}$'s pre-edit distribution. Under neutrality, $I_{N,c}$ can then estimate $\Delta_{c\to d}(x^*)$ from its own pre-edit sample by relying on such a proxy $\Phi$ so that $\Delta_{c\to d}(x^*) \approx \Phi\!\bigl(\hat P_{N,c}^{\mathrm{pre}},x^*\bigr)$,
for all  $d\neq c$ and relevant $x^*\in\mathcal X_c^*$, where $\Phi$ can be derived from a shared model class for strategies (e.g., edits drawn from a known family, or strategies that coincide up to relabeling/noise). Overall, this approach trades stronger structural assumptions (exchangeability/anonymous behavior) for a substantially weaker global information requirement.

\textbf{On the Consistency of the Bound.} In Appendix \ref{app:single_collective}, we show how the bound in Theorem \ref{thm:multi_feature_label} reduces to the single collective bounds in Theorems 3.3 and 3.5 of \cite{gauthier2025statistical} when $M=1$.

\vspace{-0.3cm}\section{Signal Unplanting}\label{sec:signal_unplanting}\vspace{-0.2cm}
We now analyze the case where a sample collective $I_{N,c}$ aims to \emph{remove} the association between the planted features $g_c(x)$ and its original target class $y_c^*$. Accordingly, each sample collective $I_{N,c}$ measures per-collective success as
\begin{equation}\label{eq:unplant_success}
\hat S_c(\alpha_{N,c})
\;:=\;
\PP_{x \sim \hat P^X_{N^{\textrm{test}}}}\!\left[\hat m(g_c(x)) \neq y_c^*\right].
\end{equation}
Throughout this section, we reuse the notation and objects from Section~\ref{sec:signal_planting} when possible and again employ and generalize the strategies proposed in \cite{gauthier2025statistical}.

\textbf{Naive Strategy.}
A simple way to increase \eqref{eq:unplant_success} is to pick any alternative label $y'_c\in \mathcal Y\setminus\{y_c^*\}$ and \emph{plant} the association $(g_c(x),y'_c)$ as explained in Section \ref{sec:signal_planting}. Indeed, for any such $y'_c$, it holds
\begin{equation}\label{eq:unplant_ge_plant}
\PP_{x \sim \hat P^X_{N^{\textrm{test}}}}\!\left[\hat m(g_c(x)) \neq y_c^*\right]
\;\ge\;
\PP_{x \sim \hat P^X_{N^{\textrm{test}}}}\!\left[\hat m(g_c(x)) = y'_c\right].
\end{equation}
Thus, the sample collective $I_{N,c}$ may choose $y'_c$ to maximize the right-hand side of \eqref{eq:unplant_ge_plant} and then directly leverage Theorem~\ref{thm:multi_feature_label} to perform unplanting. However, this reduction forces a single competing label at all planted features.

\textbf{Adaptive Strategy.} An alternative possibility is making the label $y'_c$ depend on the planted feature value $x^*\in\mathcal X_c^*$. Intuitively, at each $x^*$, the collective chooses a label $y'_c(x^*)$ that makes $y_c^*$ least competitive at that feature. To select $y'_c(x^*)$, the collective $I_{N,c}$ selects an estimation sub-collective $E_{N,c}\subseteq I_{N,c}$ of size $0 < n_{e,c}<n_{N,c}$, independently of all data. We denote with $R_{N,c}:=I_{N,c}\setminus E_{N,c}$ the remaining fraction of the collective of size $n_{N,c}-n_{e,c}$. Thus, the corresponding pre-edit empirical measures are defined as
\begin{gather}
\hat P_{N,c}^{\mathrm{pre},E}
:=
\frac{1}{n_{e,c}}\sum_{i\in E_{N,c}}\delta_{Z_i},\quad \hat P_{N,c}^{\mathrm{pre},R}
:=
\frac{1}{n_{N,c}-n_{e,c}}\sum_{i\in R_{N,c}}\delta_{Z_i}.
\end{gather}
In this setting, for each $x^*\in \mathcal X_c^*$, the feature-dependent selected label is chosen as
\begin{equation}\label{eq:unplant_hat_y}
y'_c(x^*)
\;:=\;
\underset{y \in \mathcal{Y} \setminus \{y_c^*\}}{\arg\max}\;
\hat P_{N,c}^{\mathrm{pre},E}(x^*,y),
\end{equation}
The resulting \textit{adaptive unplanting strategy} is the feature--label map
\begin{equation}\label{eq:unplant_strat}
h_c(x,y)
\;:=\;
\bigl(g_c(x),\, y'_c(g_c(x))\bigr).
\end{equation}
In other words, collective $I_{N,c}$ plants the same features as in signal planting, but assigns to each planted feature $x^*$ a competing label $y'_c(x^*)\neq y_c^*$ chosen using  $E_{N,c}$.

For each $x^*\in\mathcal X_c^*$, we define the empirical (unplanting) conflict terms with other collectives as
\begin{align}\label{eq:unp_conflict}
\Delta^{\mathrm{unp}}_{c\leftarrow d}(x^*)
\;&:=\;
\hat P_{N,d}(x^*,y_c^*)-\hat P_{N,d}\!\bigl(x^*,y'_c(x^*)\bigr), \quad d\in\{1,\dots,M\},\ d\neq c,
\end{align}
and the corresponding non-collective approximate conflict using only data independent of the label-selection step as
\begin{align}\label{eq:unp_conflict0_proxy}
\Delta&^{\mathrm{unp},(n_{N,c}-n_{e,c})}_{c,0}(x^*)
\;=\; \hat P_{N,c}^{\mathrm{pre},R}(x^*,y_c^*)
-\hat P_{N,c}^{\mathrm{pre},R}\!\bigl(x^*,y'_c(x^*)\bigr).
\end{align}
Intuitively, $\Delta^{\mathrm{unp}}_{c\leftarrow d}(x^*)$ is large when other collectives still support $(x^*,y_c^*)$ more than $(x^*,y'_c(x^*))$, thereby resisting unplanting at $x^*$. The approximate conflict in \eqref{eq:unp_conflict0_proxy} plays the analogous role for non-collective users under neutrality. We can now state our main result for this section.

\begin{theoremE}[Signal unplanting with multiple collectives][end, restate]\label{thm:multi_unplant}
\textcolor{black}{Under the neutrality assumption, assuming that $\hat m$ is
$\varepsilon$-contamination-suboptimal on $\mathcal X_c^*$ under
$\hat P_N$,} assuming all collectives employ the adaptive unplanting strategy from \eqref{eq:unplant_strat} with estimation splits $\{E_{N,c}\}_c$, then, with probability at least $1-\delta$ over the draw of the training and test samples, with $\delta > 0$, and the random choice of $\{E_{N,c}\}_c$, the per-collective success of $I_{N,c}$ satisfies
\begin{align}\label{eq:multi_unplant_bound}
\hat S_c(\alpha_{N,c})
\;&\ge\;\PP_{x\sim\hat P_{N,c}^{\mathrm{pre},R,X}}
\Big[
\;
\alpha_{N,c}\Bigl(
\hat p_c^X\bigl(g_c(x)\bigr)
-
2R_{\tilde\delta}(n_{N,c})
\Bigr)
-
\sum_{d\neq c}\alpha_{N,d}\,
\Delta^{\mathrm{unp}}_{c\leftarrow d}\bigl(g_c(x)\bigr)
\nonumber \\
&-
\alpha_{N,0}\,\Delta^{\mathrm{unp},(n_{N,c}-n_{e,c})}_{c,0}\bigl(g_c(x)\bigr)-
2\alpha_{N,0}\Bigl(
R_{\tilde\delta}(n_{N,0}) + R_{\tilde\delta}(n_{N,c}-n_{e,c})
\Bigr)
-
\frac{\varepsilon}{1-\varepsilon}
>0
\Big] \nonumber\\
&-
R_{\delta/4}(n_{N,c}-n_{e,c})
-
R_{\delta/4}(N^{\mathrm{test}}),
\end{align}
where $K_c \;:=\; |\mathcal{X}_c^*|\,|\mathcal Y|$ and $\tilde\delta \;:=\; \frac{\delta}{10K_c}$.
\end{theoremE}
\begin{proofE}
The proof follows closely the same six-step structure as
Theorem~\ref{thm:multi_feature_label}. We work conditionally on the realized
sample sizes
\[
\{n_{N,0},n_{N,1},\dots,n_{N,M}\}.
\]
The split $E_{N,c}$ is chosen independently of the data. For concentration
bounds involving the data-dependent labels $y'_c(\cdot)$, we further condition
on the estimation sigma-field
\[
\mathcal F_{E,c}
:=
\sigma\!\left(E_{N,c},\{Z_i:i\in E_{N,c}\}\right).
\]
Under this conditioning, the map $y'_c(\cdot)$ is fixed, while the holdout
pre-edit points in $R_{N,c}$ remain i.i.d.\ from $P_0$ and independent of the
label-selection step. The concentration bound for the post-edit feature
marginal $\hat p_c^X$ is justified separately, before conditioning on the
observed estimation data, because the feature part of the adaptive strategy is
the fixed map $g_c$ and does not depend on $y'_c(\cdot)$.

\noindent\emph{Step 1.}
We first prove a simple margin implication.

\begin{lemma}[Contamination margin implication: unplanting]
Assume $\hat m$ is $\varepsilon$-contamination-suboptimal on
$\mathcal X'$ under $\hat P_N$. If $x^*\in\mathcal X'$ and there exists
$y'\neq y^*$ such that
\[
\hat P_N(x^*,y')
>
\hat P_N(x^*,y^*)+\frac{\varepsilon}{1-\varepsilon},
\]
then
\[
\hat m(x^*)\neq y^*.
\]
\end{lemma}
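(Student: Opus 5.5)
The statement to prove is the unplanting margin lemma. I will mirror the planting margin implication from the proof of Theorem~\ref{thm:multi_feature_label}, with the roles of $y^*$ and the competing label swapped.

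First, I invoke $\varepsilon$-contamination-suboptimality on $\mathcal X'$. This gives a probability measure $R_N$ such that $\hat Q_N=(1-\varepsilon)\hat P_N+\varepsilon R_N$ and $\hat m(x)\in\argmax_y \hat Q_N(y\mid x)$ for every $x\in\mathcal X'$ with $\hat Q_N^X(x)>0$.

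Next, I fix the $y'\neq y^*$ given by the hypothesis and expand
\[
\hat Q_N(x^*,y')-\hat Q_N(x^*,y^*)=(1-\varepsilon)\bigl(\hat P_N(x^*,y')-\hat P_N(x^*,y^*)\bigr)+\varepsilon\bigl(R_N(x^*,y')-R_N(x^*,y^*)\bigr).
\]
Both values of $R_N$ lie in $[0,1]$, so the second bracket is at least $-1$. The assumed margin makes the first term strictly larger than $(1-\varepsilon)\cdot\frac{\varepsilon}{1-\varepsilon}=\varepsilon$. Here I use $\varepsilon<1/2$, so that $1-\varepsilon>0$. Together these give $\hat Q_N(x^*,y')>\hat Q_N(x^*,y^*)$.

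I also have to confirm the side condition $\hat Q_N^X(x^*)>0$ so that the argmax clause applies. The hypothesis gives $\hat P_N(x^*,y')>\frac{\varepsilon}{1-\varepsilon}\ge 0$, hence $\hat P_N^X(x^*)>0$. Since $1-\varepsilon>0$, this yields $\hat Q_N^X(x^*)\ge(1-\varepsilon)\hat P_N^X(x^*)>0$.

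Dividing both sides of the strict inequality by $\hat Q_N^X(x^*)$ gives $\hat Q_N(y'\mid x^*)>\hat Q_N(y^*\mid x^*)$. So $y^*$ is not a maximizer of $\hat Q_N(\cdot\mid x^*)$, and since $\hat m(x^*)$ must be a maximizer, $\hat m(x^*)\neq y^*$.

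I do not expect a real obstacle. The only point needing care is the positivity of $\hat Q_N^X(x^*)$, which is what licenses the argmax condition. Unlike the planting lemma, we do not need $y'$ to be the unique maximizer, only that $y^*$ fails to be a maximizer, so no comparison over all labels is required.
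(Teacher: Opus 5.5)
Your proof is correct and follows the paper's argument essentially step for step. Like the paper, you expand $\hat Q_N(x^*,y')-\hat Q_N(x^*,y^*)$ via the contamination representation and bound the $R_N$ difference below by $-1$ to get strict positivity. You then check $\hat Q_N^X(x^*)>0$ from $\hat P_N^X(x^*)>0$ and conclude that $y^*$ cannot be an argmax of $\hat Q_N(\cdot\mid x^*)$.
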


\begin{proof}
By $\varepsilon$-contamination-suboptimality, there exist probability
measures $\hat Q_N$ and $R_N$ on $\mathcal X\times\mathcal Y$ such that
\[
\hat Q_N=(1-\varepsilon)\hat P_N+\varepsilon R_N,
\]
and such that, for every $x\in\mathcal X'$ with $\hat Q_N^X(x)>0$,
\[
\hat m(x)\in \argmax_{y\in\mathcal Y}\hat Q_N(y\mid x).
\]
Fix $x^*\in\mathcal X'$ and suppose that, for some $y'\neq y^*$,
\[
\hat P_N(x^*,y')
>
\hat P_N(x^*,y^*)+\frac{\varepsilon}{1-\varepsilon}.
\]
Then
\[
\hat P_N(x^*,y')-\hat P_N(x^*,y^*)
>
\frac{\varepsilon}{1-\varepsilon}.
\]
Using the contamination representation, we have
\begin{align*}
\hat Q_N(x^*,y')-\hat Q_N(x^*,y^*)
&=
(1-\varepsilon)
\bigl(
\hat P_N(x^*,y')-\hat P_N(x^*,y^*)
\bigr)
\\
&\quad
+
\varepsilon
\bigl(
R_N(x^*,y')-R_N(x^*,y^*)
\bigr).
\end{align*}
Since $R_N$ is a probability measure,
\[
R_N(x^*,y')-R_N(x^*,y^*)\ge -1.
\]
Therefore,
\begin{align*}
\hat Q_N(x^*,y')-\hat Q_N(x^*,y^*)
&>
(1-\varepsilon)\frac{\varepsilon}{1-\varepsilon}
-\varepsilon
\\
&=0.
\end{align*}
Hence
\[
\hat Q_N(x^*,y')>\hat Q_N(x^*,y^*).
\]
Moreover, the strict inequality
\[
\hat P_N(x^*,y')>
\hat P_N(x^*,y^*)+\frac{\varepsilon}{1-\varepsilon}
\]
implies $\hat P_N^X(x^*)>0$. Since $\varepsilon<1$, we also have
\[
\hat Q_N^X(x^*)
=
(1-\varepsilon)\hat P_N^X(x^*)+\varepsilon R_N^X(x^*)
>0.
\]
Thus the conditional probabilities $\hat Q_N(y\mid x^*)$ are well defined.
Because
\[
\hat Q_N(x^*,y')>\hat Q_N(x^*,y^*),
\]
and both labels share the same marginal denominator $\hat Q_N^X(x^*)$, we get
\[
\hat Q_N(y'\mid x^*)>\hat Q_N(y^*\mid x^*).
\]
Therefore $y^*$ cannot be an argmax of $\hat Q_N(\cdot\mid x^*)$. Since
$\hat m(x^*)$ is chosen among the argmax labels under $\hat Q_N$, it follows
that
\[
\hat m(x^*)\neq y^*.
\]
\end{proof}

Since $\hat m$ is $\varepsilon$-contamination-suboptimal on
$\mathcal X_c^*$ under $\hat P_N$, the margin implication above gives that,
for any $x^*\in\mathcal X_c^*$,
\begin{equation}\label{eq:unplant_margin_implies_pred}
\hat P_N(x^*,y'_c(x^*))
>
\hat P_N(x^*,y_c^*)+\frac{\varepsilon}{1-\varepsilon}
\quad\Longrightarrow\quad
\hat m(x^*)\neq y_c^*.
\end{equation}

\noindent\emph{Step 2.}
Fix $x^*\in \mathcal{X}_c^*$. Using the empirical mixture decomposition of
$\hat P_N$ and the adaptive feature--label unplanting property
\[
\hat P_{N,c}(x^*,y_c^*)=0,
\qquad
\hat P_{N,c}(x^*,y'_c(x^*))=\hat p_c^X(x^*),
\]
we obtain
\begin{align*}
&\hat P_N(x^*,y'_c(x^*))-\hat P_N(x^*,y_c^*)
\\
&\quad =
\alpha_{N,c}\hat p_c^X(x^*)
+
\sum_{d\neq c}
\alpha_{N,d}
\Bigl(
\hat P_{N,d}(x^*,y'_c(x^*))
-
\hat P_{N,d}(x^*,y_c^*)
\Bigr)
\\
&\qquad
+
\alpha_{N,0}
\Bigl(
\hat P_{N,0}(x^*,y'_c(x^*))
-
\hat P_{N,0}(x^*,y_c^*)
\Bigr)
\\
&\quad =
\alpha_{N,c}\hat p_c^X(x^*)
-
\sum_{d\neq c}
\alpha_{N,d}\Delta^{\mathrm{unp}}_{c\leftarrow d}(x^*)
-
\alpha_{N,0}\Delta^{\mathrm{unp}}_{c,0}(x^*),
\end{align*}
where
\[
\Delta^{\mathrm{unp}}_{c,0}(x^*)
:=
\hat P_{N,0}(x^*,y_c^*)
-
\hat P_{N,0}(x^*,y'_c(x^*)).
\]
Define the exact empirical unplanting margin
\[
M_c^{\mathrm{unp}}(x^*)
:=
\alpha_{N,c}\hat p_c^X(x^*)
-
\sum_{d\neq c}\alpha_{N,d}\Delta^{\mathrm{unp}}_{c\leftarrow d}(x^*)
-
\alpha_{N,0}\Delta^{\mathrm{unp}}_{c,0}(x^*).
\]
Then, by \eqref{eq:unplant_margin_implies_pred},
\begin{equation}\label{eq:unplant_marg_cond_implies_exact}
x^*\in\mathcal X_c^*,
\quad
M_c^{\mathrm{unp}}(x^*)>\frac{\varepsilon}{1-\varepsilon}
\quad\Longrightarrow\quad
\hat m(x^*)\neq y_c^*.
\end{equation}

\noindent\emph{Step 3.}
Define the population signal prevalence
\[
p_{c,\mathrm{pop}}^X(x^*)
:=
(g_c)_{\#}P_0^X(x^*)
=
P_0^X\!\left(g_c^{-1}(\{x^*\})\right),
\qquad x^*\in\mathcal X_c^*.
\]
Since the feature part of the adaptive unplanting strategy is the fixed map
$g_c$, the empirical post-edit feature marginal $\hat p_c^X$ is the empirical
marginal of the fixed pushforward of the collective pre-edit sample under
$g_c$. Hence, conditionally on $n_{N,c}$, Hoeffding's inequality gives the
two-sided feature-marginal concentration bounds
\begin{align}
\hat p_c^X(x^*)
&\ge
p_{c,\mathrm{pop}}^X(x^*) - R_{\tilde\delta}(n_{N,c}),
\qquad \forall x^*\in\mathcal X_c^*,
\label{eq:unplant_hoeff_signal_collective}
\\[0.3em]
\hat p_c^X(x^*)
&\le
p_{c,\mathrm{pop}}^X(x^*) + R_{\tilde\delta}(n_{N,c}),
\qquad \forall x^*\in\mathcal X_c^*.
\label{eq:unplant_hoeff_signal_collective_upper}
\end{align}

Next, condition on the estimation sigma-field $\mathcal F_{E,c}$. Under this
conditioning, the map $y'_c(\cdot)$ is fixed, and the holdout pre-edit points
in $R_{N,c}$ are i.i.d.\ from $P_0$. Define the population non-collective
unplanting conflict
\[
\Delta_{c,0}^{\mathrm{unp,pop}}(x^*)
:=
P_0(x^*,y_c^*)
-
P_0(x^*,y'_c(x^*)).
\]
Together with the two feature-marginal concentration families
\eqref{eq:unplant_hoeff_signal_collective}--\eqref{eq:unplant_hoeff_signal_collective_upper},
Hoeffding's inequality and a union bound over the scalar one-sided events
displayed below imply that, with probability at least $1-\delta/2$, all the
concentration inequalities in this step hold simultaneously:
\begin{align}
\hat P_{N,0}(x^*,y)
&\le
P_0(x^*,y) + R_{\tilde\delta}(n_{N,0}),
\qquad \forall (x^*,y)\in\mathcal X_c^*\times\mathcal Y,
\label{eq:unplant_hoeff_noncoll_upper}
\\[0.3em]
P_0(x^*,y)
&\le
\hat P_{N,0}(x^*,y) + R_{\tilde\delta}(n_{N,0}),
\qquad \forall (x^*,y)\in\mathcal X_c^*\times\mathcal Y,
\label{eq:unplant_hoeff_noncoll_lower}
\\[0.3em]
\hat P_{N,c}^{\mathrm{pre},R}(x^*,y)
&\le
P_0(x^*,y) + R_{\tilde\delta}(n_{N,c}-n_{e,c}),
\qquad \forall (x^*,y)\in\mathcal X_c^*\times\mathcal Y,
\label{eq:unplant_hoeff_holdout_upper}
\\[0.3em]
P_0(x^*,y)
&\le
\hat P_{N,c}^{\mathrm{pre},R}(x^*,y) + R_{\tilde\delta}(n_{N,c}-n_{e,c}),
\qquad \forall (x^*,y)\in\mathcal X_c^*\times\mathcal Y.
\label{eq:unplant_hoeff_holdout_lower}
\end{align}
Indeed, the two feature-marginal families contain
$2|\mathcal X_c^*|$ scalar events and the four joint-mass families contain
$4|\mathcal X_c^*||\mathcal Y|=4K_c$ scalar events. Since $|\mathcal Y|\ge2$,
the total number of scalar events is at most $5K_c$. With
$\tilde\delta=\delta/(10K_c)$, the total failure probability is therefore at
most $\delta/2$.

Using
\eqref{eq:unplant_hoeff_noncoll_upper}--\eqref{eq:unplant_hoeff_noncoll_lower},
we obtain, for every $x^*\in\mathcal X_c^*$,
\begin{equation}\label{eq:unplant_delta_exact_vs_pop}
\Delta_{c,0}^{\mathrm{unp}}(x^*)
\le
\Delta_{c,0}^{\mathrm{unp,pop}}(x^*)
+
2R_{\tilde\delta}(n_{N,0}).
\end{equation}
Using
\eqref{eq:unplant_hoeff_holdout_upper}--\eqref{eq:unplant_hoeff_holdout_lower},
we obtain, for every $x^*\in\mathcal X_c^*$,
\begin{equation}\label{eq:unplant_delta_pop_vs_proxy}
\Delta_{c,0}^{\mathrm{unp,pop}}(x^*)
\le
\Delta_{c,0}^{\mathrm{unp},(n_{N,c}-n_{e,c})}(x^*)
+
2R_{\tilde\delta}(n_{N,c}-n_{e,c}).
\end{equation}

Define the fixed population-based unplanting margin
\begin{align*}
\overline M_c^{\mathrm{unp}}(x^*)
:=
\alpha_{N,c}\Bigl(
p_{c,\mathrm{pop}}^X(x^*)
-
R_{\tilde\delta}(n_{N,c})
\Bigr)
&-
\sum_{d\neq c}\alpha_{N,d}\Delta^{\mathrm{unp}}_{c\leftarrow d}(x^*)\\
&-
\alpha_{N,0}\Bigl(
\Delta_{c,0}^{\mathrm{unp,pop}}(x^*)
+
2R_{\tilde\delta}(n_{N,0})
\Bigr),
\end{align*}
and the corrected empirical unplanting margin
\begin{align*}
\widetilde M_c^{\mathrm{unp}}(x^*)
:=
\alpha_{N,c}\Bigl(
\hat p_c^X(x^*)
-
2R_{\tilde\delta}(n_{N,c})
\Bigr)
&-
\sum_{d\neq c}\alpha_{N,d}\Delta^{\mathrm{unp}}_{c\leftarrow d}(x^*)
-
\alpha_{N,0}\Delta_{c,0}^{\mathrm{unp},(n_{N,c}-n_{e,c})}(x^*)\\
&-
2\alpha_{N,0}\Bigl(
R_{\tilde\delta}(n_{N,0})
+
R_{\tilde\delta}(n_{N,c}-n_{e,c})
\Bigr).
\end{align*}

On the concentration event above, for all $x^*\in\mathcal X_c^*$,
\begin{equation}\label{eq:unplant_exact_vs_pop_margin}
M_c^{\mathrm{unp}}(x^*)
\ge
\overline M_c^{\mathrm{unp}}(x^*),
\end{equation}
and, by
\eqref{eq:unplant_hoeff_signal_collective_upper}
and
\eqref{eq:unplant_delta_pop_vs_proxy},
\begin{equation}\label{eq:unplant_pop_vs_tilde_margin}
\overline M_c^{\mathrm{unp}}(x^*)
\ge
\widetilde M_c^{\mathrm{unp}}(x^*).
\end{equation}

Combining \eqref{eq:unplant_marg_cond_implies_exact} and
\eqref{eq:unplant_exact_vs_pop_margin}, we obtain
\begin{equation}\label{eq:unplant_marg_cond_implies_pop}
x^*\in\mathcal X_c^*,
\quad
\overline M_c^{\mathrm{unp}}(x^*)>
\frac{\varepsilon}{1-\varepsilon}
\quad\Longrightarrow\quad
\hat m(x^*)\neq y_c^*.
\end{equation}
By \eqref{eq:unplant_pop_vs_tilde_margin}, this further implies
\begin{equation}\label{eq:unplant_marg_cond_implies_tilde}
x^*\in\mathcal X_c^*,
\quad
\widetilde M_c^{\mathrm{unp}}(x^*)>
\frac{\varepsilon}{1-\varepsilon}
\quad\Longrightarrow\quad
\hat m(x^*)\neq y_c^*.
\end{equation}

\noindent\emph{Step 4.}
Let
\[
\varphi(x)
:=
\mathbf 1_{\{\hat m(g_c(x))\neq y_c^*\}},
\qquad x\in\mathcal X.
\]
By definition,
\[
\hat S_c(\alpha_{N,c})
=
\frac{1}{N^{\mathrm{test}}}
\sum_{i=1}^{N^{\mathrm{test}}}\varphi(X_i^{\mathrm{test}}).
\]
Conditionally on the training sample, the variables
$\varphi(X_i^{\mathrm{test}})$ are i.i.d.\ Bernoulli random variables.
Applying Hoeffding's inequality with failure budget $\delta/4$, we obtain
that, with probability at least $1-\delta/4$,
\begin{equation}\label{eq:unplant_test_vs_P0}
\hat S_c(\alpha_{N,c})
\ge
\PP_{X\sim P_0^X}\bigl[\hat m(g_c(X))\neq y_c^*\bigr]
-
R_{\delta/4}(N^{\mathrm{test}}).
\end{equation}

Since $g_c(X)\in\mathcal X_c^*$ for all $X\in\mathcal X$,
\eqref{eq:unplant_marg_cond_implies_pop} implies
\[
\left\{
\overline M_c^{\mathrm{unp}}(g_c(X))
>
\frac{\varepsilon}{1-\varepsilon}
\right\}
\subseteq
\bigl\{
\hat m(g_c(X))\neq y_c^*
\bigr\}.
\]
Therefore, on the Step~3 concentration event,
\begin{equation}\label{eq:unplant_test_vs_P0_margin_pop}
\hat S_c(\alpha_{N,c})
\ge
\PP_{X\sim P_0^X}
\left[
\overline M_c^{\mathrm{unp}}(g_c(X))
>
\frac{\varepsilon}{1-\varepsilon}
\right]
-
R_{\delta/4}(N^{\mathrm{test}}).
\end{equation}

\noindent\emph{Step 5.}
Define the fixed event
\[
A_c^{\mathrm{unp,pop}}
:=
\left\{
x\in\mathcal X:
\overline M_c^{\mathrm{unp}}(g_c(x))
>
\frac{\varepsilon}{1-\varepsilon}
\right\}.
\]
Conditional on $\mathcal F_{E,c}$ and on all training data outside the holdout
split $R_{N,c}$, the event $A_c^{\mathrm{unp,pop}}$ is fixed. By neutrality
and the sampling model, the pre-edit features
$\{X_i:i\in R_{N,c}\}$ are i.i.d.\ with law $P_0^X$, conditionally on
$n_{N,c}-n_{e,c}$. Applying Hoeffding's inequality to the i.i.d.\ variables
$\mathbf 1_{\{X_i\in A_c^{\mathrm{unp,pop}}\}}$, $i\in R_{N,c}$, with failure
budget $\delta/4$, we obtain that, with probability at least $1-\delta/4$,
\begin{equation}\label{eq:unplant_P0_vs_holdout_pop}
\PP_{X\sim P_0^X}\bigl[X\in A_c^{\mathrm{unp,pop}}\bigr]
\ge
\PP_{x\sim\hat P_{N,c}^{\mathrm{pre},R,X}}
\bigl[x\in A_c^{\mathrm{unp,pop}}\bigr]
-
R_{\delta/4}(n_{N,c}-n_{e,c}).
\end{equation}

On the Step~3 concentration event, \eqref{eq:unplant_pop_vs_tilde_margin}
implies the pointwise inclusion
\[
\left\{
\widetilde M_c^{\mathrm{unp}}(g_c(x))
>
\frac{\varepsilon}{1-\varepsilon}
\right\}
\subseteq
A_c^{\mathrm{unp,pop}}.
\]
Hence,
\begin{equation}\label{eq:unplant_P0_vs_holdout_margin}
\PP_{X\sim P_0^X}
\left[
\overline M_c^{\mathrm{unp}}(g_c(X))
>
\frac{\varepsilon}{1-\varepsilon}
\right]
\ge
\PP_{x\sim\hat P_{N,c}^{\mathrm{pre},R,X}}
\left[
\widetilde M_c^{\mathrm{unp}}(g_c(x))
>
\frac{\varepsilon}{1-\varepsilon}
\right]
-
R_{\delta/4}(n_{N,c}-n_{e,c}).
\end{equation}

\noindent\emph{Step 6.}
By the union bound, the Step~3 concentration event,
\eqref{eq:unplant_test_vs_P0}, and \eqref{eq:unplant_P0_vs_holdout_pop} hold
simultaneously with probability at least $1-\delta$. Combining
\eqref{eq:unplant_test_vs_P0_margin_pop} and
\eqref{eq:unplant_P0_vs_holdout_margin}, we conclude that, with probability at
least $1-\delta$,
\begin{align*}
\hat S_c(\alpha_{N,c})
&\ge
\PP_{x\sim\hat P_{N,c}^{\mathrm{pre},R,X}}
\left[
\widetilde M_c^{\mathrm{unp}}\bigl(g_c(x)\bigr)
>
\frac{\varepsilon}{1-\varepsilon}
\right]
-
R_{\delta/4}(n_{N,c}-n_{e,c})
-
R_{\delta/4}(N^{\mathrm{test}})
\\
&=
\PP_{x\sim\hat P_{N,c}^{\mathrm{pre},R,X}}
\Big[
\;
\alpha_{N,c}\Bigl(
\hat p_c^X\bigl(g_c(x)\bigr)
-
2R_{\tilde\delta}(n_{N,c})
\Bigr)
-
\sum_{d\neq c}\alpha_{N,d}\,
\Delta^{\mathrm{unp}}_{c\leftarrow d}\bigl(g_c(x)\bigr)
\\
&\qquad
-
\alpha_{N,0}\,
\Delta_{c,0}^{\mathrm{unp},(n_{N,c}-n_{e,c})}
\bigl(g_c(x)\bigr)
-
2\alpha_{N,0}
\Bigl(
R_{\tilde\delta}(n_{N,0})
+
R_{\tilde\delta}(n_{N,c}-n_{e,c})
\Bigr)
\\
&\qquad
-
\frac{\varepsilon}{1-\varepsilon}
>0
\Big]
-
R_{\delta/4}(n_{N,c}-n_{e,c})
-
R_{\delta/4}(N^{\mathrm{test}}),
\end{align*}
which is exactly \eqref{eq:multi_unplant_bound}. This completes the proof.
\end{proofE}

\textcolor{black}{\textbf{On the Computability of the Bound.} The discussion from Section \ref{sec:signal_planting} largely carries over. In particular, the lower bound in~\eqref{eq:multi_unplant_bound} can be evaluated by the sample collective $I_{N,c}$ once it has access to its own pre-edit sample and the estimation split $(E_{N,c},R_{N,c})$. Concretely, collective $I_{N,c}$ can compute $\hat P_{N,c}^{\mathrm{pre},E}$ and $\hat P_{N,c}^{\mathrm{pre},R}$ from $\{(X_i,Y_i):i\in I_{N,c}\}$ and the split $(E_{N,c},R_{N,c})$, select the feature-dependent alternative labels $y'_c(x^*)$ using only the estimation sub-sample $\hat P_{N,c}^{\mathrm{pre},E}$ as in~\eqref{eq:unplant_hat_y}, and then evaluate the approximate non-collective conflict $\Delta^{\mathrm{unp},(n_{N,c}-n_{e,c})}_{c,0}(x^*)$ for $x^*\in\mathcal X_c^*$ using only the holdout sub-sample $\hat P_{N,c}^{\mathrm{pre},R}$, which is independent of the label-selection step. As in signal planting, the outer probability $\PP_{x\sim \hat P_{N,c}^{\mathrm{pre},R,X}}[\cdot]$ is then estimated as the fraction of holdout pre-edit features for which the corresponding margin at $g_c(x)$ is positive. Again, the remaining terms are deterministic given the radii, the confidence level $\delta$, and the suboptimality parameter $\varepsilon$ (and here also the split size $n_{e,c}$ through $R_{\tilde\delta}(n_{N,c}-n_{e,c})$). The substantive non-local requirement remains the overall weighted cross-collective term $\sum_{d\neq c}\alpha_{N,d}\Delta^{\mathrm{unp}}_{c\leftarrow d}(g_c(x))$, which, for each $d\neq c$, depends on the empirical mass $\alpha_{N,d}$ and the post-edit conflict
$\Delta^{\mathrm{unp}}_{c\leftarrow d}(x^*)=\hat P_{N,d}(x^*,y_c^*)-\hat P_{N,d}(x^*,y'_c(x^*))$ at the relevant signal features $x^*=g_c(x)$. Accordingly, the same relaxation directions as in signal planting apply.}

\textbf{On the Consistency of the Bound.} Similarly to Section \ref{sec:signal_planting}, in Appendix \ref{app:single_collective}, we show how the bound in Theorem \ref{thm:multi_unplant} reduces to the single collective bounds in Theorems 3.7 of \cite{gauthier2025statistical} when $M=1$.
\begin{figure*}[h] 
    \centering
    \begin{subfigure}[]{0.3\linewidth}
        \includegraphics[width=\linewidth]{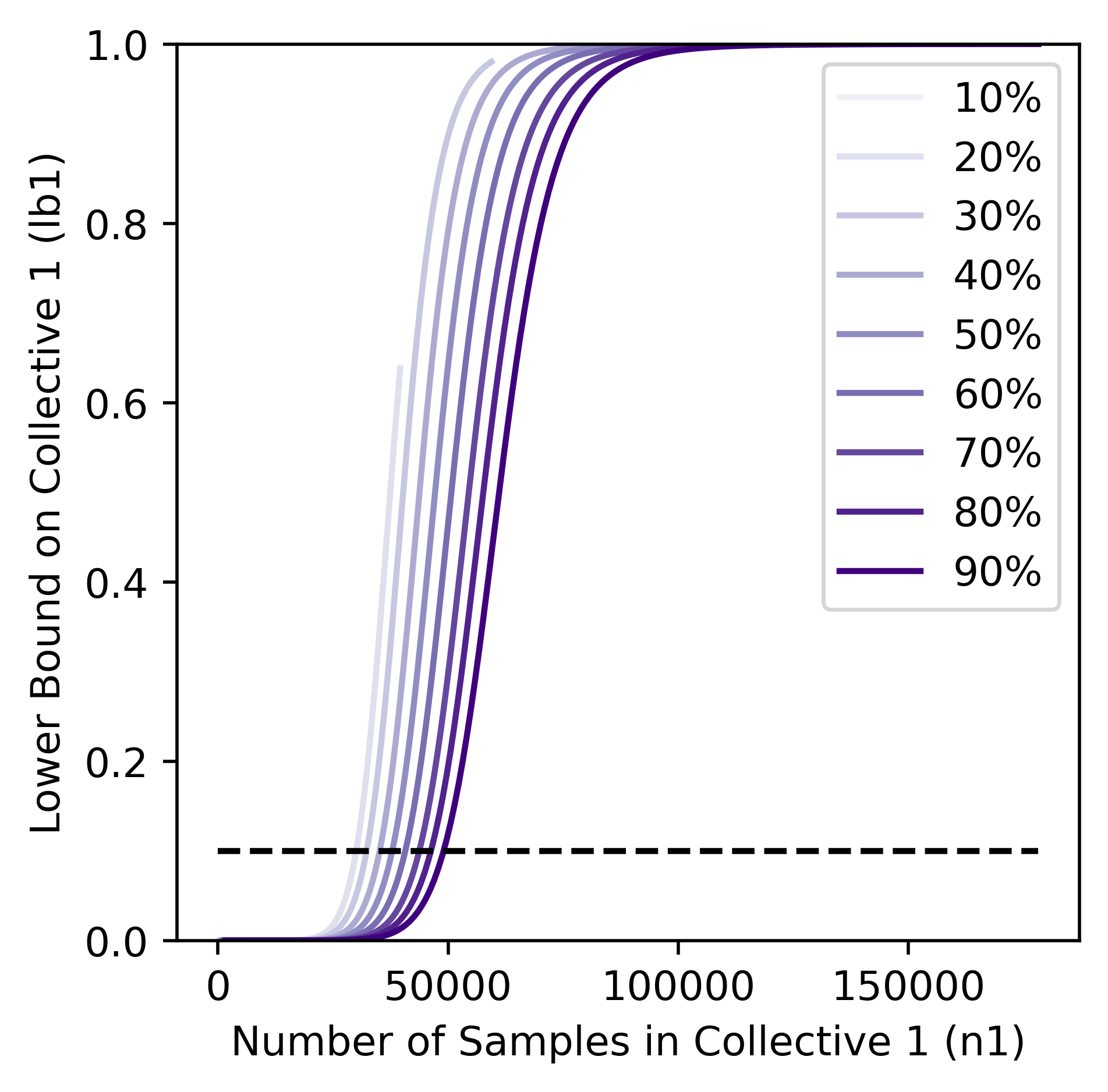}
        \caption{Collective 1: 17 vs 15}
    \end{subfigure}
    \centering
    \begin{subfigure}[]{0.3\linewidth}
        \includegraphics[width=\linewidth]{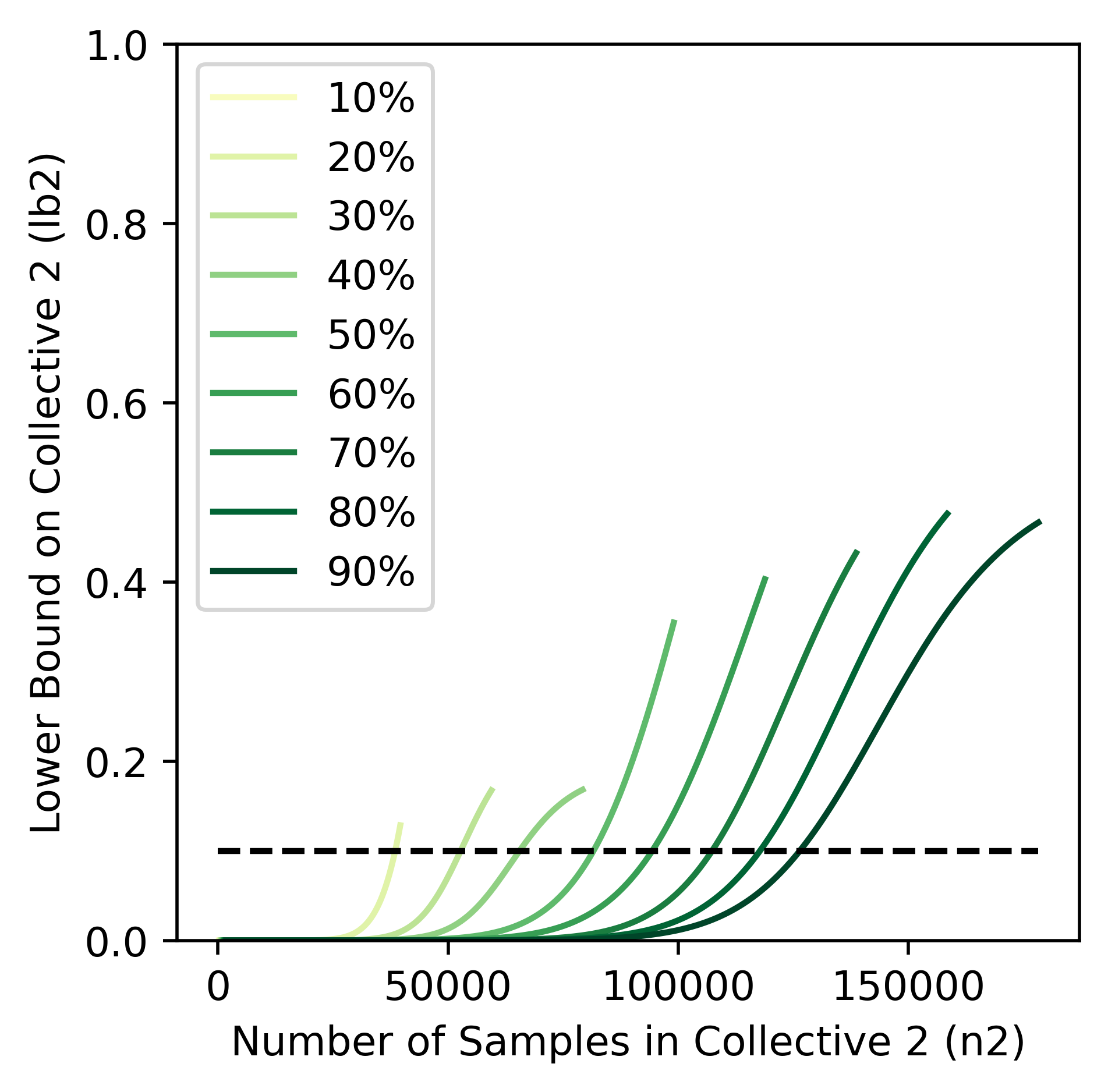}
        \caption{Collective 2: 17 vs 15}
    \end{subfigure}
    \centering
    \begin{subfigure}[]{0.3\linewidth}
\includegraphics[width=\linewidth]{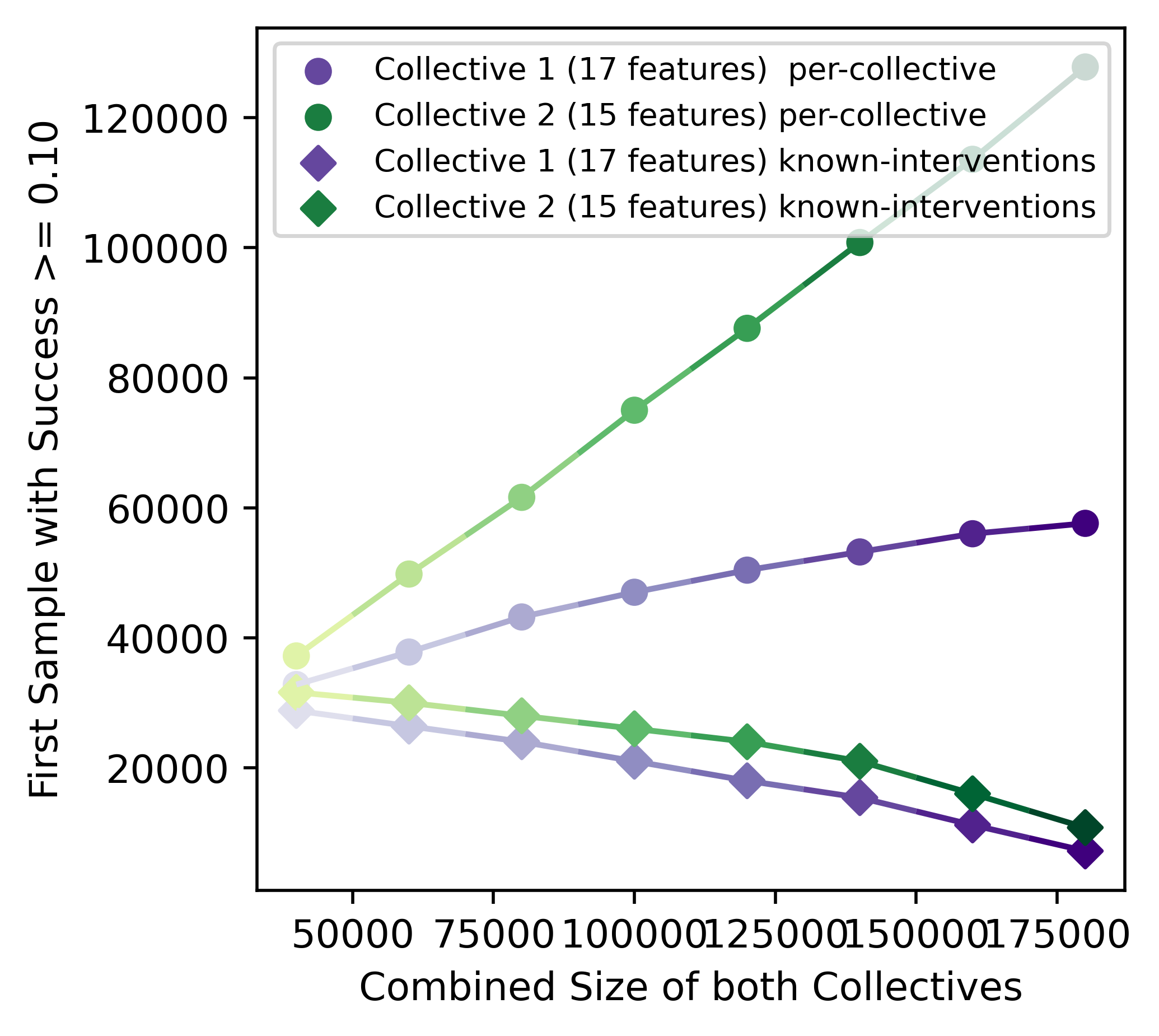}
        \caption{Combined view 17 vs 15}
    \end{subfigure}
    % 17 vs 13

    \begin{subfigure}[]{0.3\linewidth}
        \includegraphics[width=\linewidth]{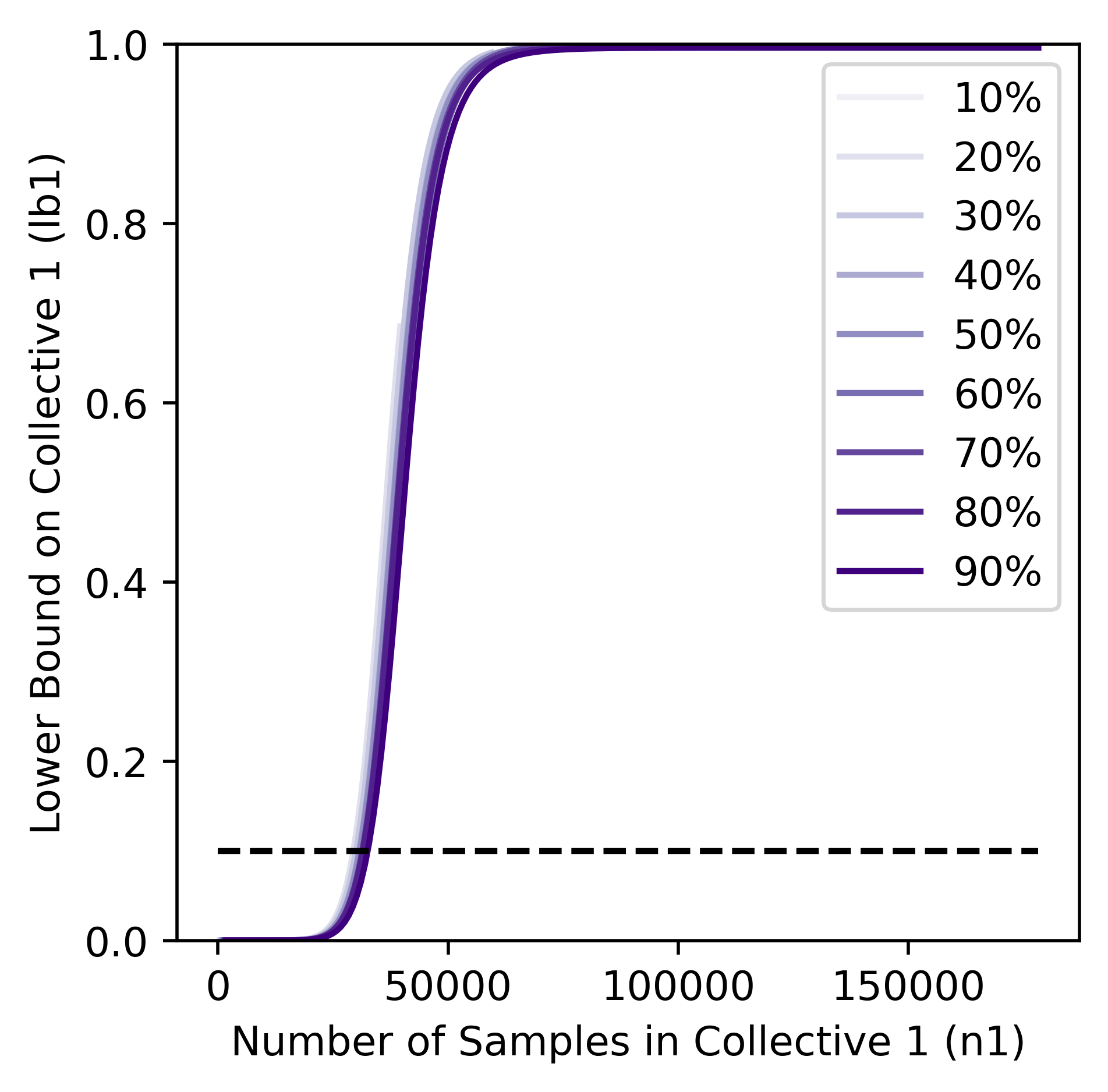}
        \caption{Collective 1: 17 vs 13}
    \end{subfigure}
    \centering
    \begin{subfigure}[]{0.3\linewidth}
        \includegraphics[width=\linewidth]{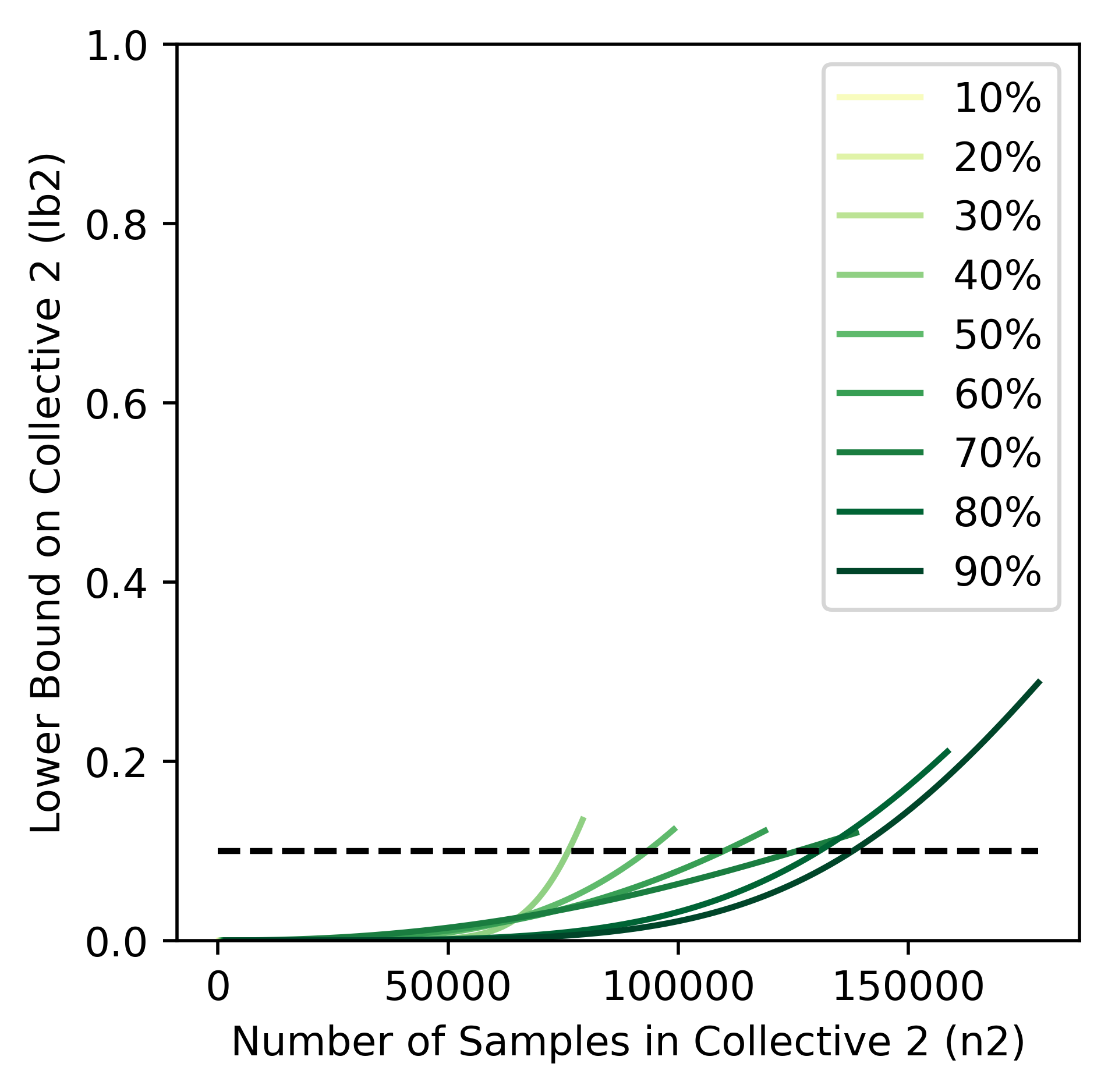}
        \caption{Collective 2: 17 vs 13}
    \end{subfigure}
    \centering
    \begin{subfigure}[]{0.3\linewidth}
        \includegraphics[width=\linewidth]{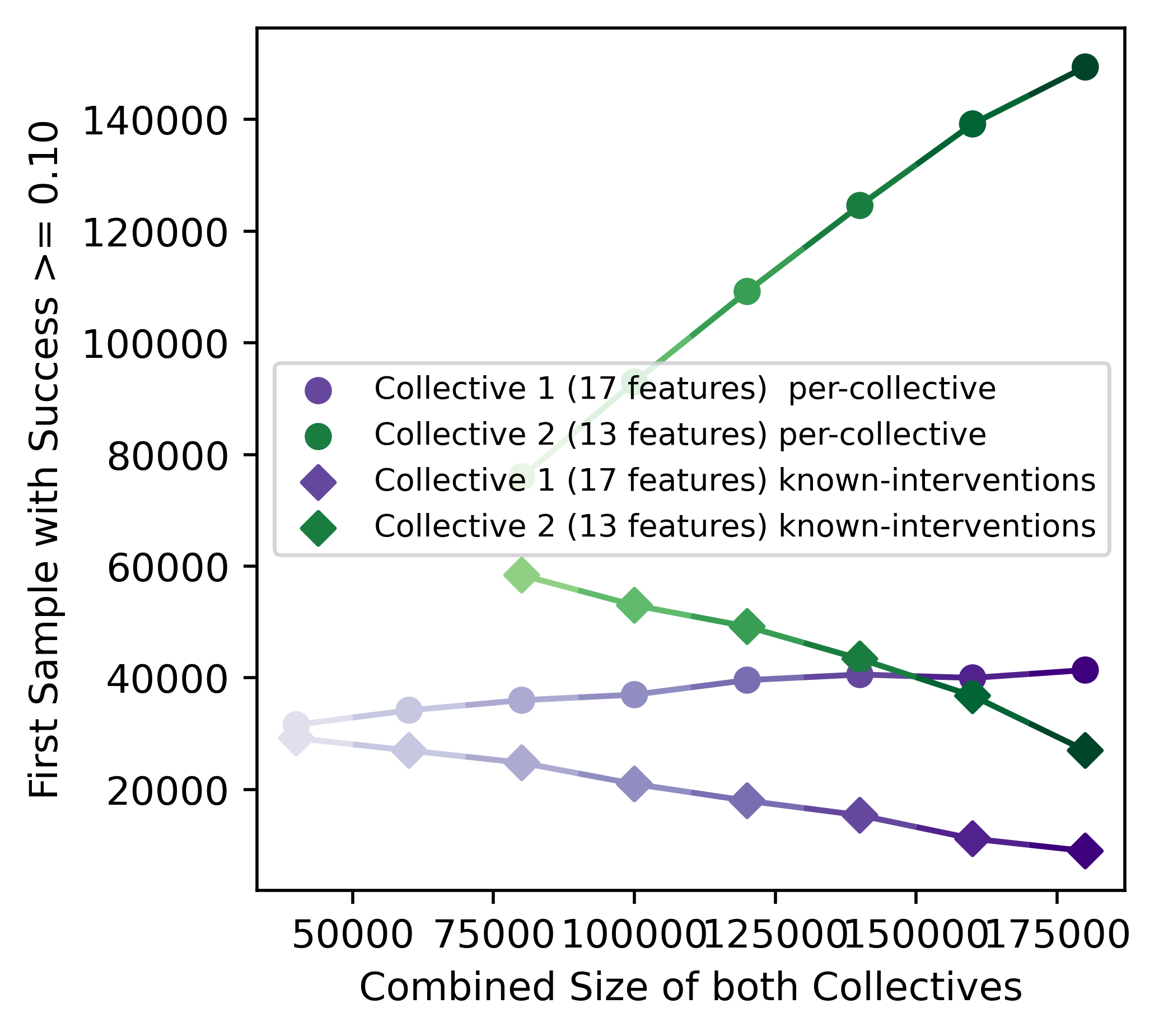}
        \caption{Combined view 17 vs 13}
    \end{subfigure}
    
    \caption{The bound on success for two collectives under varying collective sizes and feature overlap.}
    \label{fig:feature_analysis}
\end{figure*}

\vspace{-0.3cm}\section{Experiments}\vspace{-0.2cm}
\label{sec:experiments}

Our experiments build upon the protocol of \citep{gauthier2025statistical} and a motivating example in climate adaptation for smart cities (detailed in Appendix~\ref{sec:aca_climate_adaptation}). We simulate a platform collecting neighborhood-level adaptation records, each comprising 18 features (e.g., adaptation strategy type, implementation timeline, funding sources) and one of four label classes: Greening \& Shade Expansion, Reflective Surface Retrofits, Cooling Services \& Outreach, and Green Stormwater Infrastructure. Details are presented in Appendix~\ref{app:simulated_adaptation_records}. We focus on the feature-label regime in the main body and defer feature-only results to Appendix~\ref{app:experiment-results}.

\textbf{Feature Overlap and Goal Alignment with Two Collectives.} We study how goal alignment affects the per-collective lower bounds for $M=2$, assuming each collective can evaluate the aggregated weighted conflict term (see Section~\ref{sec:signal_planting}). \textit{Collective~1} always plants 17 feature values, while \textit{Collective~2} plants (i) the same 17, (ii) 15, or (iii) 13 of them. Planted labels differ across collectives, so overlap directly induces conflict. We fix $N=2{,}000{,}000$, $N^{\mathrm{test}}=100{,}000$, and vary $\bar\alpha_N\in\{0.3,0.6,0.9\}$, sweeping the mass ratio $\alpha_{N,1}/\alpha_{N,2}$ at each level. The \textit{per-collective} and \textit{known-interventions} settings instantiate two different levels of access to the non-local cross-collective term discussed in the computability paragraph of Section~\ref{sec:signal_planting}. In the \textit{per-collective} setting, each collective is assumed to have enough information to evaluate the relevant weighted conflict term using the other collective's actual planted strategy and empirical post-edit distribution. This corresponds to the direct evaluation of the cross-collective contribution in the bound. In the \textit{known-interventions} setting, by contrast, the collective does not know which intervention the other collective implements, but only the set of possible interventions. We therefore replace the exact cross-collective conflict by its average over this intervention set. Thus, the known-interventions experiments should be read as a relaxation of the full-information bound: they preserve the same dependence on overlap, mass allocation, and feature count, while requiring less detailed knowledge of the other collective's realized strategy. Success curves are smoothed via a Hill curve, and we also report the minimum sample size for 10\% success to ease comparisons~\cite{NEURIPS2019_eb1e7832}. Figure~\ref{fig:feature_analysis} reports the 17-vs.-15, (a) and (b), and 17-vs.-13, (d) and (e), regimes. The symmetric 17-vs.-17 case is in Appendix~\ref{app:experiment-results}. A clear trade-off frontier emerges: shifting mass toward one collective raises its bound at the other's expense, while increasing total mass shifts the frontier outward. As overlap decreases, the frontier skews: \textit{Collective~2} loses signal prevalence from fewer planted features, while \textit{Collective~1} benefits from uncontested features that weaken cross-collective conflict. The symmetric case confirms that, absent this asymmetry, performance is driven primarily by the mass split. Finally, from Figure~\ref{fig:feature_analysis} (c) and (f),  when \textit{Collective~1} has access to more features than \textit{Collective~2}, it requires a near-constant amount of samples to obtain 10\% lower bound on success regardless of the collective size or mass-ratio in the per-collective case.

\textbf{Mass Concentration and Goal Alignment with Multiple Collectives.} To illustrate how global success responds to overlap beyond two collectives, we now consider $M=4$ collectives (one per class) sharing the same 18-feature planted pattern under feature-label planting. 
\begin{wrapfigure}{r}{0.41\linewidth} 
    \centering
    \includegraphics[width=0.95\linewidth]{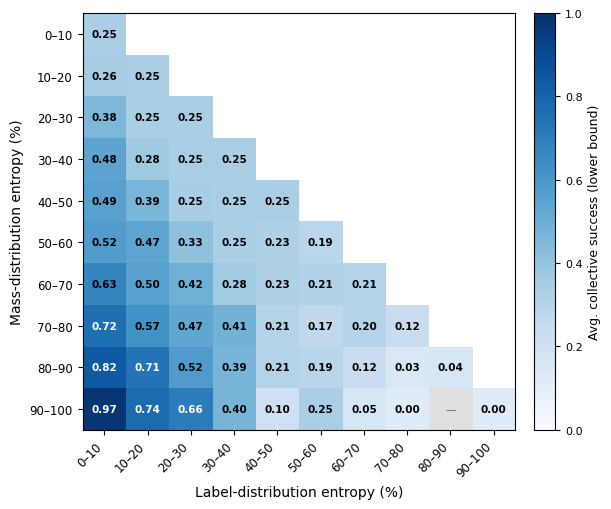}
    \caption{Success by Mass/Label Entropy}
    \label{fig:heatmap}
    \vspace{-.2cm}
\end{wrapfigure}
We sweep 2{,}000 configurations by combining 50 mass distributions with 40 label assignments (among $4^4$), fixing $N=200{,}000$, $N^{\mathrm{test}}=20{,}000$, and $\bar\alpha_N=0.4$. 
Each configuration is summarized by two normalized entropies: a \emph{mass entropy} measuring how uniform the collective sizes are, and a \emph{mass-weighted label entropy} measuring how concentrated the total collective mass is on the same target label. Both are computed as the normalized Shannon entropy of the relevant distribution.
The average global success $\hat S_{\mathrm{w}}$ from \eqref{eq:global_avg} is displayed in Figure~\ref{fig:heatmap}. Missing entries correspond to bins with no sampled configurations. The main effect is driven by label alignment: global success is highest when label entropy is low, indicating that concentrating effort on the same target label substantially reduces destructive interference. As label entropy increases, success degrades sharply, with the most severe drop when both entropies are high, i.e., collectives of comparable size targeting different labels strongly conflict at shared planted features, driving the global guarantee near zero. Mass concentration plays a more nuanced role: at intermediate label entropies, higher mass entropy can lift aggregate success once label overlap is partially resolved, but without label alignment, uniform mass alone is insufficient to prevent interference.
\vspace{-0.4cm}\section{Conclusion}\vspace{-0.3cm}\label{sec:conclusion}
In this work, we introduced the first statistical framework for Algorithmic Collective Action with multiple collectives. We focused on collective action in classification, studying how multiple collectives can bias a classifier to learn an association between an altered version of their features and a chosen, possibly overlapping, set of target classes. We provided quantitative results showcasing the interesting interplay of collectives' sizes and their alignment of values. Finally, we discussed a potential use case for our framework in the space of interventions for climate adaptation, and validated the bounds numerically, exposing multiple tradeoffs. Our framework opens several research directions. It would be interesting to study mixed-objective settings in which some collectives plant signals while others unplant them. One of the limitations of our framework is the initial assumption that collectives are disjoint, and it would be important to study what happens if such an assumption is relaxed. Similar to \cite{gauthier2025statistical}, we also assume that the feature space $\mathcal{X}$ is finite, and removing this assumption would facilitate the derivation of more general bounds for signal planting and unplanting. Finally, our framework has an inherently collaborative motivation, from which the described global success metrics stem, but adversarial/game-theoretic systems with multiple antagonist groups  can be of interest.

\bibliographystyle{plainnat}
\renewcommand{\bibsection}{\subsubsection*{References}}
\bibliography{ref}

@inproceedings{NEURIPS2019_eb1e7832,
 author = {Raghu, Maithra and Zhang, Chiyuan and Kleinberg, Jon and Bengio, Samy},
 booktitle = {Advances in Neural Information Processing Systems},
 editor = {H. Wallach and H. Larochelle and A. Beygelzimer and F. d\textquotesingle Alch\'{e}-Buc and E. Fox and R. Garnett},
 pages = {},
 publisher = {Curran Associates, Inc.},
 title = {Transfusion: Understanding Transfer Learning for Medical Imaging},
 url = {https://proceedings.neurips.cc/paper_files/paper/2019/file/eb1e78328c46506b46a4ac4a1e378b91-Paper.pdf},
 volume = {32},
 year = {2019}
}

@inproceedings{karan2025sync_or_sink,
  title     = {Sync or Sink: Bounds on Algorithmic Collective Action with Noise and Multiple Groups},
  author    = {Karan, Aditya and Kalle, Prabhat and Vincent, Nicholas and Sundaram, Hari},
  booktitle = {NeurIPS 2025 (Non-archival) Workshop on Algorithmic Collective Action (ACA)},
  year      = {2025},

}

@inproceedings{fedorova2025altruistic,
  title     = {Altruistic Collective Action in Recommender Systems},
  author    = {Fedorova, Ekaterina and Kitch, Madeline Celi and Podimata, Chara},
  booktitle = {NeurIPS 2025 (Non-archival) Workshop on Algorithmic Collective Action (ACA)},
  year      = {2025},
}

@inproceedings{lewandowska2025workers,
  title     = {Workers vs. The Algorithm: Simulating Collective Action in Gig-Economy Platforms},
  author    = {Lewandowska, Kristina},
  booktitle = {NeurIPS 2025 (Non-archival) Workshop on Algorithmic Collective Action (ACA)},
  year      = {2025},
}

@inproceedings{hussain2025empowering,
  title     = {Empowering Users Together: Connecting Algorithmic Collective Action and Explainable AI},
  author    = {Hussain, Ayana and Thacker, Cole Michael and Zhao, Patrick and Karan, Aditya and Vincent, Nicholas},
  booktitle = {NeurIPS 2025 (Non-archival) Workshop on Algorithmic Collective Action (ACA)},
  year      = {2025},
}

@inproceedings{
Sagawa2020Distributionally,
title={Distributionally Robust Neural Networks},
author={Shiori Sagawa and Pang Wei Koh and Tatsunori B. Hashimoto and Percy Liang},
booktitle={International Conference on Learning Representations},
year={2020},
url={https://openreview.net/forum?id=ryxGuJrFvS}
}

@article{ai2024artificial,
  title={Artificial intelligence risk management framework: Generative artificial intelligence profile},
  author={NIST},
  journal={NIST Trustworthy and Responsible AI Gaithersburg, MD, USA},
  year={2024}
}

@article{perrigo2023openaiKenyaWorkers,
  author  = {Perrigo, Billy},
  title   = {Exclusive: OpenAI Used Kenyan Workers on Less Than \$2 Per Hour to Make ChatGPT Less Toxic},
  journal = {TIME},
  year    = {2023},
  month   = jan,
  note    = {Published Jan 18, 2023. Accessed Feb 24, 2026.}
}

@inproceedings{nakka2025pii,
  title={PII-Scope: A Comprehensive Study on Training Data Privacy Leakage in Pretrained LLMs},
  author={Nakka, Krishna Kanth and Frikha, Ahmed and Mendes, Ricardo and Jiang, Xue and Zhou, Xuebing},
  booktitle={Proceedings of the 14th International Joint Conference on Natural Language Processing and the 4th Conference of the Asia-Pacific Chapter of the Association for Computational Linguistics},
  pages={3731--3765},
  year={2025}
}

@misc{Okunyte2026AndroidAIAppLeak,
  author       = {Okunyt{\.e}, Paulina},
  title        = {Android AI app exposes nearly 2M user images and videos: anyone can watch your videos},
  howpublished = {Cybernews},
  year         = {2026},
  month        = feb,
  note         = {Published: 19 Feb 2026; last updated: 20 Feb 2026. Accessed: 24 Feb 2026},
}

@inproceedings{
battiloro2025algorithmic,
title={Algorithmic Collective Action with Multiple Collectives},
author={Claudio Battiloro and Pietro Greiner and Bret Nestor and Oumaima Amezgar and Francesca Dominici},
booktitle={NeurIPS 2025 (Non-archival) Workshop on Algorithmic Collective Action},
year={2025},
}

@inproceedings{bendov2025minority,
  title     = {Fairness for the People, by the People: Minority Collective Action},
  author    = {Ben-Dov, Omri and Samadi, Samira and Sanyal, Amartya and Tifrea, Alexandru},
  booktitle = {NeurIPS 2025 (Non-archival) Workshop on Algorithmic Collective Action (ACA)},
  year      = {2025},

}

@inproceedings{HardtEtAl2023,
  author    = {Moritz Hardt and Eric Mazumdar and Celestine Mendler-D{\"u}nner and Tijana Zrnic},
  title     = {Algorithmic collective action in machine learning},
  booktitle = {International Conference on Machine Learning (ICML)},
  pages     = {12570--12586},
  publisher = {PMLR},
    volume = 	 {202},
  year      = {2023}
}

@inproceedings{VincentEtAl2020,
  title={Data leverage: A framework for empowering the public in its relationship with technology companies},
  author={Vincent, Nicholas and Li, Hanlin and Tilly, Nicole and Chancellor, Stevie and Hecht, Brent},
  booktitle={Proceedings of the 2021 ACM Conference on Fairness, Accountability, and Transparency},
  pages={215--227},
  year={2021}
}

@inproceedings{SiggEtAl2024,
author = {Sigg, Dorothee and Hardt, Moritz and Mendler-D\"{u}nner, Celestine},
title = {Decline Now: A Combinatorial Model for Algorithmic Collective Action},
year = {2025},
publisher = {Association for Computing Machinery},
doi = {10.1145/3706598.3713966},
booktitle = {Proceedings of the 2025 CHI Conference on Human Factors in Computing Systems},
articleno = {912},
numpages = {17},
location = {
},
series = {CHI '25}
}

@book{Latour2005Reassembling,
  author    = {Bruno Latour},
  title     = {Reassembling the Social: An Introduction to Actor-Network-Theory},
  year      = {2005},
  publisher = {Oxford University Press},
  address   = {Oxford},
  isbn      = {9780199256044}
}

@incollection{Haraway1991CyborgManifesto,
  title={A cyborg manifesto: Science, technology, and socialist-feminism in the late twentieth century},
  author={Haraway, Donna},
  booktitle={The transgender studies reader},
  pages={103--118},
  year={2013},
  publisher={Routledge}
}

@misc{capastrategies_heatwatch,
  author       = {{CAPA Strategies, LLC}},
  title        = {Heat Mapping -- Heat Watch},
  year         = {2025},
}

@misc{noaa_heatwatch,
  author       = {{National Oceanic and Atmospheric Administration (NOAA)}},
  title        = {Mapping Campaigns},
  howpublished = {HEAT.gov},
  year         = {2025},
}

@misc{rotterdam_digital_twin_vision,
  author       = {{City of Rotterdam, CIO Office}},
  title        = {Rotterdam in Transformation: Vision on the Digital City 1.0},
  year         = {2024},
}

@inproceedings{devrio2024building,
  title={Building, shifting, \& employing power: A taxonomy of responses from below to algorithmic harm},
  author={DeVrio, Alicia and Eslami, Motahhare and Holstein, Kenneth},
  booktitle={Proceedings of the 2024 ACM Conference on Fairness, Accountability, and Transparency},
  pages={1093--1106},
  year={2024}
}

@misc{California2020CPRA,
  author = {{State of California}},
  title  = {California Privacy Rights Act of 2020 (CPRA)},
  year   = {2020},
  note   = {Proposition 24, approved November 3, 2020}
}

@inproceedings{utz2019informed,
  title={(Un) informed consent: Studying GDPR consent notices in the field},
  author={Utz, Christine and Degeling, Martin and Fahl, Sascha and Schaub, Florian and Holz, Thorsten},
  booktitle={Proceedings of the 2019 acm sigsac conference on computer and communications security},
  pages={973--990},
  year={2019}
}

@inproceedings{selbst2019fairness,
  title={Fairness and abstraction in sociotechnical systems},
  author={Selbst, Andrew D and Boyd, Danah and Friedler, Sorelle A and Venkatasubramanian, Suresh and Vertesi, Janet},
  booktitle={Proceedings of the conference on fairness, accountability, and transparency},
  pages={59--68},
  year={2019}
}

@misc{Canada2000PIPEDA,
  author = {{Government of Canada}},
  title  = {Personal Information Protection and Electronic Documents Act},
  year   = {2000},
  note   = {S.C. 2000, c. 5}
}

@article{EU2016GDPR,
  author  = {{European Parliament and Council of the European Union}},
  title   = {Regulation (EU) 2016/679 of the European Parliament and of the Council of 27 April 2016 on the protection of natural persons with regard to the processing of personal data and on the free movement of such data, and repealing Directive 95/46/EC (General Data Protection Regulation)},
  journal = {Official Journal of the European Union},
  number  = {L 119},
  pages   = {1--88},
  year    = {2016},
  month   = may,
}

@book{barocas2023fairness,
  title={Fairness and machine learning: Limitations and opportunities},
  author={Barocas, Solon and Hardt, Moritz and Narayanan, Arvind},
  year={2023},
  publisher={MIT press}
}

@inproceedings{karan2025algorithmic,
  title={Algorithmic Collective Action with Two Collectives},
  author={Karan, Aditya and Vincent, Nicholas and Karahalios, Karrie and Sundaram, Hari},
  booktitle={Proceedings of the 2025 ACM Conference on Fairness, Accountability, and Transparency},
  pages={1468--1483},
  year={2025}
}

@inproceedings{
gauthier2025statistical,
title={Statistical Collusion by Collectives on Learning Platforms},
author={Etienne Gauthier and Francis Bach and Michael I. Jordan},
booktitle={Forty-second International Conference on Machine Learning (ICML)},
year={2025}
}

@article{baumann2024algorithmic,
  title={Algorithmic collective action in recommender systems: promoting songs by reordering playlists},
  author={Baumann, Joachim and Mendler-D{\"u}nner, Celestine},
  journal={Advances in Neural Information Processing Systems},
  volume={37},
  pages={119123--119149},
  year={2024}
}

@inproceedings{
solanki2025crowding,
title={Crowding Out The Noise: Algorithmic Collective Action Under Differential Privacy},
author={Rushabh Solanki and Meghana Bhange and Ulrich A{\"\i}vodji and Elliot Creager},
booktitle={NeurIPS 2025 (Non-archival) Workshop on Algorithmic Collective Action},
year={2025},
}

@inproceedings{BenDovEtAl2024,
  author    = {Omri Ben{-}Dov and Jake Fawkes and Samira Samadi and Amartya Sanyal},
  title     = {The role of learning algorithms in collective action},
  booktitle = {International Conference on Machine Learning (ICML)},
  pages     = {3443--3461},
  publisher = {PMLR},
  year      = {2024},
  volume = 	 {235},
}
\appendix
\onecolumn

\section{Proofs}
\printProofs

\vspace{-0.3cm}\section{A Climate Adaptation Use Case}\vspace{-0.3cm}
\label{sec:aca_climate_adaptation}
In this section, we provide a real-world use case for ACA with multiple collectives in the space of interventions for climate adaptation. Consider a national adaptation intake platform used by a city to prioritize interventions at the neighborhood level. The platform stores standardized neighborhood-level adaptation records to identify which interventions are most needed. These records combine official sources (e.g., inspections, infrastructure inventories, incident reports) with community inputs (e.g., forum discussions, neighborhood association reports), and are stored in a standardized format with a fixed set of categorical fields, so that each neighborhood is summarized in the same structured way. The model recommends one intervention per neighborhood from a finite menu $\cal Y$. The training corpus is large as it contains millions of neighborhood planning units (e.g., census tracts/block groups) pooled across many municipalities, while the decision remains neighborhood-level. Assume there are $M$ collectives, being grassroots associations. Each collective is run by residents from different neighborhoods that face similar conditions (e.g., recurrent flooding, extreme heat exposure, limited cooling access) and therefore has a nuanced, ground-level understanding of local needs. The collective seeks to coordinately change the adaptation records corresponding to its member neighborhoods (and the shared community inputs that feed into those records) to steer the classifier toward the intervention $y^*_c \in \cal Y$ it has judged to be most necessary for those neighborhoods. Each collective thus deploys an editing strategy $h_c$ to achieve its actionable goal. For example, in a feature-label regime, a collective with residents from waterfront neighborhoods documents chronic flooding with household infiltration tests and evidence collected by interviewing community members, then updates the neighborhoods' records so the categorical fields consistently reflect that risk profile and explicitly requests rain gardens/bioswales in the intervention request field. Similarly, collectives would likely need to resort to feature-only strategies if the city’s intake pipeline does not allow community label edits, i.e.,\ if the intervention field is locked and labeled by staff/administrative codes. Global success would then emphasize equity (lifting the worst-served neighborhood with $\hat S_{\min}$) or scale (a mass-weighted average with $\hat S_{\mathrm{w}}$). Even if this example is structured in a slightly simplistic way, there is precedent for community-reported concerns to be ingested into decision-making tools. NOAA and CAPA-led ``Heat Watch'' campaigns have produced heat maps in \(\,120+\) communities and feed them into planning and public-health practice \cite{noaa_heatwatch,capastrategies_heatwatch}. European cities are deploying digital twins for climate-resilience analysis and participatory planning, e.g., \cite{rotterdam_digital_twin_vision}.
\vspace{-0.3cm}
\section{Simulated Neighborhood Adaptation-Records Dataset}\vspace{-0.2cm}
\label{app:simulated_adaptation_records}

We build on the codebase of \cite{gauthier2025statistical} and  generate a dataset of $\mathbf{3{,}000{,}000}$ instances. Each instance represents a neighborhood-level adaptation record for a \emph{planning unit} (e.g., a census tract/block group), characterized by multiple categorical features that capture key aspects of local conditions, planned measures, governance, and operational capacity. Conceptually, the base dataset is generated by drawing instances \emph{i.i.d.} from a population distribution over planning units.

The features included in this synthetic dataset are as follows:
\begin{itemize}[leftmargin=*]
    \item \text{Adaptation Strategy Type:} the type of adaptation strategy, categorized as \textit{Infrastructure Upgrade}, \textit{Nature-Based Solution}, \textit{Policy/Regulation}, \textit{Community Outreach}, \textit{Emergency Preparedness}, \textit{Technology Deployment}, \textit{Land-Use Planning}, or \textit{Public Health Services}.
    \item \text{Primary Energy Source for Adaptation Measures:} the primary energy source used by the deployed measures, which can be \textit{Renewable (Solar)}, \textit{Renewable (Wind)}, \textit{Grid Electricity}, or \textit{None}.
    \item \text{Implementation Timeline:} the timeline for implementation, classified as \textit{Short-term (<1 yr)}, \textit{Medium-term (1--5 yr)}, or \textit{Long-term (>5 yr)}.
    \item \text{Geographic Scope:} the geographic scope of the measure, identified as \textit{Neighborhood}, \textit{City-wide}, or \textit{Regional}.
    \item \text{Resilience Rating:} the resilience rating of the neighborhood profile, rated as \textit{Very Low}, \textit{Low}, \textit{Moderate}, \textit{High}, or \textit{Very High}.
    \item \text{Construction Material for Adaptation Projects:} the primary construction material, which can be \textit{Concrete}, \textit{Steel}, or \textit{Timber}.
    \item \text{Public Awareness / Communication Platform:} the level of the communication platform, ranging from \textit{None}, \textit{Basic Alerts (SMS/email)}, \textit{Interactive Dashboard}, or \textit{Integrated Smart-city Platform}.
    \item \text{Funding Source / Jurisdiction:} the source/jurisdiction of funding, that we will denote by \textit{F1}, \textit{F2}, \textit{F3}, \textit{F4}, and \textit{F5}.
    \item \text{Project Maintenance Horizon:} the maintenance horizon, available in options of \textit{1 year}, \textit{3 years}, \textit{5 years}, or \textit{10 years}.
    \item \text{Number of Access Points to Critical Infrastructure:} the number of access points, which can be \textit{0 (sealed)}, \textit{1--2}, or \textit{3--4}.
    \item \text{Capacity of Adaptation Facility:} the capacity of an adaptation facility, with options for \textit{<50}, \textit{50--200}, \textit{200--500}, or \textit{500+}.
    \item \text{Cooling Provision:} indicates the type of cooling provision (\textit{None} or \textit{Mechanical (AC)}).
    \item \text{Decision-Support / Early-Warning System:} the level of decision-support, which can be \textit{None}, \textit{Basic (threshold alerts)}, or \textit{Advanced (predictive modeling)}.
    \item \text{Infrastructure Surface Type:} the surface type, categorized as \textit{Standard Pavement}, \textit{Permeable Pavement}, or \textit{Heat-Reflective Coating}.
    \item \text{Green Roof / Solar Installation:} indicates whether a roof retrofit is present (\textit{None} or \textit{Retrofit Present}).
    \item \text{Public Notification System:} the public notification system, which can be \textit{None}, \textit{Sirens}, \textit{Public Address (PA) System}, or \textit{Multilingual Broadcast}.
    \item \text{Automated Operation / Smart Controls:} indicates whether smart controls are present (\textit{None} or \textit{Fully automated (AI-driven)}).
    \item \text{Interoperability with External Sensors / IoT:} indicates whether IoT interoperability is present (\textit{No} or \textit{Full}).
\end{itemize}

Additionally, each adaptation record is assigned an \text{Intervention Recommendation} label based on a scoring system that considers various factors such as resilience rating, geographic scope, cooling provision, maintenance horizon, and others. The possible outcomes are classified into four categories: Greening \& Shade Expansion, Reflective Surface Retrofits, Cooling Services \& Outreach, and  Green Stormwater Infrastructure. % <-- placeholders

In our experiments, we then generate separate datasets for the collectives sampled without replacement from this base dataset. Unless otherwise specified, we choose $N=2{,}000{,}000$ for the training set and $N^{\text{test}}=100{,}000$ for the test set. In all the experiments, we set $\delta=0.05$ and $\varepsilon=0$. % <-- placeholders if you want to change

\vspace{-0.3cm}
\section{Reduction to the single-collective bounds of \cite{gauthier2025statistical}}\label{app:single_collective}\vspace{-0.2cm}

\paragraph{Relation to the single-collective bounds of \cite{gauthier2025statistical}.}
We show that our multi-collective guarantees specialize to the corresponding
single-collective guarantees of \cite{gauthier2025statistical} when $M=1$,
for both \emph{signal planting} and \emph{signal unplanting}. Throughout this
appendix we condition on the realized sample sizes, as in the main text.

\paragraph{Signal Planting.}
Set $M=1$ and $c=1$ in Theorem~\ref{thm:multi_feature_label}. Write
\[
n := n_{N,1},\qquad
\alpha := \alpha_{N,1}=\frac{n}{N},\qquad
\alpha_{N,0}=1-\alpha=\frac{N-n}{N},
\]
and denote $g:=g_1$, $y^*:=y_1^*$, and
$\mathcal X^*:=\mathcal X_1^*=g(\mathcal X)$. Since there are no other
collectives, the cross-collective term vanishes and
\eqref{eq:multi_FL_bound_final_clean_preonly} specializes to
\begin{equation}\label{eq:single_collective_planting}
\begin{split}
\hat S(\alpha)
&\;\ge\;
\PP_{x\sim\hat P_{n}^{\mathrm{pre},X}}
\Big[
\alpha\Bigl(
\hat p^X\bigl(g(x)\bigr)
-
2R_{\tilde\delta}(n)
\Bigr)
-
(1-\alpha)\,
\Delta_{0}^{(n)}\bigl(g(x)\bigr)
\\
&\qquad\qquad
-
2(1-\alpha)\Bigl(
R_{\tilde\delta}(N-n)+R_{\tilde\delta}(n)
\Bigr)
-
\frac{\varepsilon}{1-\varepsilon}
>0
\Big]
\\
&\hspace{2.6em}
-\;R_{\delta/4}(n)-R_{\delta/4}(N^{\mathrm{test}}),
\end{split}
\end{equation}
where, consistently with Section~\ref{sec:MACA},
$\hat P_{n}^{\mathrm{pre}}$ is the empirical pre-edit distribution on the
collective sample, $\hat p^X(x^*):=\hat P_n^X(x^*)$ is the empirical post-edit
feature marginal under the chosen planting strategy, and
\begin{equation}\label{eq:single_collective_resistance}
\Delta_{0}^{(n)}(x^*)
:=
\max_{y'\in\mathcal Y\setminus\{y^*\}}
\bigl(
\hat P_{n}^{\mathrm{pre}}(x^*,y')
-
\hat P_{n}^{\mathrm{pre}}(x^*,y^*)
\bigr).
\end{equation}
Here
\[
\tilde\delta=\frac{\delta}{10|\mathcal X^*|\,|\mathcal Y|}.
\]
We now match \eqref{eq:single_collective_planting} with
\cite[Theorems~3.3 and 3.5]{gauthier2025statistical}.

\smallskip\noindent
\emph{Identification of empirical objects.}
Let $D^{(n)}$ denote the pre-edit collective dataset and
$\tilde D^{(n)}$ its post-edit version. Under our notation,
\[
\hat P_{D^{(n)}} \equiv \hat P_{n}^{\mathrm{pre}},
\qquad
\hat P_{\tilde D^{(n)}} \equiv \hat P_{n},
\qquad
\hat P_{\tilde D^{(n)}}^X \equiv \hat P_{n}^{X}.
\]
Under feature--label planting, $h(x,y)=(g(x),y^*)$, so for any
$x^*\in\mathcal X^*$ we have
$\hat P_{\tilde D^{(n)}}(x^*,y')=0$ for all $y'\neq y^*$ and hence
\[
\hat P_{\tilde D^{(n)}}^X(x^*)
=
\hat P_{\tilde D^{(n)}}(x^*,y^*).
\]
Therefore the signal term in \eqref{eq:single_collective_planting} can be
read either as a feature-marginal prevalence, as we do, or as a joint mass on
$(x^*,y^*)$. Moreover, the non-collective conflict term
\eqref{eq:single_collective_resistance} coincides with the corresponding
empirical conflict quantity in \cite[Theorem~3.3]{gauthier2025statistical}
after identifying $x^*$ with the planted feature value.

\smallskip\noindent
\emph{Outer probability / change of variables.}
Our outer probability is written as
$\PP_{x\sim \hat P_{n}^{\mathrm{pre},X}}[\cdot]$ with the margin evaluated at
$g(x)$. Since
\[
(g)_{\#}\hat P_{n}^{\mathrm{pre},X}
=
\hat P_{n}^{X}
\]
by construction, for any measurable $\Psi$ on $\mathcal X^*$ we have
\[
\PP_{x\sim\hat P_{n}^{\mathrm{pre},X}}
\bigl[\Psi(g(x))\bigr]
=
\PP_{x^*\sim \hat P_{n}^{X}}
\bigl[\Psi(x^*)\bigr].
\]
Thus \eqref{eq:single_collective_planting} can equivalently be written with
the outer probability taken over post-edit signal features, matching the
presentation in \cite{gauthier2025statistical}.

\smallskip\noindent
\emph{Comparison to \cite{gauthier2025statistical}.}
With the identifications above, \eqref{eq:single_collective_planting} has the
same functional structure as \cite[Theorem~3.3]{gauthier2025statistical}
(feature--label planting) and \cite[Theorem~3.5]{gauthier2025statistical}
(feature-only planting): an outer empirical probability over collective
features, a positive term proportional to $n/N$ times the post-edit prevalence
of the planted feature, a negative term proportional to $(N-n)/N$ times an
empirical conflict term computed from the collective's pre-edit data, a
suboptimality offset $\varepsilon/(1-\varepsilon)$, and explicit Hoeffding
radii. The remaining differences are only in the bookkeeping of constants,
through the particular split of $\delta$ and the resulting Hoeffding radii.

\paragraph{Signal Unplanting.}
Set $M=1$ and $c=1$ in Theorem~\ref{thm:multi_unplant}. Keep the shorthand
\[
n:=n_{N,1},\qquad
\alpha:=\frac{n}{N},
\qquad
\alpha_{N,0}=1-\alpha=\frac{N-n}{N},
\]
and let $n_e:=n_{e,1}$ denote the size of the estimation sub-collective.
Write again $g:=g_1$, $y^*:=y_1^*$, and $\mathcal X^*:=g(\mathcal X)$.
Let $E_N\subseteq I_{N,1}$ be the estimation index set of size $n_e$ and
$R_N:=I_{N,1}\setminus E_N$ its complement. Define the associated empirical
measures
\[
\hat P_{n}^{\mathrm{pre},E}
:=
\frac{1}{n_e}\sum_{i\in E_N}\delta_{Z_i},
\qquad
\hat P_{n}^{\mathrm{pre},R}
:=
\frac{1}{n-n_e}\sum_{i\in R_N}\delta_{Z_i}.
\]
For each $x^*\in\mathcal X^*$, define the data-driven alternative label
\begin{equation}\label{eq:single_collective_hat_y}
y'(x^*)
\;:=\;
\arg\max_{y\in\mathcal Y\setminus\{y^*\}}
\hat P_{n}^{\mathrm{pre},E}(x^*,y),
\end{equation}
with deterministic tie-breaking, and consider the adaptive unplanting rule
\[
h(x,y)=(g(x),y'(g(x))).
\]
In the single-collective case, the cross-collective term disappears and
\eqref{eq:multi_unplant_bound} specializes to
\begin{equation}\label{eq:single_collective_unplanting}
\begin{split}
\hat S(\alpha)
&\;\ge\;
\PP_{x\sim\hat P_{n}^{\mathrm{pre},R,X}}
\Big[
\alpha\Bigl(
\hat p^X\bigl(g(x)\bigr)
-
2R_{\tilde\delta}(n)
\Bigr)
-
(1-\alpha)\,
\Delta_{0}^{\mathrm{unp},(n-n_e)}\bigl(g(x)\bigr)
\\
&\qquad\qquad
-
2(1-\alpha)\Bigl(
R_{\tilde\delta}(N-n)+R_{\tilde\delta}(n-n_e)
\Bigr)
-
\frac{\varepsilon}{1-\varepsilon}
>0
\Big]
\\
&\hspace{2.6em}
-\;R_{\delta/4}(n-n_e)-R_{\delta/4}(N^{\mathrm{test}}),
\end{split}
\end{equation}
where the approximate non-collective conflict term is computed from the
independent holdout split $R_N$:
\begin{equation}\label{eq:single_collective_unplant_resistance}
\Delta_{0}^{\mathrm{unp},(n-n_e)}(x^*)
:=
\hat P_{n}^{\mathrm{pre},R}(x^*,y^*)
-
\hat P_{n}^{\mathrm{pre},R}\bigl(x^*,y'(x^*)\bigr).
\end{equation}

\smallskip\noindent
\emph{Match to \cite{gauthier2025statistical}.}
Under the identification
\[
D^{(n_e)}
\leftrightarrow
\text{estimation split }E_N,
D^{(n-n_e)}
\leftrightarrow
\text{holdout split }R_N,
\tilde D^{(n)}
\leftrightarrow
\text{post-edit collective dataset},
\]
the label-selection rule \eqref{eq:single_collective_hat_y} coincides with the
adaptive unplanting label-selection rule in \cite{gauthier2025statistical}
up to notation, and the conflict term
\eqref{eq:single_collective_unplant_resistance} matches the corresponding
holdout-based empirical conflict term in
\cite[Theorem~3.7]{gauthier2025statistical}. The outer probability in
\eqref{eq:single_collective_unplanting} is taken over the holdout empirical
feature distribution, because the same holdout split is used in the proof to
estimate the non-collective conflict independently of the label-selection step.
Equivalently, since the post-edit holdout feature distribution is the pushforward
of $\hat P_{n}^{\mathrm{pre},R,X}$ under $g$, one may rewrite the outer
probability over the corresponding post-edit holdout signal features.

Therefore, \eqref{eq:single_collective_unplanting} specializes the
multi-collective adaptive unplanting guarantee to the single-collective setting
and has the same functional structure as
\cite[Theorem~3.7]{gauthier2025statistical}: a positive signal-prevalence term,
a holdout-based empirical conflict term for the non-collective population, the
suboptimality offset $\varepsilon/(1-\varepsilon)$, and explicit concentration
radii. The remaining differences are again only in constant bookkeeping through
the chosen confidence split.

\vspace{-0.3cm}
\section{Additional Experimental Results}
\label{app:experiment-results}
\vspace{-0.2cm}

\subsection{Two Collective Experiments}

We report additional experiments in the 2-collective feature overlap setting described in section \ref{sec:experiments} in both the feature-label and the feature-only settings. In the feature-label setting we observe a qualitative behavior consistent with the trend observed in the main text: reallocating mass toward one collective improves its lower bound while reducing the other's, and increasing the total planted mass shifts the attainable guarantees upward. The symmetric 17-vs.-17 case is driven primarily by the mass split (Figures~\ref{fig:feature_analysis_fl_per-collective_17v17} \&~\ref{fig:feature_analysis_fl_known-interventions_17v17}). Reducing the number of planted features for \textit{Collective~2} introduces the asymmetry observed in Figure~\ref{fig:feature_analysis}: As \textit{Collective~2}'s planted features decrease, \textit{Collective~2} loses signal prevalence, whereas \textit{Collective~1} benefits from features that are no longer directly contested (Figures~\ref{fig:feature_analysis_fl_per-collective_17v15},~\ref{fig:feature_analysis_fl_per-collective_17v13},~\ref{fig:feature_analysis_fl_known-interventions_17v15}, \&~\ref{fig:feature_analysis_fl_known-interventions_17v13}).

In the 17 vs. 13 case, under the \textit{per-collective} setting, we observe that collective 1 is insensitive to the mass of collective 2 (Figure~\ref{fig:feature_analysis_fl_per-collective_17v13}). The quantity of features prevail over the mass proportions of each collective. This is not the case in the \textit{known-interventions} setting (Figure~\ref{fig:feature_analysis_fl_known-interventions_17v13}). In the known-intervention setting, more samples are required to reach 10\% success when the collective is smaller. As the size of both collectives grow, fewer samples are needed for each collective to cross the 10\% success threshold. Like the \textit{per-collective} setting, the maximum success the collective 2 can reach is reduced as the number of features for collective 2 are reduced. 

% if the combined mass is small and you have a large proportion of that mass, there is minimal difference between the known-interventions and per-collective strategy. As the combined mass of the collectives increases, it is more beneficial to operate in the known-interventions strategy

% (full knowledge of the other’s strategy and data) and known-interventions (only the set of possible339
% strategies is known, and the conflict is averaged over them

% extra discussion on per-collective and known interventions

\begin{figure*}[ht!] 
    \centering
    \begin{subfigure}[]{0.32\linewidth}
        \includegraphics[height=1.5in]{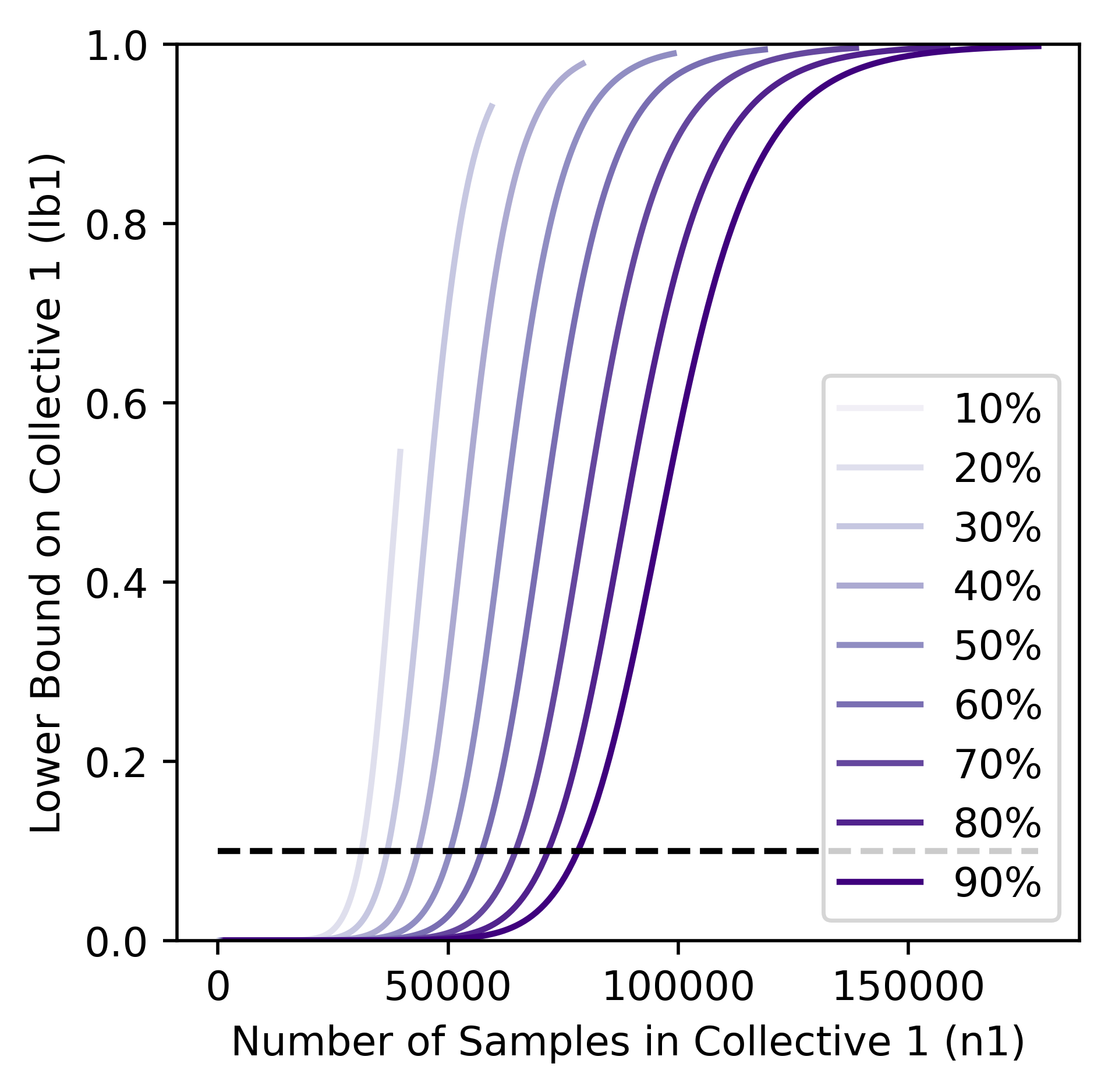}
        \caption{Collective 1 lower bound on success with a partial conflict with 17 vs. 17 features and a per-collective strategy.}
    \end{subfigure}
    \centering
    \begin{subfigure}[]{0.32\linewidth}
        \includegraphics[height=1.5in]{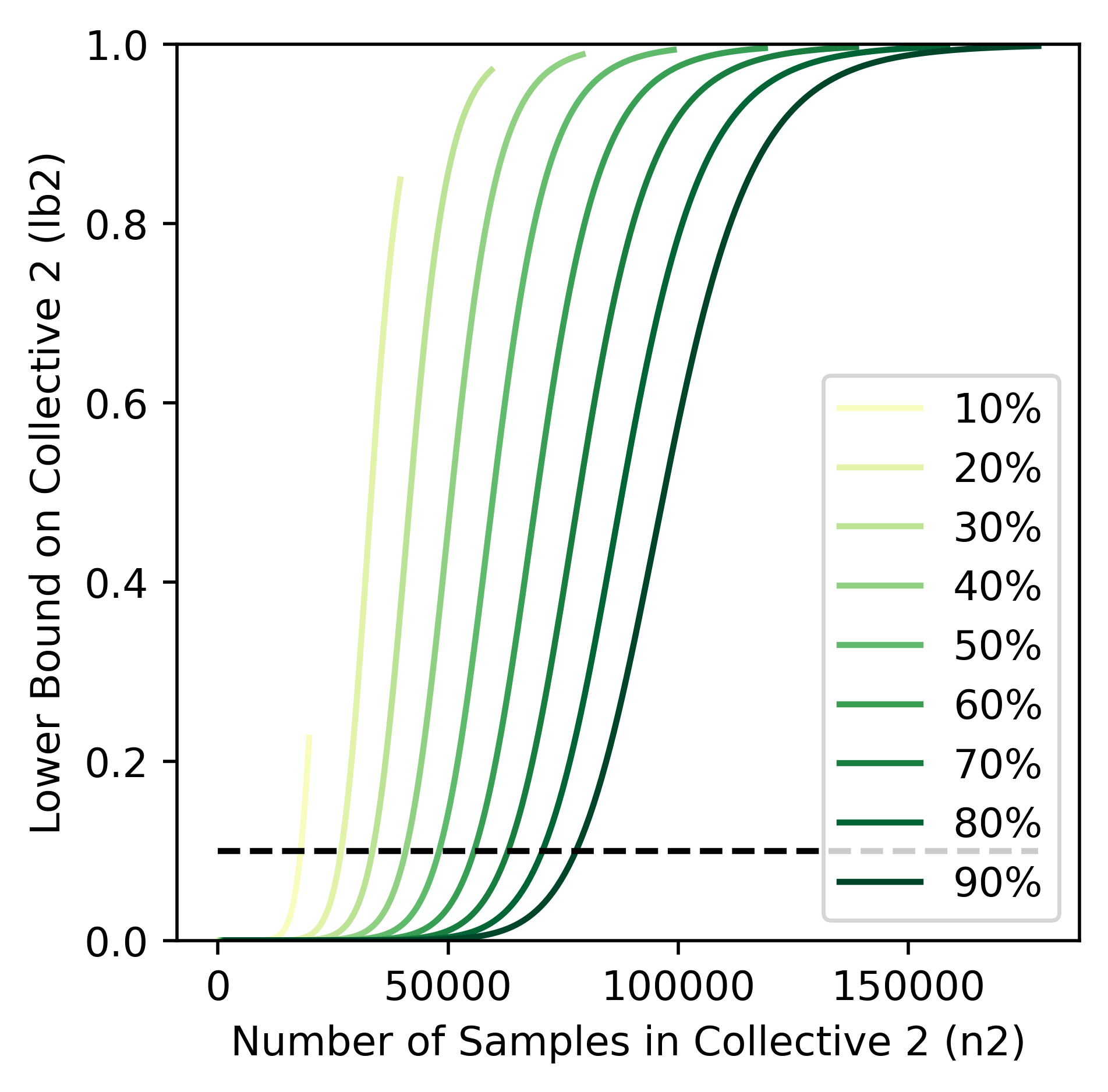}
        \caption{Collective 2 lower bound on success with a partial conflict with 17 vs. 17 features and a per-collective strategy.}
    \end{subfigure}
    \begin{subfigure}[]{0.32\linewidth}
        \includegraphics[height=1.5in]{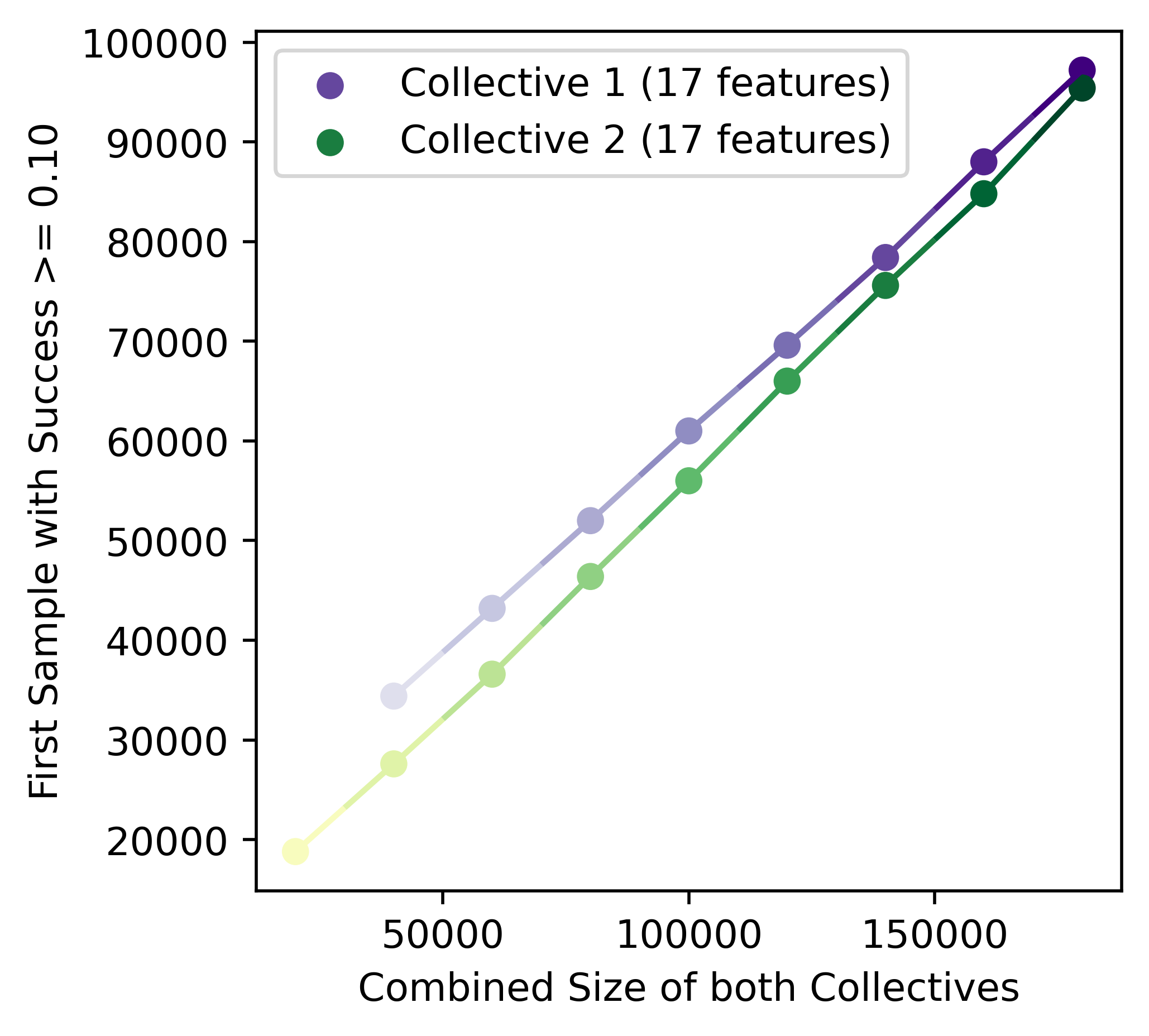}
        \caption{The number of samples required to exceed a 0.1 lower bound on success in a 17 vs. 17 collective scenario. }
    \end{subfigure}
    \caption{The bound on success for two collectives under varying population sizes and collective sizes.}
    \label{fig:feature_analysis_fl_per-collective_17v17}
\end{figure*}

\begin{figure*}[ht!] 
    \centering
    \begin{subfigure}[]{0.32\linewidth}
        \includegraphics[height=1.5in]{images/two_collectives/collective1_17vs15_featurelabel_smoothed_per-collective.png}
        \caption{Collective 1 lower bound on success with a partial conflict with 17 vs. 15 features and a per-collective strategy.}
    \end{subfigure}
    \centering
    \begin{subfigure}[]{0.32\linewidth}
        \includegraphics[height=1.5in]{images/two_collectives/collective2_17vs15_featurelabel_smoothed_per-collective.png}
        \caption{Collective 2 lower bound on success with a partial conflict with 17 vs. 15 features and a per-collective strategy.}
    \end{subfigure}
    \begin{subfigure}[]{0.32\linewidth}
        \includegraphics[height=1.5in]{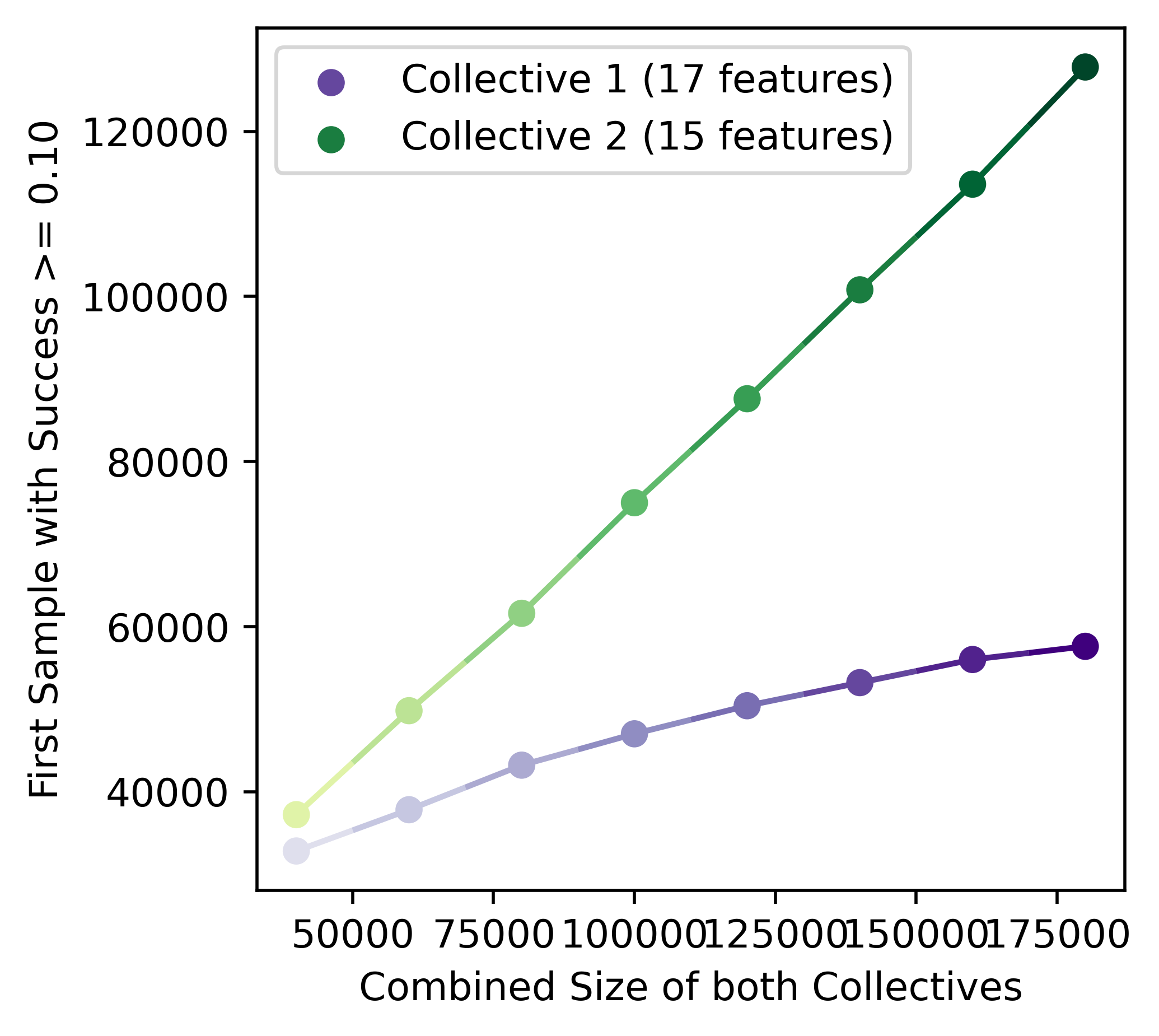}
        \caption{The number of samples required to exceed a 0.1 lower bound on success in a 17 vs. 15 collective scenario. }
    \end{subfigure}
    \caption{The bound on success for two collectives under varying population sizes and collective sizes.}
    \label{fig:feature_analysis_fl_per-collective_17v15}
\end{figure*}

\begin{figure*}[ht!] 
    \centering
    \begin{subfigure}[]{0.32\linewidth}
        \includegraphics[height=1.5in]{images/two_collectives/collective1_17vs13_featurelabel_smoothed_per-collective.png}
        \caption{Collective 1 lower bound on success with a partial conflict with 17 vs. 13 features and a per-collective strategy.}
    \end{subfigure}
    \centering
    \begin{subfigure}[]{0.32\linewidth}
        \includegraphics[height=1.5in]{images/two_collectives/collective2_17vs13_featurelabel_smoothed_per-collective.png}
        \caption{Collective 2 lower bound on success with a partial conflict with 17 vs. 13 features and a per-collective strategy.}
    \end{subfigure}
    \begin{subfigure}[]{0.32\linewidth}
        \includegraphics[height=1.5in]{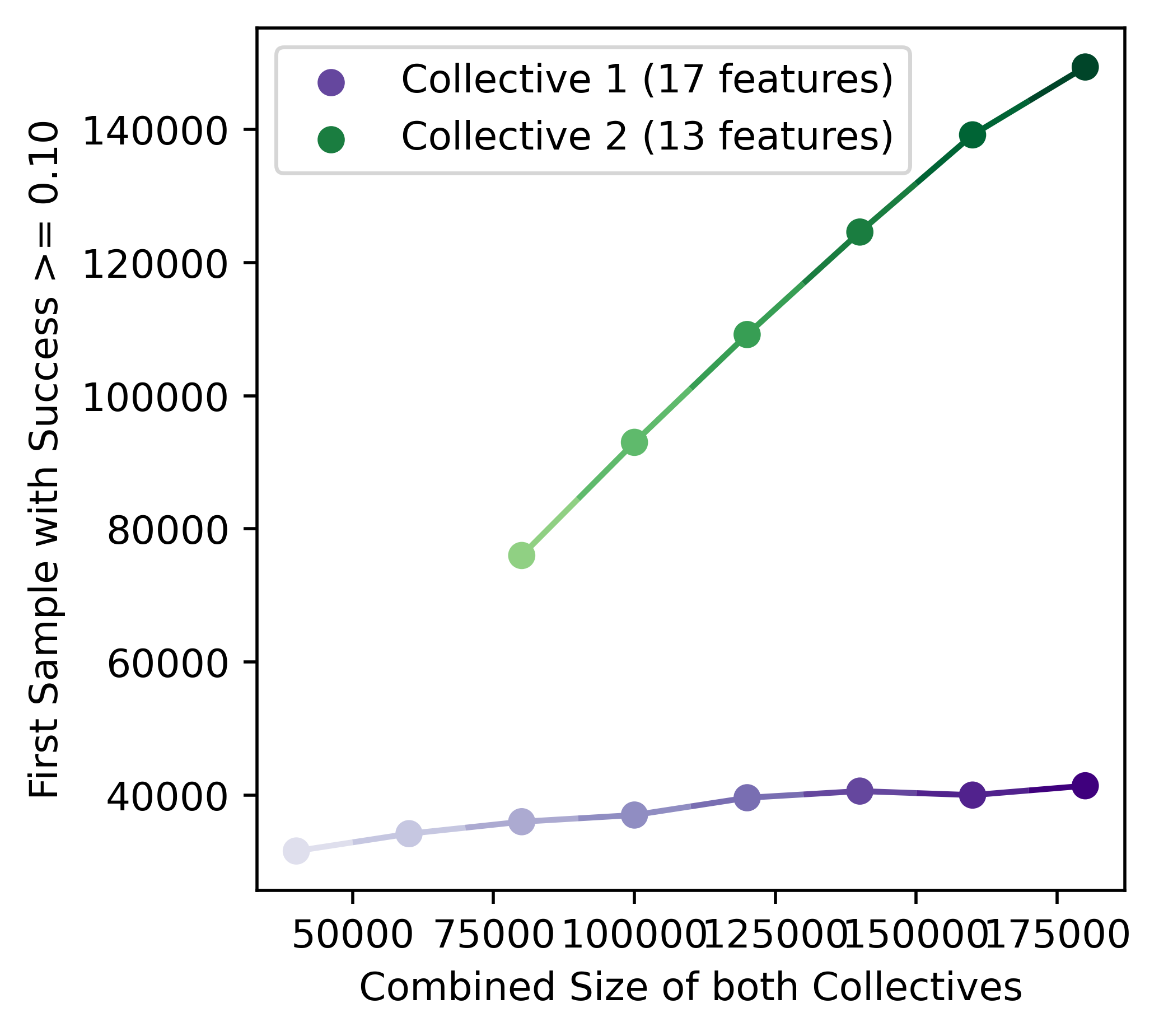}
        \caption{The number of samples required to exceed a 0.1 lower bound on success in a 17 vs. 13 collective scenario. }
    \end{subfigure}
    \caption{The bound on success for two collectives under varying population sizes and collective sizes.}
    \label{fig:feature_analysis_fl_per-collective_17v13}
\end{figure*}

% known interventions

\begin{figure*}[ht!] 
    \centering
    \begin{subfigure}[]{0.32\linewidth}
        \includegraphics[height=1.5in]{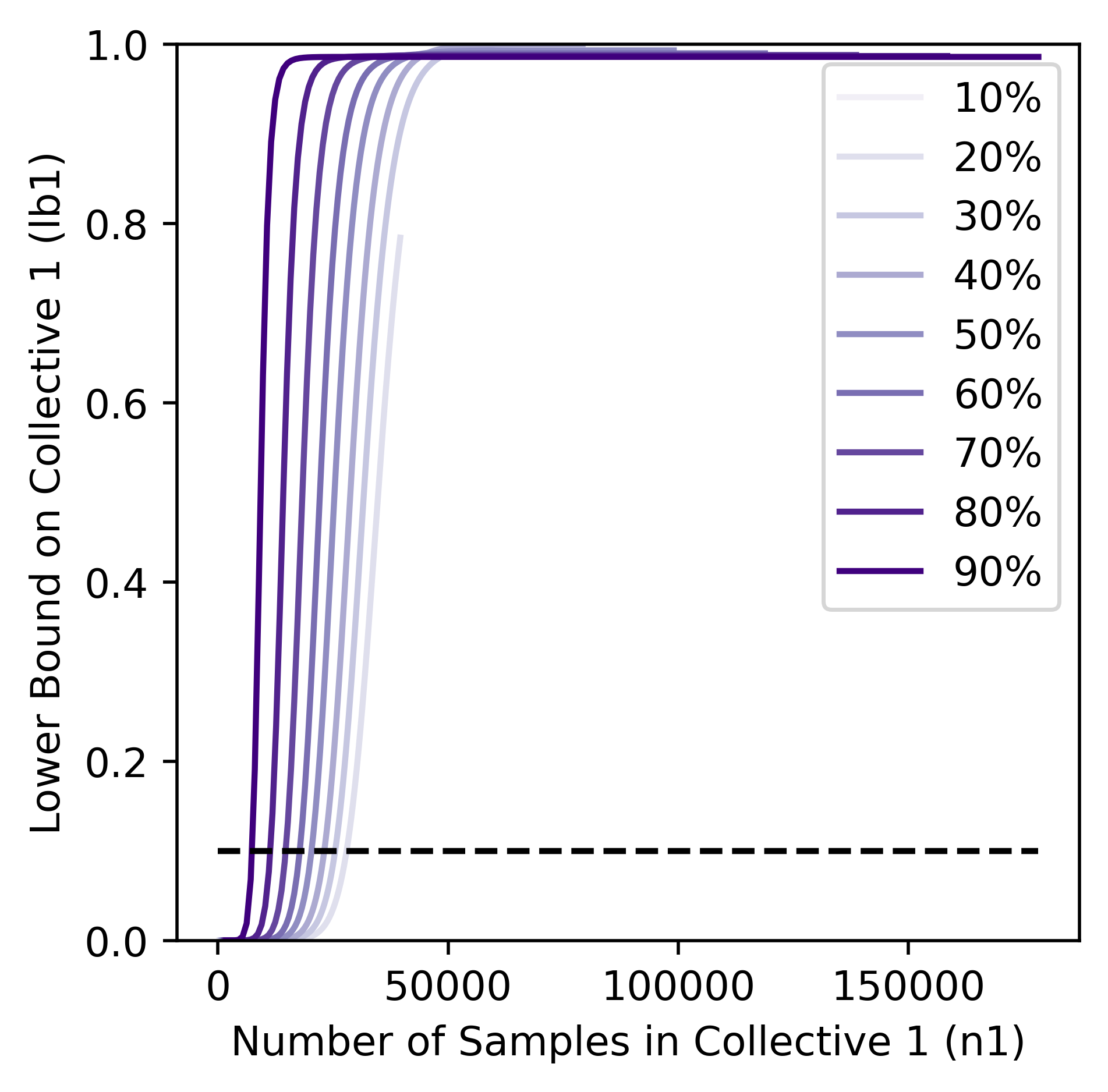}
        \caption{Collective 1 lower bound on success with a partial conflict with 17 vs. 17 features and a known-interventions strategy.}
    \end{subfigure}
    \centering
    \begin{subfigure}[]{0.32\linewidth}
        \includegraphics[height=1.5in]{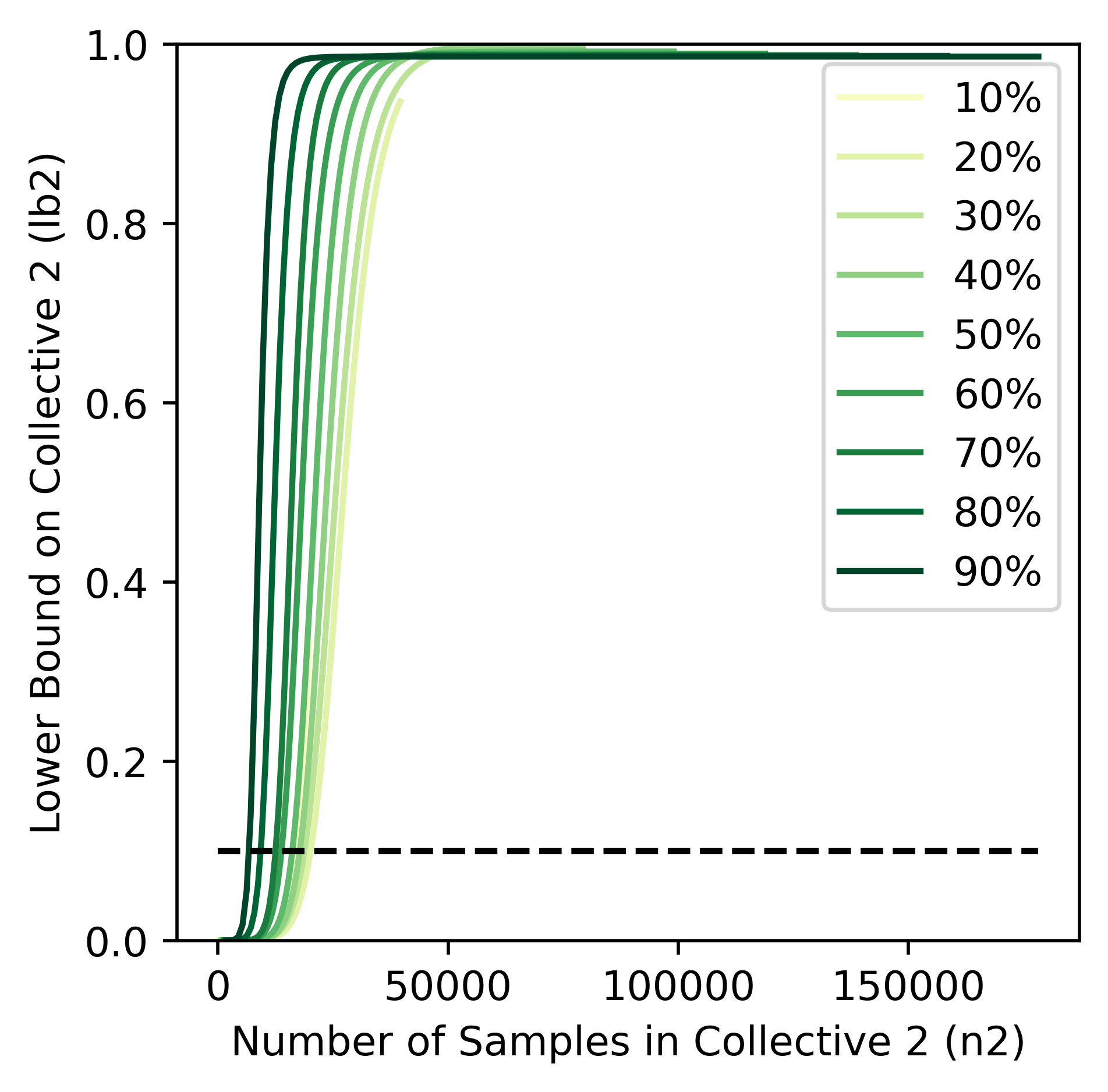}
        \caption{Collective 2 lower bound on success with a partial conflict with 17 vs. 17 features and a known-interventions strategy.}
    \end{subfigure}
    \begin{subfigure}[]{0.32\linewidth}
        \includegraphics[height=1.5in]{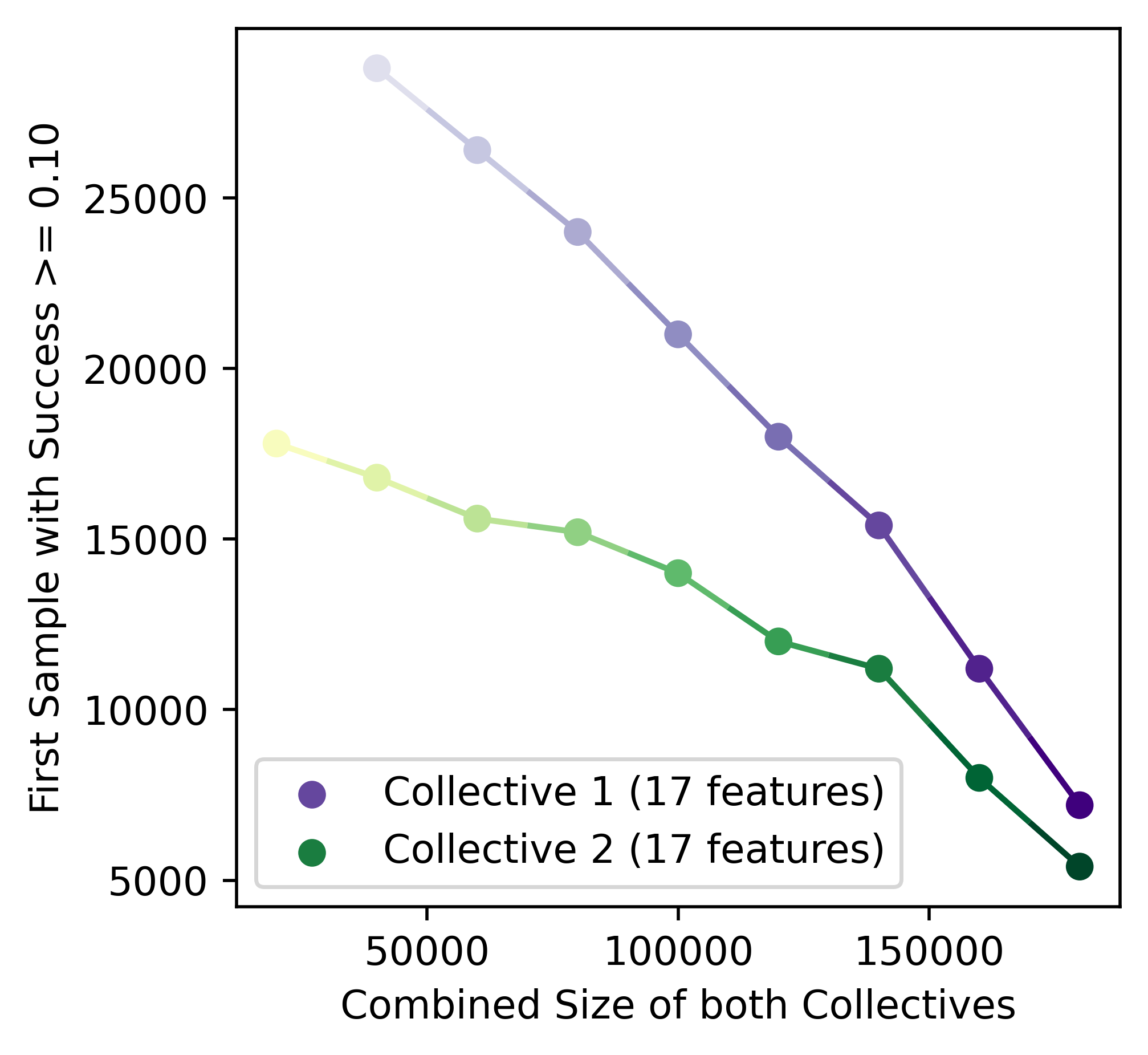}
        \caption{The number of samples required to exceed a 0.1 lower bound on success in a 17 vs. 17 collective scenario. }
    \end{subfigure}
    \caption{The bound on success for two collectives under varying population sizes and collective sizes.}
    \label{fig:feature_analysis_fl_known-interventions_17v17}
\end{figure*}

\begin{figure*}[ht!] 
    \centering
    \begin{subfigure}[]{0.32\linewidth}
        \includegraphics[height=1.5in]{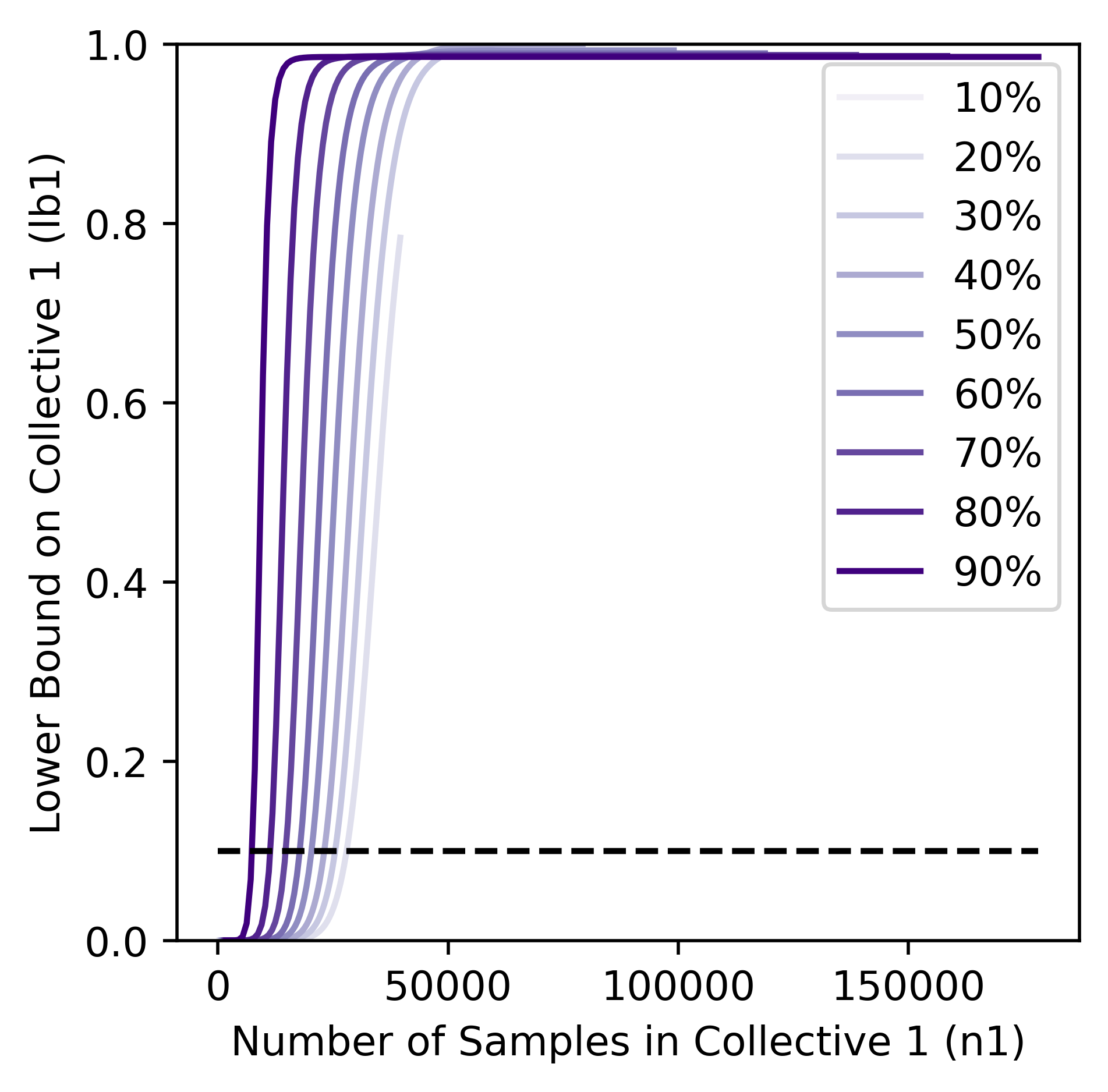}
        \caption{Collective 1 lower bound on success with a partial conflict with 17 vs. 15 features and a known-interventions strategy.}
    \end{subfigure}
    \centering
    \begin{subfigure}[]{0.32\linewidth}
        \includegraphics[height=1.5in]{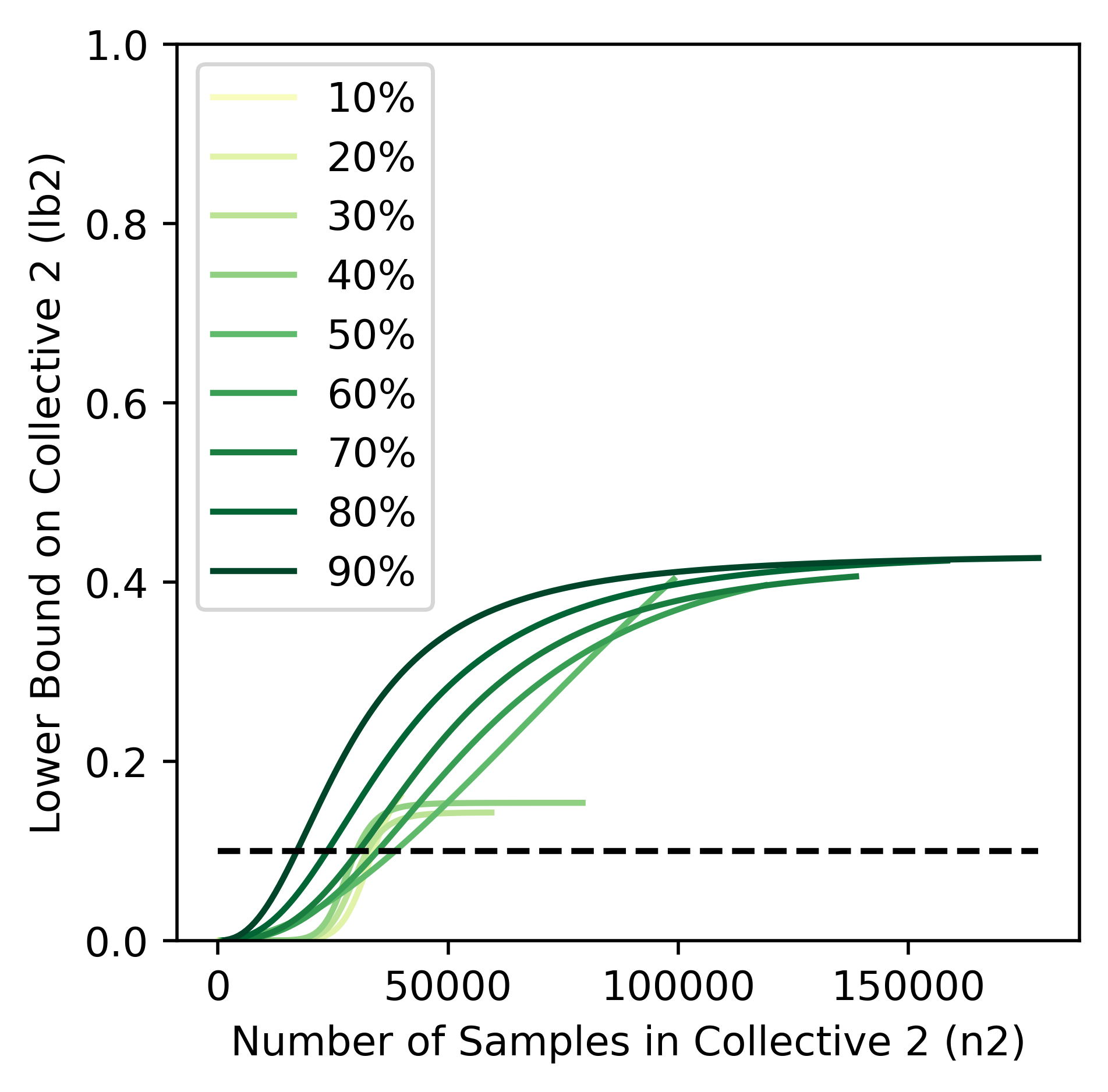}
        \caption{Collective 2 lower bound on success with a partial conflict with 17 vs. 15 features and a known-interventions strategy.}
    \end{subfigure}
    \begin{subfigure}[]{0.32\linewidth}
        \includegraphics[height=1.5in]{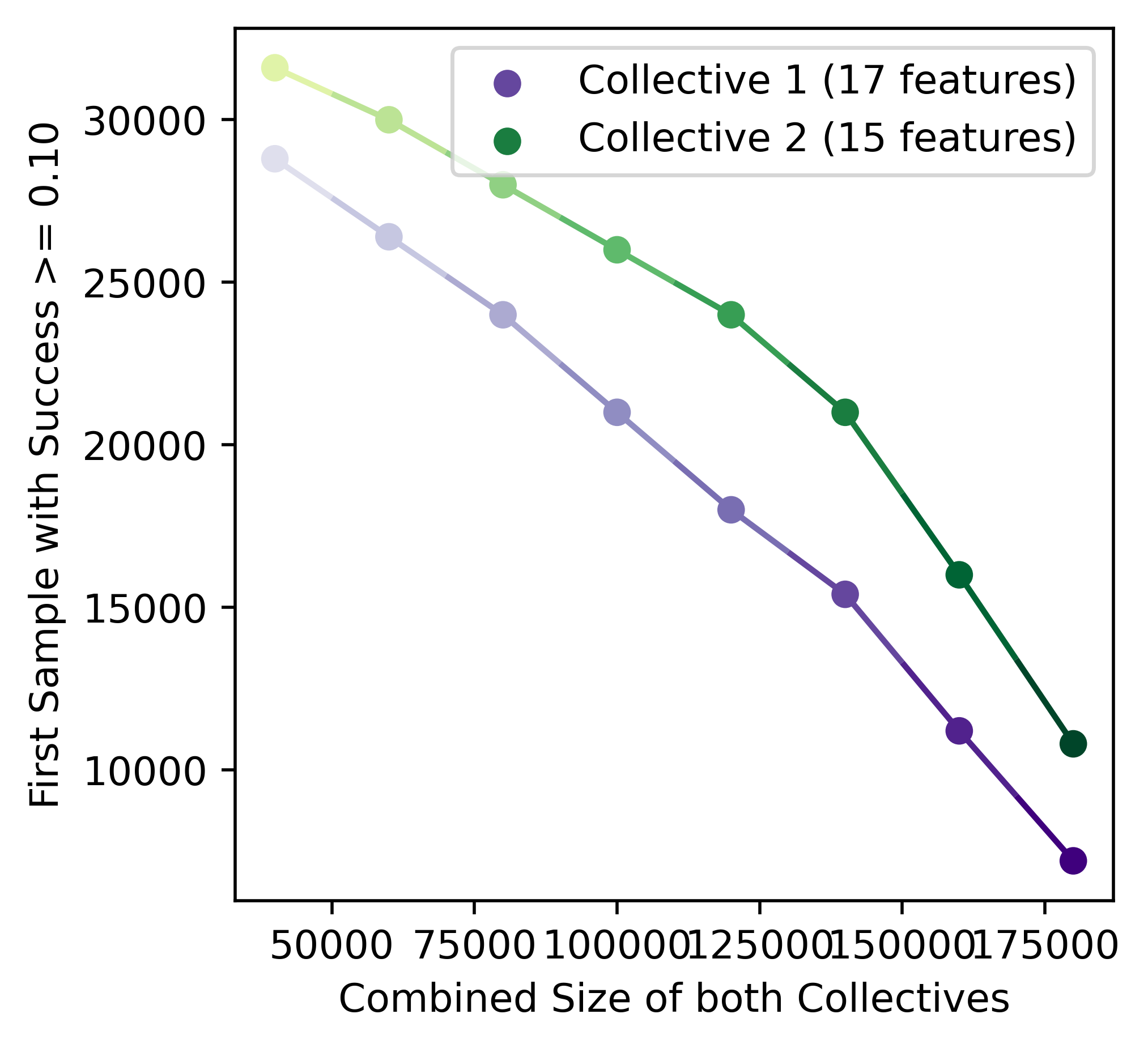}
        \caption{The number of samples required to exceed a 0.1 lower bound on success in a 17 vs. 15 collective scenario. }
    \end{subfigure}
    \caption{The bound on success for two collectives under varying population sizes and collective sizes.}
    \label{fig:feature_analysis_fl_known-interventions_17v15}
\end{figure*}

\begin{figure*}[ht!] 
    \centering
    \begin{subfigure}[]{0.32\linewidth}
        \includegraphics[height=1.5in]{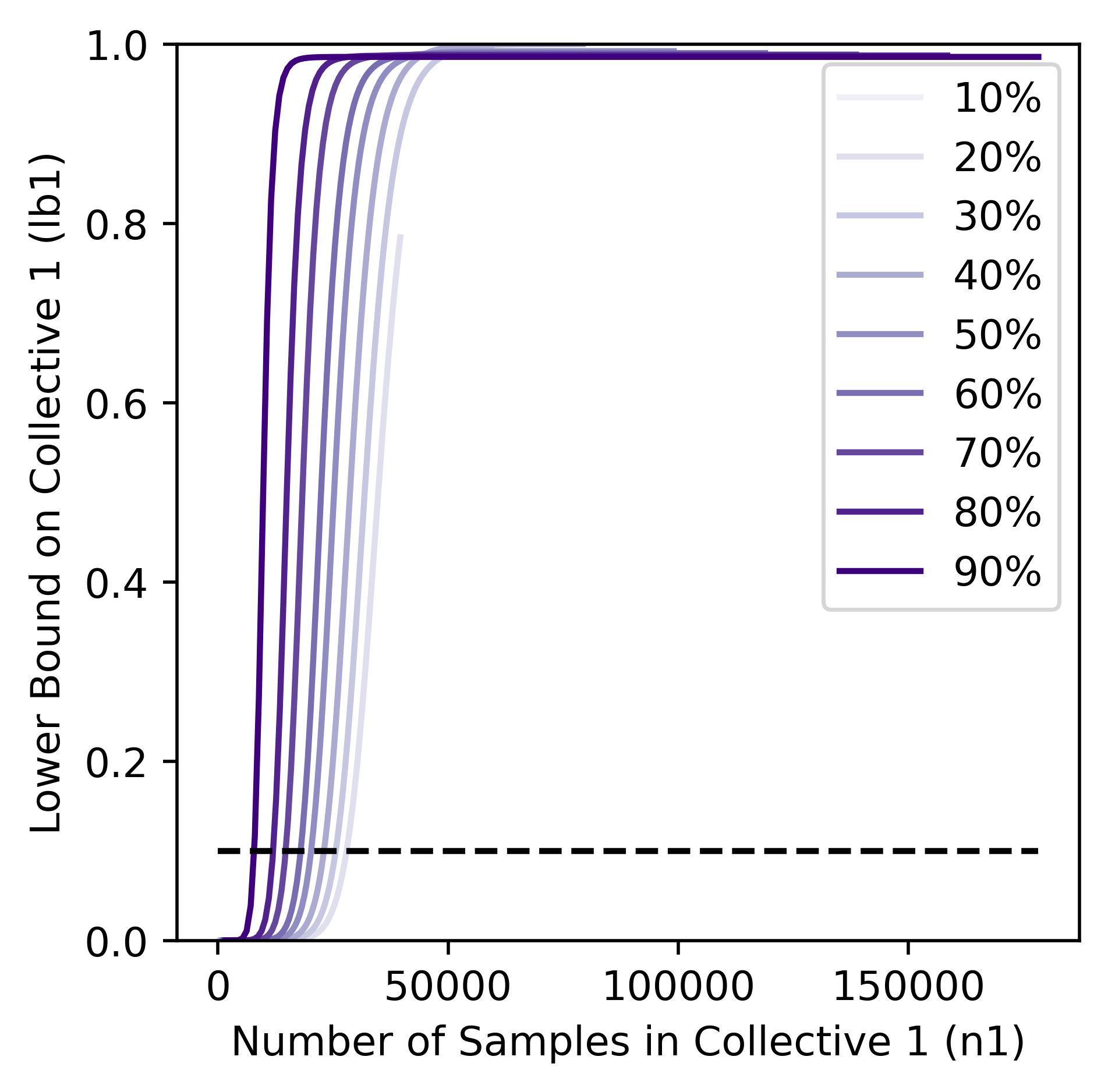}
        \caption{Collective 1 lower bound on success with a partial conflict with 17 vs. 13 features and a known-interventions strategy.}
    \end{subfigure}
    \centering
    \begin{subfigure}[]{0.32\linewidth}
        \includegraphics[height=1.5in]{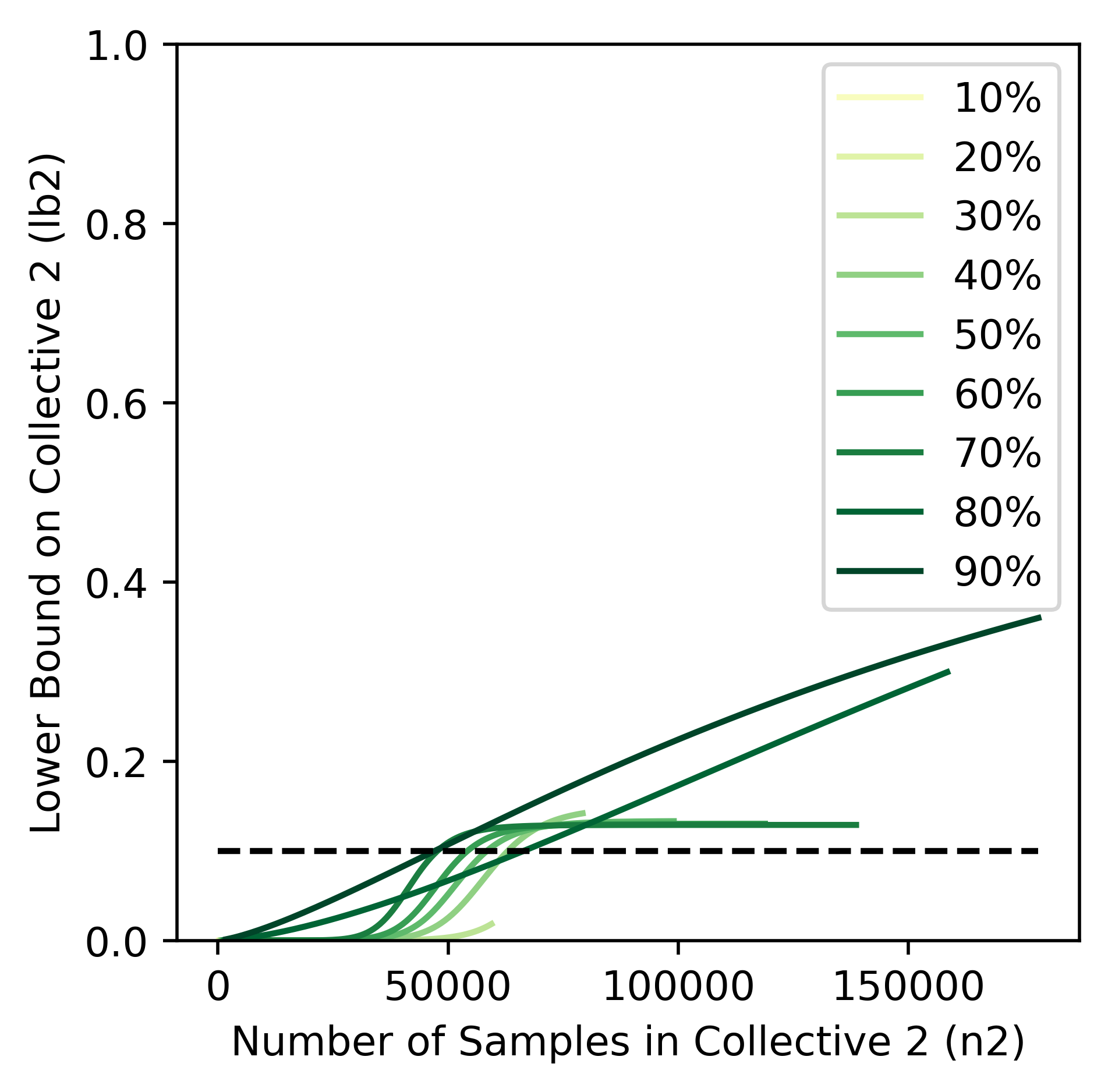}
        \caption{Collective 2 lower bound on success with a partial conflict with 17 vs. 13 features and a known-interventions strategy.}
    \end{subfigure}
    \begin{subfigure}[]{0.32\linewidth}
        \includegraphics[height=1.5in]{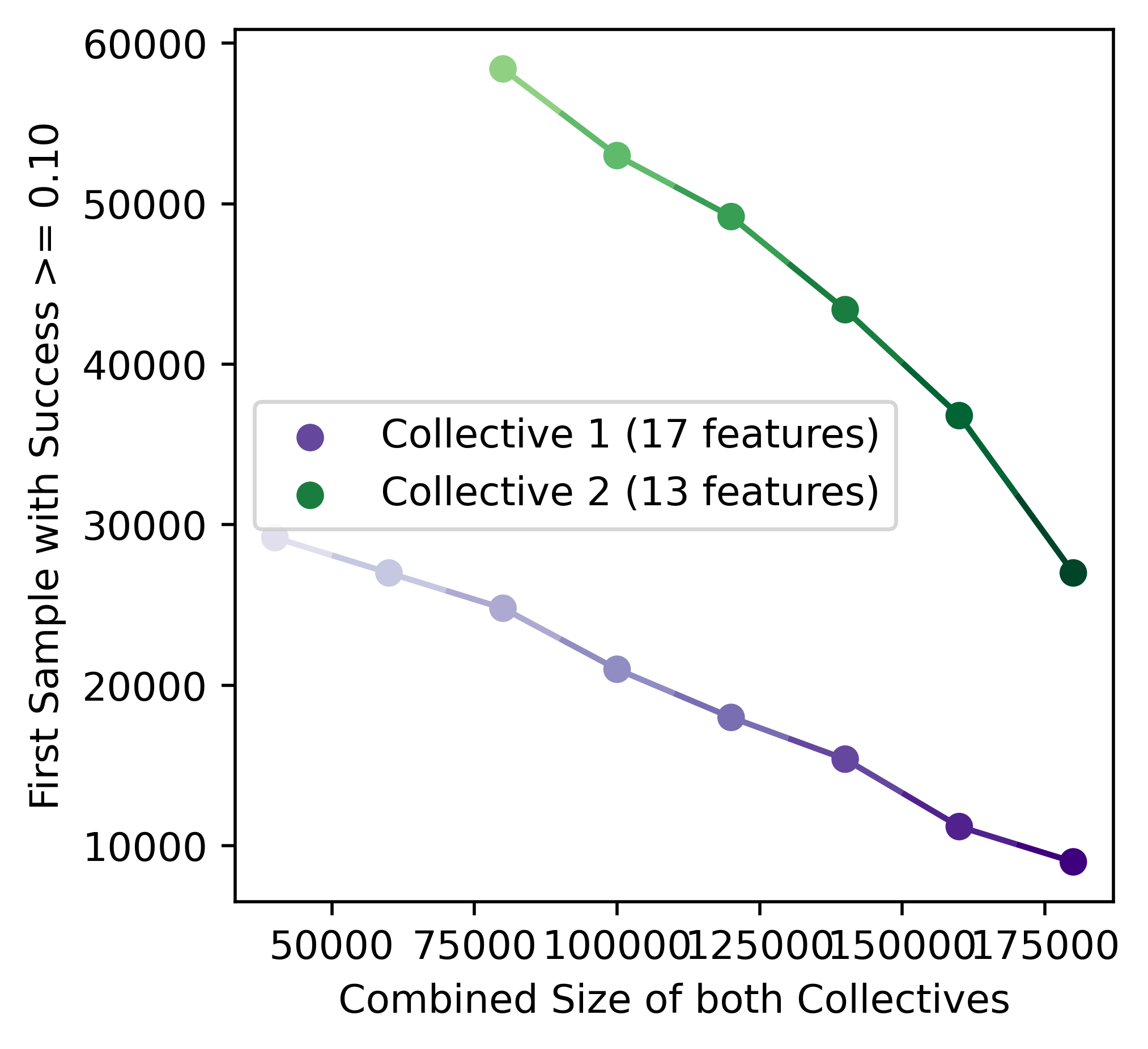}
        \caption{The number of samples required to exceed a 0.1 lower bound on success in a 17 vs. 13 collective scenario. }
    \end{subfigure}
    \caption{The bound on success for two collectives under varying population sizes and collective sizes.}
    \label{fig:feature_analysis_fl_known-interventions_17v13}
\end{figure*}

\newpage

For feature-only planting, we omit the known-interventions curves because as they are nearly indistinguishable from the per-collective results. We hypothesize this to be due to the fact that, without planted labels, the uncertainty over the other collective's intervention has little effect on the relevant conflict term. In the feature-label setting, knowing the other collective's target label matters because overlapping planted features can push probability mass toward incompatible labels, creating label-specific interference. In contrast, feature-only planting introduces conflict primarily through shared feature coordinates, while the labels remain governed by the ambient data distribution. Averaging over possible interventions in the known-interventions regime therefore produces almost the same effective overlap penalty as conditioning on the other collective's exact intervention in the per-collective regime. Consequently, the prevailing source of variation is the number of shared planted features rather than the informational regime (Figures~\ref{fig:feature_analysis_fo_per-collective_17v17},~\ref{fig:feature_analysis_fo_per-collective_17v16},~\ref{fig:feature_analysis_fo_per-collective_17v15},~\ref{fig:feature_analysis_fo_per-collective_17v14}, \&~\ref{fig:feature_analysis_fo_per-collective_17v13}). As the overlap decreases from 17-vs.-17 to 17-vs.-13, the conflict between collectives weakens for the collective retaining more planted features, while the collective with fewer planted features requires more favorable mass allocation to achieve comparable lower bounds.

\begin{figure*}[ht!]{}
    \centering
    \includegraphics[height=1.5in,width=\linewidth]{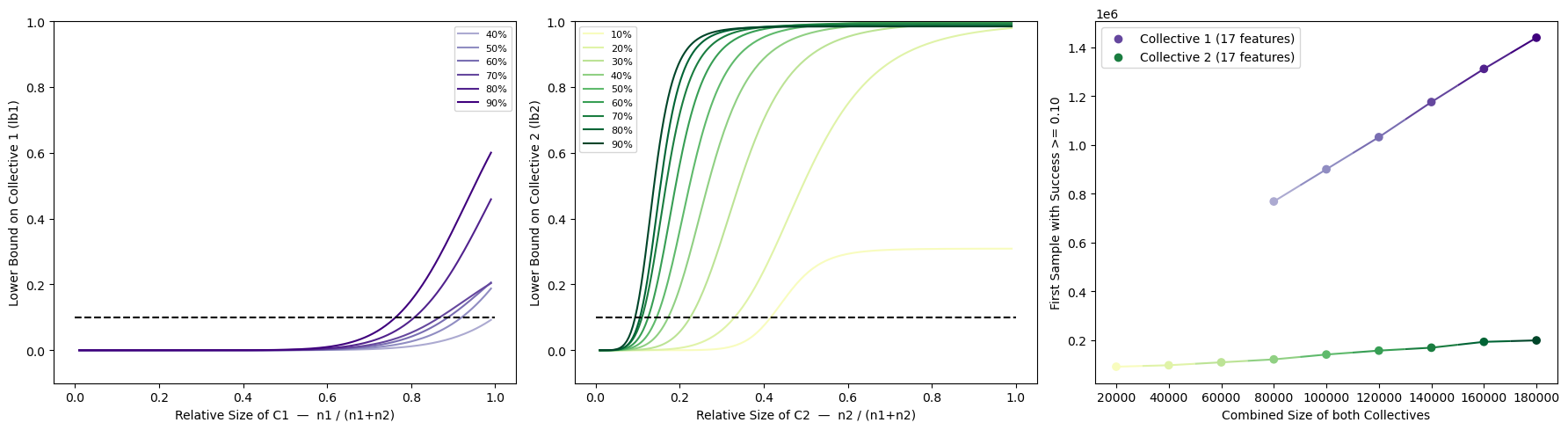}
    \caption{The bound on feature-only planting success for two collectives, 17 vs 17 features, under varying population sizes and collective sizes.}
    \label{fig:feature_analysis_fo_per-collective_17v17}
\end{figure*}

\begin{figure*}[ht!]{}
    \centering
    \includegraphics[height=1.5in,width=\linewidth]{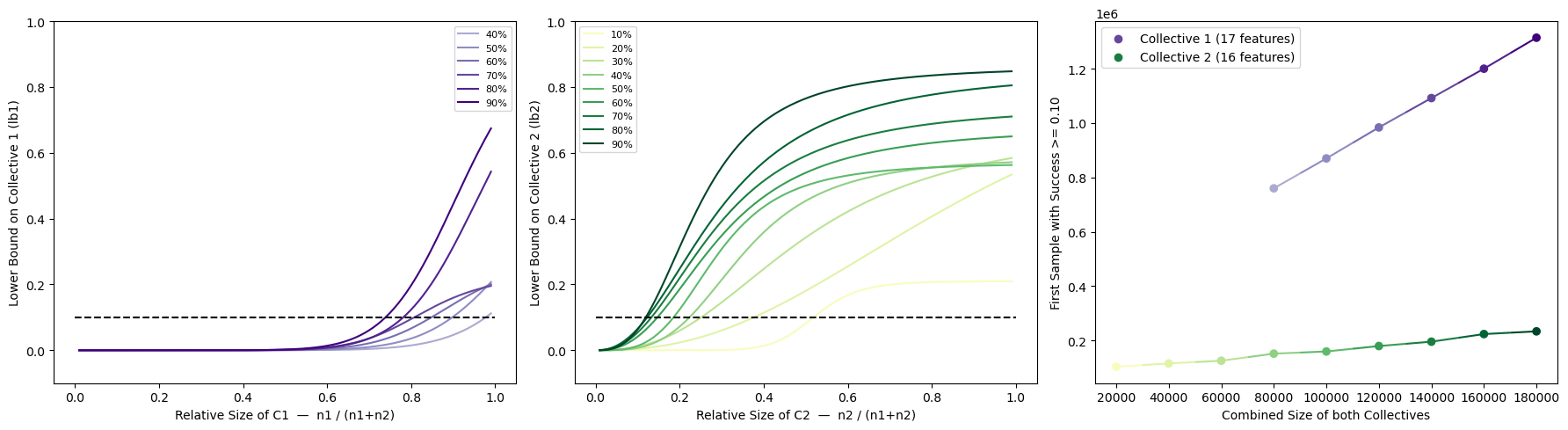}
    \caption{The bound on feature-only planting success for two collectives, 17 vs 16 features, under varying population sizes and collective sizes.}
    \label{fig:feature_analysis_fo_per-collective_17v16}
\end{figure*}

\begin{figure*}[ht!]{}
    \centering
    \includegraphics[height=1.5in,width=\linewidth]{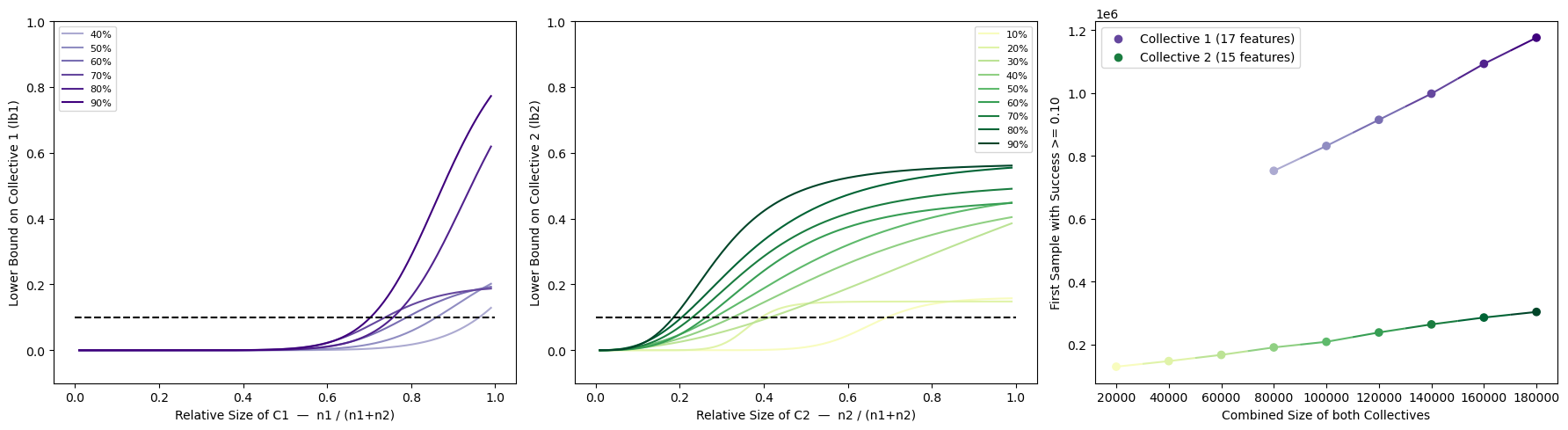}
    \caption{The bound on feature-only planting success for two collectives, 17 vs 15 features, under varying population sizes and collective sizes.}
    \label{fig:feature_analysis_fo_per-collective_17v15}
\end{figure*}

\begin{figure*}[ht!]{}
    \centering
    \includegraphics[height=1.5in,width=\linewidth]{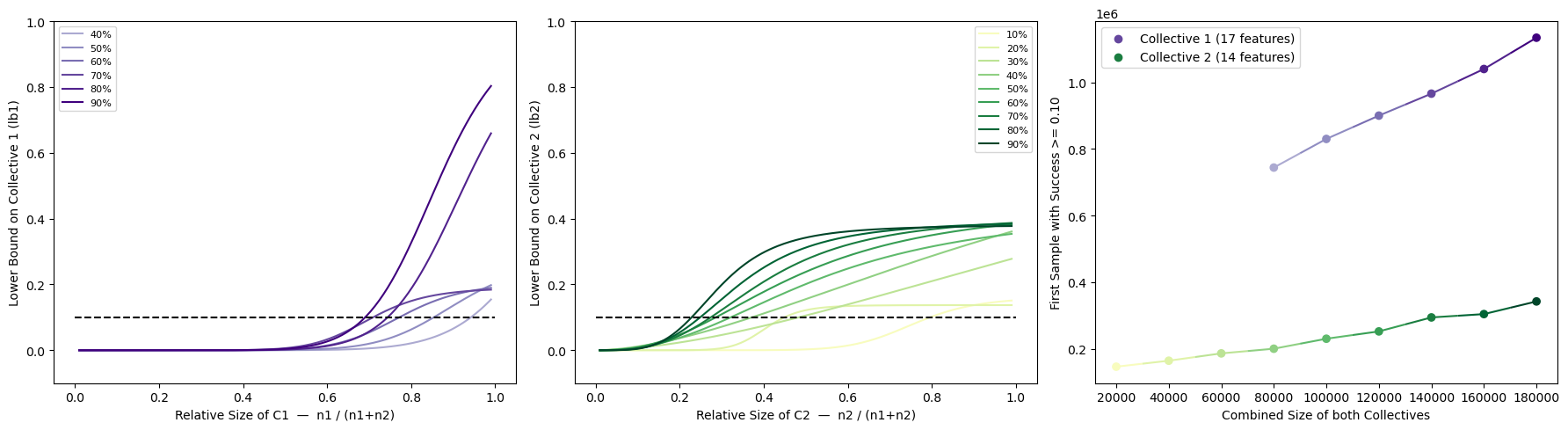}
    \caption{The bound on feature-only planting success for two collectives, 17 vs 14 features, under varying population sizes and collective sizes.}
    \label{fig:feature_analysis_fo_per-collective_17v14}
\end{figure*}

\begin{figure*}[ht!]{}
    \centering
    \includegraphics[height=1.5in,width=\linewidth]{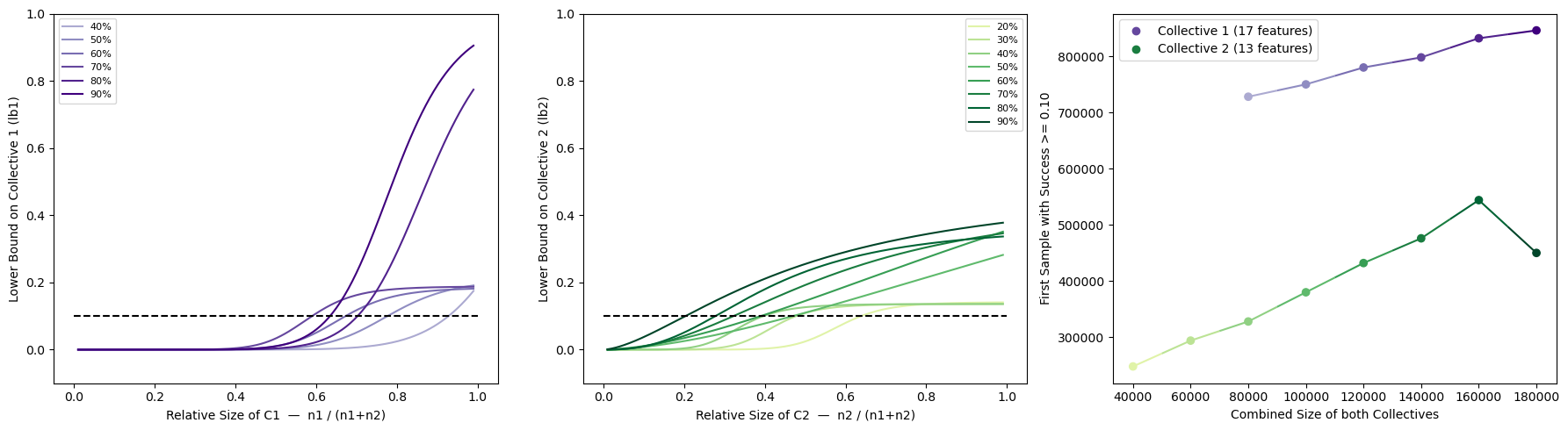}
    \caption{The bound on feature-only planting success for two collectives, 17 vs 13 features, under varying population sizes and collective sizes.}
    \label{fig:feature_analysis_fo_per-collective_17v13}
\end{figure*}

\newpage
\subsection{Single-Collective Ablations}

We reproduce the single-collective lower bound numbers of \citep{gauthier2025statistical}. We follow the original work settings closely in order to recreate the following experiments:
\begin{itemize}
    \item Feature-label planting with total dataset size 500K, 1 million and 2 millions. We explore a collective size varying from 1\% to 30\% like in the original paper. We report the results in figure \ref{fig:feature_label_ablation};
    \item Feature-only planting, with a fixed dataset size of 1 million and the label $y^*$ varying between \textit{Poor}, \textit{Average} and \textit{Excellent}, with collective size varying between 1\% and 80\%. We report the results in figure \ref{fig:feature_only_ablation};
    \item Signal unplanting for a dataset of size 2 millions, $n_e=2000$, and the ``naive'' strategy to unplant the label \textit{Good}. We report the results in figure \ref{fig:unplanting_ablation}.
\end{itemize}

We observe complete fidelity to the single-collective method numbers across the board, validating the method proposed here as a strict generalization to the multiple collective case of the studied approach.

\begin{figure*}[t!]{}
    \centering
    \includegraphics[height=1.5in]{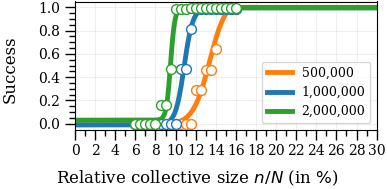}
    \caption{Re-creating lower bound numbers for the feature-label case from the Single-collective reference paper}
    \label{fig:feature_label_ablation}
\end{figure*}

\begin{figure*}[t!]{}
    \centering
    \includegraphics[height=1.5in]{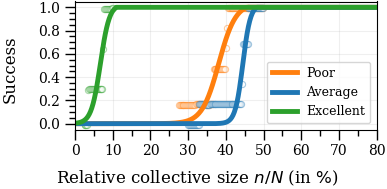}
    \caption{Re-creating lower bound numbers for the feature-only case from the Single-collective reference paper}
    \label{fig:feature_only_ablation}
\end{figure*}

\begin{figure*}[t!]{}
    \centering
    \includegraphics[height=1.5in]{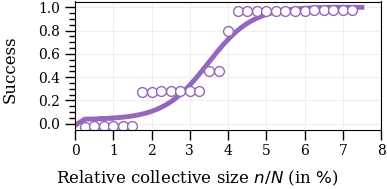}
    \caption{Re-creating lower bound numbers for the feature-only case from the Single-collective reference paper}
    \label{fig:unplanting_ablation}
\end{figure*}

%\newpage
%\input{checklist.tex}
\end{document}